\newlength\figureheight
\newlength\figurewidth
\renewcommand{\algocf@captiontext}[2]{#1\algocf@typo. \AlCapFnt{}#2} % text of caption
\def\@algocf@capt@plain{top}
\renewcommand{\algocf@makecaption}[2]{%
	\addtolength{\hsize}{\algomargin}%
	\sbox\@tempboxa{\algocf@captiontext{#1}{#2}}%
	\ifdim\wd\@tempboxa >\hsize%     % if caption is longer than a line
	\hskip .5\algomargin%
	\parbox[t]{\hsize}{\algocf@captiontext{#1}{#2}}% then caption is not centered
	\else%
	\global\@minipagefalse%
	\hbox to\hsize{\box\@tempboxa}% else caption is centered
	\fi%
	\addtolength{\hsize}{-\algomargin}%
}
\newcommand{\argmin}{\operatornamewithlimits{argmin\,}}
\newcommand{\E}{\mathbb{E}}
\mathchardef\mhyphen="2D
\def \t{\theta}
\def \E{\mathbb{E}}
\def \dt {\mathrm{d}}
\def \MD{\mathcal{D}}
\newtheorem{assumption}{Assumption}
\newtheorem{theorem}{Theorem}
\newtheorem{corollary}{Corollary}
\newtheorem{lemma}{Lemma}
\theoremstyle{remark}
\newtheorem{remark}{Remark}
\newcommand{\myitem}[1]{%
	\item[#1]\protected@edef\@currentlabel{#1}%
}
\begin{document}
	\def\spacingset#1{\renewcommand{\baselinestretch}%
		{#1}\small\normalsize} \spacingset{1}
	
%	\title{Synthetic likelihood in misspecified models}
	\title{Calibrated Generalized Bayesian Inference}
	\date{\empty}
%	\author{}
	\author[1]{David T. Frazier\thanks{Corresponding author:  david.frazier@monash.edu}}
	\author[2]{Christopher Drovandi}
	\author[3]{Robert Kohn}
	\affil[1]{Department of Econometrics and Business Statistics, Monash University, Clayton VIC 3800, Australia}
	\affil[2]{School of Mathematical Sciences, Queensland University of Technology, Brisbane 4000 Australia}
	\affil[3]{Australian School of Business, School of Economics, University of New South Wales, Sydney NSW 2052, Australia}
	
	\maketitle

%	\title{Reliable Bayesian Inference in Misspecified Models}
%	\author{David T. Frazier, Christopher Drovandi, and Robert Kohn}
%	\thanks{Frazier: Department of Econometrics and Business Statistics, Monash University, Clayton VIC 3800, Australia. Kohn: Australian School of Business, School of Economics, University of New South Wales, Sydney NSW 2052, Australia. Drovandi: School of Mathematical Sciences, Centre for Data Science, Queensland University of Technology, Brisbane QLD 4000, Australia. Corresponding author:  david.frazier@monash.edu. The authors thank Jeremias Knoblauch for helpful comments on an earlier draft.}
	%\affil[1]{Department of Econometrics and Business Statistics, Monash University, Clayton VIC 3800, Australia}
%\author[2]{Robert Kohn}
%\affil[1]{Department of Econometrics and Business Statistics, Monash University, Clayton VIC 3800, Australia}
%\affil[2]{Australian School of Business, School of Economics, University of New South Wales, Sydney NSW 2052, Australia}

\maketitle

\begin{abstract}
We propose a simple approach that provides accurate uncertainty quantification for Bayesian inference in misspecified or approximate models, and for generalized (Gibbs) posteriors. While existing solutions in this context are based on explicit Gaussian approximations or post-processing procedures, we demonstrate that correct uncertainty quantification can be achieved by substituting the usual posterior with an intuitively appealing alternative that conveys the same information. This solution applies to both likelihood-based and loss-based posteriors, and is formally demonstrated to reliably quantify uncertainty. This new approach is demonstrated through a range of examples, including generalized linear models, and doubly intractable models.
 
\vspace{.5cm}

\noindent \textsc{Keywords}: Generalized Bayesian inference, Gibbs posteriors, Model misspecification, Calibration.
\end{abstract}

	\spacingset{1.8} % DON'T change the spacing!

\section{Introduction}

Bayesian methods are lauded for their ability to tackle complicated models, deftly handle latent variables, and for providing a holistic treatment for the uncertainty of model unknowns. While it is well known that Bayesian methods deliver reliable inferences in well-specified models, when the model used to define the posterior is misspecified, the shortcomings of Bayesian posteriors are well known and much recent literature is devoted to correcting these issues; see \cite{nott2023bayesian} for a review. One popular strategy for delivering reliable Bayesian inference in misspecified models is to produce posterior beliefs using ``Gibbs'' posteriors, which updates prior beliefs for a parameter of interest using a general loss function; see, e.g., \citet{zhang2006information}, and \citet{bissiri2016general} for specific architecture, and \citet{syring2020robust}, \cite{matsubara2022robust}, \cite{jewson2021general}, and \cite{loaiza2021focused} for specific examples. 

{\color{black}Since loss functions have arbitrary scale, implementation of Gibbs posteriors requires specifying a learning rate that ``controls the relative weight of loss [...] to prior'' (\citealp{bissiri2016general}). However, for general choices of the learning rate, the probability statements made by Gibbs posteriors are inaccurate: their posterior probability statements \textit{do not have} the asserted coverage in repeated samples, and so are \textit{not calibrated}.} Calibration of Gibbs posteriors is crucial if one wishes to make scientific statements that can be reliably refuted through empirical analysis (\citealp{rubin1984bayesianly}); see, also, \cite{dawid1982well}, and \cite{little2006calibrated} for additional discussion on the importance of calibrated Bayesian inference.

{\color{black} \cite{syring2019calibrating} were the first to propose general methods aimed at tuning the learning rate in Gibbs posteriors to deliver calibrated credible sets. Their proposed approach, which is based on bootstrapping and can be computationally intensive, can be shown to deliver exact calibration for multivariate parameters under a generalized information equality. If the generalized information equality does not hold, exact calibration becomes infeasible but the procedure can be tuned to deliver conservative calibration. We refer to \cite{wu2020comparison} for further discussion, as well as a comparison on alternative ways of tuning the learning rate (e.g., \citealp{holmes2017assigning}, and \citealp{lyddon2019general}). Alternatively, \citet{muller2013risk} and \citet{giummole2019objective} suggest using explicit Gaussian posterior ``corrections'' to deliver calibrated inferences in misspecified models. These corrections amount to replacing the posterior directly by a certain Gaussian density, which may then deliver inaccurate results in small samples, if the original posterior is non-Gaussian, or if the second derivatives of the loss are ill-behaved.
}

%Several approaches have been suggested to solve this issue (see, e.g., \citealp{muller2013risk}, \citealp{holmes2017assigning}, \citealp{lyddon2019general}, \citealp{syring2019calibrating}, and \citealp{matsubara2022robust}, and see, \citealp{wu2020comparison} for a review). 

{\color{black}Herein, we leverage the general definition of Gibbs posteriors -- as a belief update defined by a loss function and learning rate combination --  to devise an approach that delivers calibrated inferences without requiring learning rate tuning, posterior corrections or bootstrapping. We show that, under regularity conditions, the proposed posterior's credible sets are automatically calibrated asymptotically when the learning rate is set to unity, and we refer to this approach as the asymptotically calibrated posterior (ACP)}. {\color{black}When the loss defining the unknown parameter of interest is sufficiently smooth in expectation, the ACP allows} the ``statistician to be Bayesian in principle and calibrated to the real world in practice'' (\citealp{rubin1984bayesianly})%\footnote{maybe "and frequentist in practice"?; I often find the phrase "real world" to be wierd because the frequentist world is just a model, not the real "real world" \dtf{This is direct quote from Ruben and so can't be changed.}}. 

{\color{black}While the ACP delivers calibrated inferences, it is not coherent in the sense of \citet{bissiri2016general}. This is unsurprising since many of the losses considered for Gibbs posteriors are themselves not coherent, and since all existing methods that seek to deliver calibrated inferences in Gibbs posteriors are not coherent.}

The remainder of the paper proceeds as follows. Section \ref{sec:misspec} discusses the general issue of model misspecification in likelihood-based Bayesian inference, and  Section \ref{sec:cov} demonstrates how a particular Gibbs posterior approach overcomes the known issues with Bayesian inference in this setting. This section also includes two preliminary examples demonstrating the behavior of the proposed method. Section \ref{sec:examp} examines the empirical performance of the proposed approach in exact and generalized Bayesian inference paradigms, including two examples of doubly intractable models, and a situation where a computationally convenient approximating model is used instead of the actual model. In each example, our proposed approach delivers reliable and correctly calibrated Bayesian inferences.  Section \ref{sec:theory} proves that the proposed approach correctly quantifies uncertainty. Section \ref{sec:discuss} compares the new approach with existing approaches that attempt to produce reliable Bayesian inference in these settings and Section \ref{sec:conclusions} concludes. Online supplementary material contains proofs of all stated results.

\section{Bayesian Inference in Misspecified Models}\label{sec:misspec}
\subsection{Setup and Known Issues}
The observed data is $y:=(y_1,\dots,y_n)^\top$, where $y_i\in\mathcal{Y}\subseteq\mathbb{R}^{d_y}$ ($i=1,\dots,n$), is generated from some true unknown probability distribution $P^{(n)}_0$. Generally, in Bayesian inference one approximates the unknown $P^{(n)}_0$ using a class  of models $\{P^{(n)}_\theta:\theta\in\Theta\}$ depending on the unknown parameter $\theta\in\Theta\subseteq\mathbb{R}^{d_\theta}$, for which we have prior beliefs $\pi(\theta)$, and where the number of parameters $d_\theta$ is fixed with $n$.

However, when the model $P^{(n)}_\theta$ is misspecified, it is well-known that uncertainty quantification using the standard Bayesian posterior is unreliable (\citealp{kleijn2012bernstein}), so that the very parameters we are conducting inference on may be ``meaningless -- except perhaps as descriptive statistics'' (\citealp{freedman2006so}). When $P^{(n)}_\theta$ is possibly misspecified, it may instead make sense to eschew the likelihood and consider classes of loss functions that are specific for the inferential task at hand, {\color{black}e.g., such as loss functions that are robust to data contamination (see Section \ref{sec:ksd} for specific examples).} 

In such cases, if one is willing to take a more general approach to updating prior beliefs, we can follow the ideas of \cite{zhang2006information}, \cite{bissiri2016general}, and \cite{knoblauch2022optimization} {\color{black}and produce a \textit{Gibbs   posterior} for $\theta$.} Gibbs posteriors depend on a user-chosen loss function $\mathcal{D}_n:\Theta\rightarrow\mathbb{R}_{}$, associated with the sample $y$, and provide a belief update for $\theta$ by solving the variational optimization problem
\begin{flalign}\label{eq:gibbs}
\pi(\theta\mid \mathcal{D}_n):=\argmin_{\rho\in\mathcal{P}(\Theta)}\left\{\omega \int_{\Theta}\mathcal{D}_n(\theta)\rho(\theta)\dt\theta+\mathcal{K}(\rho\|\pi)\right\}.
\end{flalign}{\color{black}In equation \eqref{eq:gibbs}, $\mathcal{P}(\Theta)$ denotes a set of probability measures over $\Theta$, $\mathcal{K}(\rho\|\pi)$ denote a divergence between $\rho(\theta)$ and the prior $\pi(\theta)$. The parameter $\omega\ge0$, often called the learning rate, plays a fundamental role in the definition of $\pi(\theta\mid\mathcal{D}_n)$ as it scales the loss  $\mathcal{D}_n(\theta)$ relative to the prior $\pi(\theta)$ and determines the relative weight each is given in the belief update. In this way, the selection of a reasonable value for $\omega$ is necessary and crucial since the loss and prior can be on entirely different scales. 
%{The weight $\omega$ serves to balance the importance of minimizing $ \int_{\Theta}\mathcal{D}_n(\theta)\rho(\theta)\dt\theta$ relative to matching the prior in KL-divergence, and so plays an intimate role in Gibbs posteriors.} 
} 

{{\color{black}When the loss is integrable with respect to the prior $\pi(\theta)$, and when we set $\mathcal{K}(\rho\|\pi)=\text{KL}(\rho\|\pi)$ -- the Kullback--Leibler  divergence -- the unique solution to \eqref{eq:gibbs} takes the recognizable form:}
\begin{flalign*}
\pi(\theta\mid \mathcal{D}_n):=\frac{ \pi(\theta)\exp\{-\omega\cdot \mathcal{D}_{n}(\theta)\}}{\int_\Theta \pi(\theta)\exp\{-\omega\cdot \mathcal{D}_{n}(\theta)\}\dt\theta},
\end{flalign*}
{\color{black}	where the notation $\pi(\t\mid \mathcal{D}_n)$ encodes the posterior's dependence on the chosen loss $\mathcal{D}_n(\theta)$. This particular formulation of $\pi(\theta\mid\mathcal{D}_n)$ clarifies the critical role played by $\omega$: since the scale of the loss $\mathcal{D}_n(\theta)$ is arbitrary, $\omega$ is required to ensure that $\exp\{-\omega\cdot\mathcal{D}_n(\theta)\}$ plays the role of a likelihood in a standard Bayesian posterior formulation.}
\begin{remark}
Standard Bayesian inference is recovered from \eqref{eq:gibbs} by taking $\omega=1$ and setting  $\mathcal{D}_n(\theta)=-\log p_\theta^{(n)}(y)$, where $p_\theta^{(n)}(y)$ is the assumed model density. Further, if the loss is additive, so that $\mathcal{D}_n(\theta)=\sum_{i=1}^{n}d(y_i,\theta)$ for some known loss function $d:\mathcal{Y}\times\Theta\rightarrow\mathbb{R}_+$,  we obtain the generalized posterior of \cite{bissiri2016general}. The measure $\pi(\theta\mid \mathcal{D}_n)$ is often referred to as a Gibbs posterior, and sometimes a generalized posterior.
\end{remark}

{\color{black}Gibbs posteriors implicitly conduct inference on the value of $\theta$ that minimizes the limiting expected loss: $\theta_\star:=\argmin_{\theta\in\Theta}\lim_n\E_{}n^{-1}\mathcal{D}_n(\theta)$,  where $\E(\cdot)$ denotes expectation under $P^{(n)}_0$.} However, it is known that the posterior  $\pi(\theta\mid \mathcal{D}_n)$ does not  deliver inference on $\theta_\star$ that is calibrated: {\color{black} a credible set for $\theta_\star$ based on $\pi(\theta\mid \mathcal{D}_n)$ and having posterior probability $(1-\alpha)$  is \textit{calibrated} if the credible set asymptotically contains $\theta_\star$ with $P^{(n)}_0$ - probability at least $(1-\alpha)$.
	}

 {\color{black}While a Gibbs posterior will often satisfy a Bernstein-Von Mises (BvM) theorem, the asymptotic covariance matrix of the Gibbs posterior will not deliver calibrated inferences in general; see \citet{kleijn2012bernstein} and \cite{miller2021asymptotic} for a discussion on BvMs and their required regularity conditions. In large samples, the posterior $\pi(\theta\mid\MD_n)$ behaves like a Gaussian density with mean $\theta_\star$, and variance $[n\omega\mathcal{H}(\theta_\star)]^{-1}$, where $\mathcal{H}_n(\theta):=n^{-1}\nabla_{\theta\theta}^2 \mathcal{D}_n(\theta)$ denotes the  Hessian of $n^{-1}\mathcal{D}_n(\theta)$, and $\mathcal{H}(\theta):=\lim_n\E\{\mathcal{H}_n(\theta)\}$ its limit. Conversely, the posterior mean of $\pi(\theta\mid\MD_n)$ converges towards $\theta_\star$ but has a ``sandwich form'' asymptotic variance: $\Sigma_\star:=\mathcal{H}(\theta_\star)^{-1}\mathcal{I}(\theta_\star)\mathcal{H}(\theta_\star)^{-1}/n$, where $\mathcal{I}(\theta)=\lim_n\text{Cov}\left[n^{-1/2}\mathcal{D}_n(\theta)\right]$. Because the covariance matrix appearing in the BvM for $\pi(\theta\mid \mathcal{D}_n)$ is not the sandwich covariance matrix, its $(1-\alpha)$ credible set will not contain $\theta_\star$ with $P^{(n)}_0$ - probability $(1-\alpha)$ in general.}

Thus, posterior credible sets built from Gibbs posteriors are not calibrated in general, and Bayesian uncertainty quantification may not be reliable in many empirical applications.  The lack of calibrated Bayesian inference in misspecified model  has led researchers to consider approaches that `correct' this issue. However, the suggested approaches of which we are aware are either based on computationally onerous bootstrapping procedures, which must bootstrap the entire posterior distribution multiple times and necessitates running many MCMC samplers - one for each bootstrap replication of the data - as well as ensuring that each MCMC sampler converges to the appropriate distribution, such as in \cite{huggins2019robust}, or \cite{matsubara2022robust}. 

{\color{black}Alternatively,  one could apply ex-post corrections to $\pi(\theta\mid \mathcal{D}_n)$. These approaches substitute the posterior $\pi(\theta\mid \mathcal{D}_n)$ for a Gaussian density with mean $\bar\theta=\int \theta\pi(\theta\mid \mathcal{D}_n)\dt\theta$ and covariance $\widehat\Sigma_n/n$, where $\widehat\Sigma_n=[\mathcal{H}_n(\bar\theta)W_n(\bar\theta)^{-1}\mathcal{H}_{n}(\bar\theta)]^{-1}$ is an estimator of $\Sigma_\star$. These ex-post correction methods amount to assuming that the posterior  is exactly Gaussian, rather than the much weaker requirement that it converges to a Gaussian distribution in the limit.}

\subsection{Automatically Calibrated  Bayesian Uncertainty}\label{sec:cov}
%\textbf{Make entire section blue. This whole thing has been revised...}

{\color{black}While Gibbs posteriors lack calibration, this does not mean that we should abandon this principled way of updating prior beliefs. Instead, a clear solution to this problem is to develop a Gibbs posterior that satisfies the following two properties.
\begin{itemize}
\myitem{\textbf{P.1}} The posterior  targets the same population loss minimizer as $\pi(\theta\mid \mathcal{D}_n)$, i.e., $\theta_\star$.\label{item:P1}
\myitem{\textbf{P.2}} The posterior reliably quantifies uncertainty, at least asymptotically. \label{item:P2}
\end{itemize}

\noindent Herein, we show that a posterior satisfying  \ref{item:P1}-\ref{item:P2} can be constructed by using the machinery of Gibbs posteriors through a specific \textit{loss and learning rate combination}. If we use the KL divergence in \eqref{eq:gibbs} and replace the original loss $\mathcal{D}_n(\theta)$ in \eqref{eq:gibbs} with 
$$
\mathcal{Q}_n(\theta):=\frac{1}{2}\log|W_n(\theta)|+n\cdot Q_n(\theta),\;\text{ where }Q_n(\theta):=\frac{1}{2}\cdot{\overline{m}_n(\theta)^\top}{}W_{n}(\theta)^{-1}{\overline{m}_n(\theta)}{},
$$ {\color{black}$\overline{m}_n(\theta)=\nabla_\theta \MD_n(\theta)/n$ and $W_n(\theta)$ is a $(d_\theta\times d_\theta)$--dimensional covariance matrix estimator of $\mathcal{I}(\theta)$ -- the covariance of $\overline{m}_n(\theta)$ -- we obtain the Gibbs posterior:} 
\begin{equation}\label{eq:genpost}
	\pi(\theta\mid \mathcal{Q}_n):=\frac{M_n(\theta)^{-\frac{\omega}{2}}\exp\{- \omega\cdot n\cdot Q_n(\theta)\}\pi(\theta)}{\int_\Theta M_n(\theta)^{-\frac{\omega}{2}}\exp\{- \omega\cdot n\cdot Q_n(\theta)\}\pi(\theta)\dt\theta},\quad  M_n(\theta)=|W_{n}(\theta)|.
\end{equation}

In contrast to most Gibbs posteriors, the term $M_n(\theta)^{-\frac{\omega}{2}}\exp\{-\omega\cdot n\cdot{Q}_n(\theta)\}$ in $\pi(\theta\mid\mathcal{Q}_n)$ resembles, and behaves like, a multivariate Gaussian likelihood with variance $W_n(\theta)/n$. As a consequence, when using the loss $\mathcal{Q}_n(\theta)$ to construct a Gibbs posterior, the \textit{natural default choice for the learning rate is to take $\omega=1$}. Such an automatic choice for the learning rate $\omega$ is generally absent for Gibbs posteriors, at least outside of scalar settings. In addition, since $Q_n(\theta)$ is a quadratic function in  $\overline{m}_n(\theta)$, the posterior $\pi(\theta\mid\mathcal{Q}_n)$ will concentrate onto the population risk minimizer $\theta_\star$ and satisfy \ref{item:P1}. Thus, if $\pi(\theta\mid\mathcal{Q}_n)$ reliably quantifies uncertainty, then Gibbs posteriors can satisfy points \ref{item:P1}-\ref{item:P2} through a particular loss and learning rate combination.

Section \ref{sec:theory} rigorously shows that, if the risk minimizer is unique, $\pi(\theta\mid \mathcal{Q}_n)$ automatically satisfies \ref{item:P2} when $W_n(\theta)$ is a consistent estimator of $\mathcal{I}(\theta)=\lim_n\text{Cov}\{\sqrt{n}\overline{m}_n(\theta)\}$. In this case, $(1-\alpha)$ credible sets calculated from $\pi(\theta\mid \mathcal{Q}_n)$ have $P^{(n)}_0$- probability $(1-\alpha)$ as desired (under regularity conditions).

Traditional Gibbs posteriors must tune the learning rate $\omega$ to deliver reliable credible sets. Our approach differs fundamentally: the posterior $\pi(\theta\mid \mathcal{Q}_n)$ asymptotically delivers calibrated credible sets under a default choice for $\omega$, without any tuning. Throughout the remainder of the paper we refer to $\pi(\theta\mid \mathcal{Q}_n)$ as the 
\textit{asymptotically calibrated posterior} (ACP) to differentiate it from other Gibbs posterior formulations. The large sample calibration of $\pi(\theta\mid\mathcal{Q}_n)$ can be established heuristically using a second-order expansion of $Q_n(\theta)$ around $\theta_n$; however, due to space restrictions we give such a discussion in Appendix \ref{app:soe}, where we also give two concrete examples that illustrate the general structure of $\pi(\theta\mid \mathcal{Q}_n)$.

%Calibration of the ACP can be  explained intuitively by the fact that in large samples $\pi(\theta\mid\mathcal{Q}_n)$ behaves like a Gaussian kernel centered at the loss minimizer $\theta_n$, the solution to $0=\overline{m}_n(\theta)$, and whose covariance is automatically tuned to $\omega\cdot\Sigma_\star^{-1}/n$ (see Theorem \ref{thm:two} for a rigorous statement). 

%  The posterior  $\pi(\theta\mid \mathcal{Q}_{n})$ is based on a weighted random quadratic-form in the ``scores equations'' $\overline{m}_n(\t)$. Throughout the remainder we therefore refer to $\pi(\theta\mid \mathcal{Q}_n)$ as the ACP. We also clarify that while $Q_n(\theta)$ resembles a quadratic approximation, this does not imply that $\pi(\theta\mid \mathcal{Q}_n)$ resembles a Gaussian distribution.  More generally, $\pi(\theta\mid \mathcal{Q}_n)$ remains meaningful when the parameters are defined over a restricted space, such as when the posterior places mass near the boundary of its support, or when the original posterior is multi-modal. 
  
  {\color{black} While the ACP is formally defined via the variational optimization problem in \eqref{eq:gibbs}, with $\mathcal{D}_n(\theta)=\mathcal{Q}_n(\theta)$, its equivalent representation in \eqref{eq:genpost} ensures that standard MCMC methods can be used to produce samples from $\pi(\theta\mid\mathcal{Q}_n)$. Throughout the numerical experiments, we use adaptive MCMC or Hamiltonian Monte Carlo (HMC), implemented via {RSTAN}, to produce samples from the desired target.}

 {\color{black}
 \begin{remark}\label{rem:variance_estimator}
As discussed above, for $\pi(\theta\mid \mathcal{Q}_n)$ to correctly quantify uncertainty asymptotically, $W_n(\theta)$ must be a consistent estimator of $\mathcal{I}(\theta)=\lim_n\text{Cov}\{\sqrt{n}\overline{m}_n(\theta)\}$. {\color{black}Critically, specifying  $W_n(\theta)$ requires no inference on additional parameters. Further, reliable inference can be achieved by setting $\omega=1$ and taking  $W_n(\theta)$ to be the sample variance of $\overline{m}_n(\theta)$: when $\overline{m}_n(\theta)=n^{-1}\sum_{i=1}^{n}m_i(\theta)$, for some known functions $m_i(\theta)$,
\begin{equation}
{W}_n(\theta)=n^{-1}\sum_{i=1}^{n}\left\{m_i(\theta)-\overline{m}_n(\theta)\right\}\left\{m_i(\theta)-\overline{m}_n(\theta)\right\}^\top.
\label{eq:sampvar}	
\end{equation}If the observed data is dependent then the simple sample covariance may be inappropriate and a sample covariance estimator that allows for dependence should instead be used.}
 \end{remark}	
 	
}

{\color{black}
\begin{remark}
	Our restriction to fixed $d_\theta$ rules out cases where $d_\theta$ is allowed to increase as the sample size $n$ increases and limits our analysis. However, due to their generality and lack of an underlying probabilistic modeling structure, existing applications of Gibbs posteriors are often limited to cases where $d_\theta$ is not too large. Given this, we leave the extension of our analysis to cases where $d_\theta$ can increase with $n$ for future research. However, in the following experiments the ACP is applied in problems where $d_\theta$ is up to twenty using simple sample variance estimators and delivers reliable results with reasonable computing times.
\end{remark}
}

{\color{black}
	\begin{remark}
		The form of $\pi(\theta\mid \mathcal{Q}_n)$ in \eqref{eq:genpost} resembles certain quasi-Bayesian posteriors that are constructed from exponentiated quadratic forms in sample moments; see, e.g., \cite{chernozhukov2003mcmc}, \cite{yin2009} and \cite{chib2018bayesian} for examples. In these approaches, a vector of correctly specified and over-identified sample moments or estimating equations is combined with a weighting matrix to form a quadratic form, which is then exponentiated and used as a kernel in MCMC. In contrast, the ACP is not built from a particular class of estimating equations, but through a transformation of a loss function that encodes the quantity of interest and explicitly acknowledges model misspecification. The ACP targets this quantity of interest through a general variational principle: it is the optimal solution to a variational optimization problem that depends on a specific loss, learning rate and divergence used in its definition. The generality of this framework means that the ACP can be directly applied to the wide-variety of loss functions considered in generalized Bayesian inference; see Section \ref{sec:discuss} for a more in-depth comparison. 
	\end{remark}
} %We also show in Section \ref{sec:conjugacy} of the supplementary material that the ACP has a closed form when the assumed model is in the exponential family.
} 
 
\section{Examples}\label{sec:examp}
{\color{black} Before formally demonstrating that the ACP correctly quantifies uncertainty, we present several examples which empirically illustrate the methods reliable uncertainty quantification across different choices of loss functions, and assumed models.
}
%We now apply the ACP approach to several examples that are considered in the literature on standard and generalized Bayesian inference, including two examples of doubly intractable distributions. 

\subsection{Linear Regression}\label{sec:reg}

	%To demonstrate the benefits of our generalized approach to Bayesian inference over standard Bayesian inference, we demonstrate the conclusions of Lemmas \ref{lem:two}-\ref{lem:three} in a simple example. 
	Consider the standard linear regression model
	$$
	y_{i}=x_i^\top\beta+\epsilon_i,\;(i=1,\dots,n),
	$$where the error distribution is assumed to be $\epsilon_i\stackrel{iid}{\sim} N(0,\sigma^2)$, with $\sigma>0$ unknown, and $x_i$ and $\beta$ are $3\times1$-dimensional vectors, where $x_{i,1}=1$ for all $i=1,\dots,n$. For simplicity, we assign flat priors to $\beta$ and $\sigma$. While the mean component of the regression is correctly specified, the true error term is given by
	$
	\epsilon_i\stackrel{iid}{\sim}N(0, 1/3  +  1/3 \cdot |x_{2,i}|^\gamma + 1/3 \cdot |x_{3,i}|^\gamma),
	$ where $\gamma\ge0$, and $x_{j,i}$ denotes the $j$th element of the vector $x_i$. When $\gamma=0$ the assumed model with homoskedastic errors is correct, whereas if $\gamma\ne0$, the assumed model is misspecified.  Under correct specification, i.e.\ $\gamma = 0$, the true value of the standard deviation is $\sigma = 1$. 
	
The example compares the accuracy of the standard Bayes posterior and ACP in correctly specified and misspecified regimes. As a competitor we also consider {\color{black} an oracle method} that correctly models the form of the heteroskedasticity parametrically. In particular, we consider posterior inference based on the assumed model 
\begin{equation}\label{eq:bayes_lin} 
	y_{i}=x_i^\top\beta+\epsilon_i,\quad 
	\epsilon_i\stackrel{\mathrm{iid}}{\sim}N\left(0,\xi_1+\xi_2|x_{2,i}|^\gamma+\xi_3|x_{i,3}|^\gamma\right),\quad (i=1,\dots,n),
\end{equation}
where $\xi_1,\xi_2,\xi_3 \geq 0 $ are unknown coefficients, each with flat priors, and $\gamma$ is known. {\color{black}Such an approach is infeasible in practice since the form of heteroskedasticity is unknown}; however, since the form is known in this context, we consider the posterior for $\beta$ obtained from the model in \eqref{eq:bayes_lin} {\color{black}as an oracle benchmark.}  This  heteroskedastic robust Bayes (HrB) approach assumes that $\gamma$ is set to its true value.

We generate $n=100$ observations from the model under $\gamma=0$, and $\gamma=2$;  $x_{1,i}=1$ for all $i$, so that $\beta_1$ is the intercept, and $(x_{2,i},x_{3,i})^\top$ is generated for each  replication as bivariate independent standard Gaussian, and we set the true value of $\beta$ to be $\beta =(1,1,1)^\top$. When the model is misspecified in this manner,  the pseudo-true remains equal to the true value of $\beta$, while the pseudo-true value of $\sigma$ is no longer unity, and so we focus in this example on inferences for $\beta$. We replicate this design 1000 times to create 1000 observed datasets of size $n=100$. For each dataset we sample the Bayes posterior, ACP, and HrB-posterior using random walk Metropolis-Hastings (RWMH) with a Gaussian proposal kernel using the \texttt{AdaptiveMCMC.jl} library \citep{vihola2014ergonomic} in \texttt{Julia}.  All posteriors are approximated using 20000 samples with an initial 1000 iterations for burn-in. {\color{black}We also implement the Gaussian posterior correction method of \citet{muller2013risk} (referred to as PostCorr)} with a sandwich covariance matrix based on the heteroskedastic-consistent covariance estimator of \citet{white1980heteroskedasticity}. For each replication, the matrix $W_n(\theta)$ in the ACP is the sample variance of the score equations (based on the data in that replication); {\color{black}see appendix \ref{sec:gml_app} for further details on the specification of this matrix.} Across each dataset and method, the posterior bias, variance and marginal coverage for the regression coefficients are compared. 

Table \ref{tab:reg} summarizes the results and shows that the ACP is similarly located to standard Bayes, but has larger posterior variances under both regimes. In the homoskedastic regime ($\gamma=0$), the Bayes posterior and the ACP both have approximately 95\% coverage. In the heteroskedastic regime ($\gamma=2$), the coverage of standard Bayes is further away from the nominal level than the ACP, with the lowest coverage level around 87\%. 

The table also shows that when $\gamma = 2$  the coverage of the HrB-posterior (HrB), PostCorr and the ACP are similar, with the HrB-posterior having tighter credible intervals  because it correctly models the heteroskedasticity. Critically, however, the ACP obtains similar results to the infeasible HrB-posterior without modeling the heteroskedastic variance. This feature is extremely useful in practice since the only way to reliably model heteroskedasticity is to use nonparametric Bayesian methods, which introduces 
%\footnote{\begin{itemize}
%    \item "injects" to "introduces"
%    \item as an aside, I think that claiming that it is necessary to us nonparametrics is too strong. Gary King would tell you that a simple model for the heteroscedasticity will do. }
%
%\end{itemize}} 
a substantial level of complexity into an otherwise simple inference problem. {\color{black}The PostCorr method produces reasonably accurate results under misspecification, which is not surprising as the sandwich matrix we use in this example is specifically designed for heteroskedastic linear regression models, but does deliver some under-coverage. In the Poisson regression example in the next subsection, the posterior correction also produces credible sets that display under-coverage, demonstrating that posterior corrections may be inaccurate even in regression models.}
	
	\begin{table}[H]
\hspace*{-1.25cm} 			
		\centering
		{\footnotesize
			\begin{tabular}{rrrrrrrrrrrrr}
				\hline\hline
				 & \multicolumn{3}{c}{ACP} & \multicolumn{3}{c}{SB} & \multicolumn{3}{c}{HrB} &  \multicolumn{3}{c}{PostCorr} \\\hline\hline
			$\beta_1$ & $-0$.0022 &  0.0109 &  0.956 &  $-0$.0022 &  0.0105 &  0.959 &  $-0$.002 &   0.0110 &   0.957 &  $-0$.0021 &  0.0099 &  0.951   \\
			$\beta_2$ & 0.0041 &  0.0123 &  0.951 &   0.004 &   0.0107 &  0.945 &   0.004 &  0.0112 &  0.948 &   0.0040 &   0.0098 &  0.927   \\
			$\beta_3$ & 0.0018 &  0.0125 &  0.958 &   0.0019 &  0.0108 &  0.946 &   0.002 &  0.0113 &  0.951 &  0.0019 &  0.0100   & 0.932  \\\hline
		%	 & \multicolumn{3}{c}{Q-posterior} & \multicolumn{3}{c}{Exact Bayes} & \multicolumn{3}{c}{HrB$_{\gamma=2}$}&\multicolumn{3}{c}{HrB$_{\gamma=1}$}\\
			\textbf{$\gamma=2$} & Bias & Var &Cover& Bias & Var &Cover&Bias & Var &Cover & Bias & Var &Cover \\ \hline
			$\beta_1$ & 	$-0$.0017 &  0.011 &  0.966 &  $-0$.0017 &  0.0103 &  0.957 &  $$-0$$.0007 &  0.008 &   0.953 &  $-0$.0017 &  0.0096 &  0.947  \\
			 $\beta_2$ &	0.0051 & 0.020 &   0.963 &   0.0049 &  0.0104 &  0.871 &   0.0046 &  0.0139 &  0.945 &   0.0049 &  0.0153 &  0.913  \\
			$\beta_3$ &	0.0032 & 0.020 &   0.970 &    0.0036 &  0.0105 &  0.872 &   0.0012 &  0.0141 &  0.954  &  0.0036 &  0.0154 & 0.928 \\\hline\hline
		\end{tabular}}
		 \captionsetup{width=1.1\linewidth}
		\caption{Results in the normal linear regression model for standard Bayes (SB), HrB-posterior (HrB), ACP and the Gaussian posterior correction (PostCorr). Bias is the bias of the posterior mean across the replications. Var is the average posterior variance across the replications. Cover is the actual posterior coverage, where the nominal level is set to 95\% for the experiments. {\color{black} The true value of the coefficients is $\beta=(1,1,1)^\top$.}}
		\label{tab:reg}
	\end{table}

We also explore how the ACP scales to higher dimensional problems by considering the same data generating process (DGP) with $d=20$ regressors and $n=1000$. In the DGP, the regressors $x_4$--$x_{20}$ do not appear in the true model for the variance, and the regression coefficients corresponding to $x_4$--$x_{20}$ are also set to zero.  We consider the correctly specified ($\gamma = 0$) and the misspecified ($\gamma = 2$) cases.  Here, the HrB model is defined as 
\begin{equation*}%\label{eq:bayes_lin_gen} 
	y_{i}=x_i^\top\beta+\epsilon_i,\quad 
	\epsilon_i\stackrel{\mathrm{iid}}{\sim}N\left(0,   \sum_{k=1}^{20} \xi_k |x_{k,i}|^\gamma \right),\quad (i=1,\dots,n).
\end{equation*}
As the model is higher dimensional, we use 50000 MCMC iterations after 1000 burn-in iterations, and repeat the analysis for 100 independent datasets.  The results are shown in Table \ref{tab:reg_d20_all} for three specific covariates. Results for all covariates are given in Appendix \ref{app:sup_results_reg} of the supplementary material.  In the correctly specified case, the ACP delivers inferences that are close to the exact, and HrB produces average variances that are significantly larger.  The inflated variances can be attributed to having to estimate twice the number of parameters where many of the true $\xi$ parameter values are 0.  {\color{black}For the misspecified model, exact inference results in some under-coverage for $\beta_2$ and $\beta_3$, whereas the ACP is more accurate.  The results for the ACP are similar to the Bayes posterior for other parameters.  Again, the average posterior variances are larger for HrB. For this larger sample size, the PostCorr method seems to produce more reliable credible sets.}
	\begin{table}[H]
	\hspace*{-1.25cm} 			
	\centering
	{\footnotesize
		\begin{tabular}{rrrrrrrrrrrrr}
			\hline\hline
			& \multicolumn{3}{c}{ACP} & \multicolumn{3}{c}{SB} & \multicolumn{3}{c}{HrB}& \multicolumn{3}{c}{PostCorr}\\\hline\hline
			\textbf{$\gamma=0$} & Bias & Var &Cover& Bias & Var &Cover&Bias & Var &Cover&Bias & Var &Cover\\ \hline
$\beta_2$ & $-0$.0025 &  0.0011 &  0.97 &  $-0$.0024 &  0.0010 &  0.97 &  $-0$.0029 &  0.0014 &  0.98 &  $-0$.0025 &  0.0010 &  0.97 \\
$\beta_3$ & 0.0057 &  0.0010 &  0.94 &  0.0056 &  0.0010 &  0.94 &  0.0049 &  0.0015 &  0.94 &  0.0056 &  0.0010 & 0.93 \\
$\beta_{20}$ & 0.0032&  0.0010&  0.97&  0.0032&  0.0010&  0.97&  0.003&  0.0014&  0.97&  0.0032&  0.0010&  0.97 \\ \hline \hline
			%	 & \multicolumn{3}{c}{Q-posterior} & \multicolumn{3}{c}{standard Bayes} & \multicolumn{3}{c}{HrB$_{\gamma=2}$}&\multicolumn{3}{c}{HrB$_{\gamma=1}$}\\
			\textbf{$\gamma=2$} & Bias & Var &Cover& Bias & Var &Cover&Bias & Var &Cover&Bias & Var &Cover  \\ \hline
$\beta_2$ & $-0$.0026&  0.0017&  0.98&  $-0$.0028&  0.0010&  0.91&  $-0$.0028&  0.0019&  0.94&  $-0$.0028&  0.0016&  0.97 \\
			$\beta_3$ & 0.0070&  0.0017&  0.94&  0.0071&  0.0010&  0.87&  0.0056&  0.0018&  0.90&  0.0071&  0.0017&  0.94 \\
$\beta_{20}$ &  0.0037&  0.0010&  0.96&  0.0039&  0.0010&  0.96&  0.0019&  0.0014&  0.97&  0.0039&  0.0010&  0.97 \\ \hline \hline
	\end{tabular}}
	\captionsetup{width=1.1\linewidth}
	\caption{Results in the higher dimensional normal linear regression model for standard Bayes, HrB-posterior, ACP and the Gaussian posterior correction (PostCorr). {\color{black} The true value of the coefficients is $\beta=(1,1,1,0,\dots,0)^\top$.} {\color{black} See Table \ref{tab:reg} for a detailed description of the table entries.} %Bias is the bias of the posterior mean across the replications. Var is the average posterior variance deviation across the replications. Cover is the actual posterior coverage, where the nominal level is set to 95\% for the experiments.
    }
	\label{tab:reg_d20_all} 
\end{table}

\subsection{Poisson Regression}\label{sec:poiss}
For $i=1,\dots,n$, we observe count response data $y_i\in\mathbb{N}$, and covariates  $x_i$, a $d_\theta\times1$-dimensional vector with $x_{i,1}=1$, and our goal is inference on the unknown regression parameter $\theta$ in the generalized linear model (GLM):
\begin{equation*}
	\E(y_i\mid x_i)=\mu_i=g^{-1}(x_i^\top\theta),\quad \text{var}(y_i\mid x_i)=V(\mu_i;\psi),
\end{equation*}where $g(\cdot)$ is a strictly monotone and differentiable link function, and $V(\cdot)$ is a positive and continuous variance function  with dispersion parameter $\psi$. We again use a flat prior for $\theta$, and we treat $\psi$ as a hyperparameter. A common choice for observed counts is the Poisson distribution with link function $g^{}(\cdot)=\exp(\cdot)$.% When over-dispersion is likely to be present, the negative binomial distribution can instead be used. 

A useful alternative to conducting standard Bayesian inference in this setting, is to instead consider generalized Bayes using the quasi-likelihood of \cite{wedderburn1974quasi} for the Poisson model with variance function $V(\mu_i;\psi)=\psi\mu_i$. Such an approach is equivalent to implementing \eqref{eq:gibbs} under the loss function:
 $
\mathcal{D}_n(\theta)=-\psi^{-1}\sum_{i=1}^{n}\left\{y_i\log(\mu_i)-\mu_i\right\},
$ with $\psi>0$ attempting to account for over-dispersion. Applying this choice within $\mathcal{D}_n(\theta)$, and taking $\omega=1$, produces the Gibbs posterior
\begin{equation*}\label{eq:gibbspois}
	\pi(\theta\mid \mathcal{D}_n,\psi)\propto \pi(\theta)\exp\left[\psi^{-1}\sum_{i=1}^{n}\left\{y_ix_i^\top\theta-\exp(x_i^\top\theta)\right\}\right].	
\end{equation*}
This approach was suggested in several studies, such as \cite{ventura2016pseudo}, and was shown by \cite{agnoletto2023bayesian} to produce asymptotically correct calibration levels when the true variance function takes the form $ \text{var}(y_i\mid x_i)=V(\mu_i;\psi)$. Following \cite{agnoletto2023bayesian}, estimation of $\psi$ is obtained by first fitting an overdispersed Poisson GLM to obtain the point estimator $\hat\psi$, with MCMC then used to sample $\pi(\theta\mid \mathcal{D}_n,\hat\psi)$.

While overdispersion is a common argument for using classes of count distributions other than the Poisson, specifying the dispersion functions is usually difficult, and requires joint inference on $\theta$ and $\psi$, even though only inference on $\theta$ is the goal. In this section, we compare standard Bayesian inference using the Poisson model, the approach of \cite{agnoletto2023bayesian} based on $\pi(\theta\mid \mathcal{D}_n,\hat\psi)$, and the ACP based on the assumed Poisson approximating model,  which is equivalent to fixing $V(\mu_i;\psi)=\mu_i$. {\color{black}{Similarly to the linear regression model, we take $W_n(\theta)$ as the sample covariance of the score equations.}} We follow the simulation design of \cite{agnoletto2023bayesian} and compare the case of  overdispersed counts. The results show that the ACP and the approach of \cite{agnoletto2023bayesian} are very similar, but the ACP does not need to model the dispersion function $V(\mu_i;\psi)$ or estimate the hyperparameter $\psi$. 

Following the simulation design of \cite{agnoletto2023bayesian}, the data is generated as follows: for $\psi_0$ some true dispersion parameter (fixed at $\psi_0=1.5$ in our experiments), first generate $\tilde{y}_i\stackrel{iid}{\sim}\Gamma(\mu_i/\psi_0,1/\psi_0)$, where $\Gamma(\alpha,\beta)$ denotes the Gamma distribution with shape $\alpha$ and rate $\beta$, and define the observed $y_i$ as the integer floor of $\tilde{y}_i$. This process yields counts that are overdispersed relative to the Poisson distribution. We again take $x_i$ and $\theta$ as $d_\theta\times1$-dimensional vectors, where $x_{i,1}=1$ for all $i=1,\dots,n$, and $(x_{2,i},\ldots,x_{d_\theta,i})^\top$ is generated as a multivariate independent Gaussian. We consider two simulation designs corresponding to  $d_\theta=10$ and $d_\theta=20$: when $d_\theta=10$ we set $\theta=(3.5,0.5,-0.5,0.5,0,\dots,0)^\top$, and for $d_\theta=20$ the parameter $\theta$ has the same structure but the remaining ten entries in $\theta$ are zero.

{\color{black}For $n=1000$ we generate 100 replicated datasets from this DGP. Posterior sampling was carried out using STAN along with the default choices for the NUTS algorithm. Table \ref{tab:possreg} presents the posterior mean, posterior variance and the marginal coverage of each method and displays the results for the first three, and tenth covariates. The results suggest  that the ACP and generalized Bayes  (GB) approach of \cite{agnoletto2023bayesian} behave similarly with coverages that are close to the nominal level.} In contrast, standard Bayesian inference based on the Poisson model is overly precise, and has  poorer coverage. Critically, unlike the approach of \cite{agnoletto2023bayesian}, the ACP obtains reliable coverage without having to model the dispersion or estimating the over-dispersion parameter $\psi$. {\color{black}In comparison with these methods, we see that the Gaussian posterior correction approach (PostCorr) performs well with $d_\theta=10$, but the resulting approximation is less reliable when $d_\theta=20$; please see Appendix \ref{app:sup_results_poiss} for further evidence. Even when $d_\theta=20$, the Gaussian approximation on which PostCorr is based becomes inaccurate, and results in variances that are too small, which ultimately delivers poorer coverage. This behavior is not observed in the ACP and GB, which both deliver reliable coverage for $d_\theta=10$ and $d_\theta=20$. }

	\begin{table}[H]
	\hspace*{-1.25cm} 			
	\centering
	{\footnotesize
		\begin{tabular}{rrrrrrrrrrrrr}
			\hline\hline
			 & \multicolumn{3}{c}{ACP} & \multicolumn{3}{c}{SB} & \multicolumn{3}{c}{GB} & \multicolumn{3}{c}{PostCorr} 
		 \\	 \hline\hline
			\textbf{$d_\theta=10$}& Bias & Var &Cover& Bias & Var &Cover&Bias & Var &Cover&Bias & Var &Cover\\ \hline
$\theta_1$          &       $-0$.014    &           0.055   &  0.92    &            $-0$.009    &            0.037 &     0.88       &     $-0$.010  &          0.055  &    0.93&     $-0$.0009  &          0.0002  &    0.92\\
$\theta_2$        &          0.014     &          0.032   &  0.96      &           0.013    &            0.022   &   0.97      &       0.013      &      0.032    &  0.97&     $-0$.0003  &          0.0002  &    0.94\\
$\theta_3$       &         $-0$.005     &          0.032 &    0.95       &         $-0$.004 &               0.021  &    0.88  &          $-0$.004   &         0.032 &     0.95&     0.0002  &          0.0002  &    0.93\\
			$\theta_{10}$    &             0.030   &            0.032&     0.95      &           0.037    &            0.021   &   0.86     &        0.028        &    0.032  &    0.94& 0.0001&  0.0056& 0.91\\\hline\hline
			\textbf{$d_\theta=20$}& Bias & Var &Cover& Bias & Var &Cover & Bias & Var &Cover&Bias & Var &Cover\\ \hline
		$\theta_1$      &          $-0$.0157           &    0.056&     0.95       &         $-0$.0124  &              0.037  &    0.90  &          $-0$.0112         &   0.056     &   0.97& $-0$.0014& 0.0076& 0.87 \\
$\theta_{2}$     &           $-0$.0050         &      0.033  &   0.94     &           $-0$.0048    &            0.022 &     0.88    &        $-0$.0048      &      0.034   &   0.95& 0.0004 & 0.0060& 0.94    \\
$\theta_{3}$      &           0.0020        &       0.033&     0.96   &              0.0025        &        0.022   &   0.93        &     0.0025      &      0.033 &     0.96&  0.0003& 0.0060&  0.93   \\
			$\theta_{10}$       &        $-0$.0060          &     0.033&     0.96   &             $-0$.0061  &              0.022   &   0.92 &           $-0$.0059           & 0.033 &     0.95&  $-0$.0003&    0.0060&   0.89  \\
\hline
	\end{tabular}}
	\captionsetup{width=1.1\linewidth}
	\caption{Results in the Poisson regression model for standard Bayes (SB), ACP, {\color{black}the generalized Bayes approach of \cite{agnoletto2023bayesian} (GB) based on the quasi-likelihood,  and the Gaussian posterior correction (PostCorr). The true value of the displayed parameter is $(\theta_1,\theta_2,\theta_3)=(3.5,0.5,-0.5,0,\dots,0)$, where for both $d_\theta=10$ and $d_\theta=20$, only the first three coefficients are non-zero. See Table \ref{tab:reg} for a detailed description of the table entries.}}
	\label{tab:possreg}
\end{table}

\subsection{Doubly Intractable Models}
In many interesting settings the likelihood for the model $P_\theta^{(n)}$ is of the form $p_\theta^{(n)}(y)=Z_\theta^{-1}\tilde{p}(y\mid\theta)$, where $\tilde{p}(y\mid\theta)$ is an analytically tractable density kernel, and $Z_\theta$ is an intractable normalizing constant. Bayesian inference in such settings, often called doubly intractable models since both $Z_\theta$ and the marginal likelihood are intractable, is challenging and often requires approximating $p_\theta^{(n)}(y)$. Examples include spatial models, exponential random graphs, and certain discrete data settings (\citealp{matsubara2022robust}). 

 \cite{matsubara2022robust, matsubara2023generalised} demonstrate how generalized Bayesian methods based on certain classes of discrepancies can be used to deliver posterior inferences in such settings. The key insight of these approaches is that by replacing the likelihood with a well-chosen discrepancy function, computing the intractable normalizing constant can be avoided. For continuous variables, \cite{matsubara2022robust} produce Bayesian inference using the kernel Stein discrepancy (KSD-Bayes); for discrete data, \cite{matsubara2023generalised} use the discrete Fisher divergence (DFD) to produce generalized Bayes posteriors (DFD-Bayes). 

Regardless of which divergence is used to produce the generalized Bayes posterior for $\theta$, the resulting posterior is not calibrated in general (\citealp{miller2021asymptotic}). To circumvent this issue, \cite{matsubara2023generalised} propose a computationally onerous bootstrapping procedure to deliver calibrated inferences, while \cite{matsubara2022robust} propose an approximate calibration procedure based on a particular choice for the posterior learning rate. 

Theorem \ref{thm:two} demonstrates that so long as $p_\theta^{(n)}$ is continuously differentiable, the ACP based on the KSD (or DFD) delivers calibrated inferences without bootstrapping, or the need to choose the learning rate.

\subsubsection{DFD-Bayes: Conway-Maxwell-Poisson Model}

This section performs approximate Bayesian inference for the Conway-Maxwell-Poisson model \citep{conway1962queuing}, which is a flexible model for discrete data $x$, with $x\ge0$, that can capture both under- and over-dispersion.  The probability mass function for a single observation $x$ conditional on parameter $\theta = (\theta_1, \theta_2)^\top$ is
\begin{align*}
	p(y\mid\theta) &= \frac{\tilde{p}(y|\theta)}{Z_\theta},
\end{align*}
where $\tilde{p}(y|\theta) = (\theta_1)^y(y!)^{-\theta_2}$, with $\theta_1 > 0$ and $\theta_2 \in [0,1]$.  The normalizing constant $Z_\theta = \sum_{y=0}^\infty \tilde{p}(y|\theta)$ does not have a closed form expression, except for special cases; for example, when $\theta_2=1$ we recover the Poisson distribution with mean $\theta_1$.  However, the normalizing constant $Z_\theta$ can be accurately approximated and facilitates a comparison with a precise approximation of the true posterior.

To avoid the intractable normalizing constant associated with discrete distributions, \citet{matsubara2023generalised} {\color{black}conduct generalized Bayesian inference on $\theta$ using the DFD as the loss function in \eqref{eq:gibbs}.}  In the one-dimensional data setting the DFD between the statistical model conditioned on $\theta$, $P_\theta$, {\color{black}and the empirical distribution of the data, given by $P_n$,} is defined as
\begin{align*}
	\mbox{DFD}(P_\theta||P_n) &=  \frac{1}{n} \sum_{i=1}^n \left( \frac{p(y_i^-|\theta)}{p(y_i|\theta)}  \right)^2 - 2 \left( \frac{p(y_i|\theta)}{p(y_i^+|\theta)}  \right), 
\end{align*}
and it is evident that the intractable normalizing constants cancel in the ratios.  For this example, $y^+ = y+1$ and $y^- = y-1$, unless $y=0$ in which case we set $y^{-} = \mathrm{max} \{y_i\}_{i=1}^n$, i.e.\ the maximum value of the dataset.

\citet{matsubara2023generalised} embed the DFD within a generalized Bayes framework to conduct approximate Bayesian inferences without needing $Z_\theta$.  To calibrate the scaling parameter $\omega$ in the generalized posterior, \citet{matsubara2023generalised} propose the following steps: first, create $B$ bootstrap replications of the observed data, and for each of the bootstrapped dataset, produce an estimator of $\theta$, say $\{\theta_n^{(b)} : b=1,\dots,B \}$, by minimizing the DFD between the assumed model, and the $b$-th bootstrapped dataset; the value of $\omega$ is then chosen to minimize the Fisher divergence between the generalized posterior and the empirical bootstrap sample $\{\theta_n^{(b)} : b=1,\dots,B \}$.  This value of $\omega$ can then be used in an MCMC algorithm to generate samples from the posterior.

In contrast, the ACP approach avoids a calibration process, and will deliver calibrated Bayesian inference.  All that is required is to compute the score of each component of the DFD, which in this case can be done analytically.  
The implementation can be accelerated by computing the score for each unique value of the dataset, and weighting by the number of replicates of each unique value. Again, the \texttt{AdaptiveMCMC.jl} library in \texttt{Julia} is used to sample the ACP.  We do not consider the posterior correction method of \citet{muller2013risk} in this example as the model is correctly specified. 

From this model we generate 100 independent datasets of size $n=2000$ using the true parameter value $\theta = (4,0.75)^\top$. Table \ref{tab:CMP_results} shows the results for the ACP, which is compared with an accurate approximation of the true posterior and the GB approach of \citet{matsubara2023generalised}. {\color{black}Although not formally proved in \citet{matsubara2023generalised}, we conjecture that if the model is correctly specified, then the limit minimizer of the DFD as $n \rightarrow \infty$ - the point onto which the ACP will concentrate - is equal to the true parameter value. Using a large dataset of 10 million observations simulated from the model, we find that the resulting minimizer of the DFD coincides with the true parameter up to two decimal places, and so we use $\theta = (4,0.75)^\top$ to calculate posterior coverage.} The ACP produces posteriors with a larger standard deviation than the true posterior, but still achieves reasonable coverage rates. The ACP results are slightly less accurate than the GB approach of \citet{matsubara2023generalised}; however, the approach of \citet{matsubara2023generalised} focuses on intractable discrete models whereas the ACP is generally applicable and avoids bootstrapping the distribution of the point estimator that minimizes the loss.  

Figure \ref{fig:cmp_posteriors} displays posterior approximations for a single dataset: the ACP approach produces an approximation that is inflated relative to the true posterior, and is similar to the GB approach of \citet{matsubara2023generalised}.  We find that for some of the datasets, the ACP has a substantially heavier tail (see Figure \ref{fig:cmp_posterior_sds}) than the GB approach, which for the parameter $\theta_1$ seems to lead to some  inflation of the MSE and bias shown in Table \ref{tab:CMP_results}.

\begin{table}[H]
	\hspace*{-1.25cm} 			
	\centering
	{\footnotesize
		\begin{tabular}{rrrrrrrrrr}
			\hline\hline
			& \multicolumn{3}{c}{ACP} & \multicolumn{3}{c}{SB} & \multicolumn{3}{c}{GB}\\\hline\hline
			 & Bias & Var &Cover& Bias & Var &Cover&Bias & Var &Cover\\ \hline
$\theta_1$ & 0.268  & 0.3840   & 0.96   & 0.076 & 0.0580  & 0.96  & 0.137  & 0.161 & 0.98  \\
$\theta_2$ & 0.040   & 0.0064   & 0.95  & 0.012  &  0.0021  &  0.97  & 0.019 & 0.0038 & 0.96  \\ \hline \hline
	\end{tabular}}
	\captionsetup{width=1.1\linewidth}
	\caption{Results for the Conway-Maxwell-Poisson example using 100 independent datasets simulated with true parameter value $\theta = (\theta_1, \theta_2)^\top = (4,0.75)^\top$. {\color{black}Here, GB refers to the approach of \cite{matsubara2023generalised}, and SB to standard Bayes. See Table \ref{tab:reg} for a detailed description of the table entries.} 
    }
	\label{tab:CMP_results} 
\end{table}
\begin{figure}[!htp]
	\centering
	\begin{subfigure}[t]{0.45\textwidth}
		\centering
		\includegraphics[width=\linewidth,height=50mm]{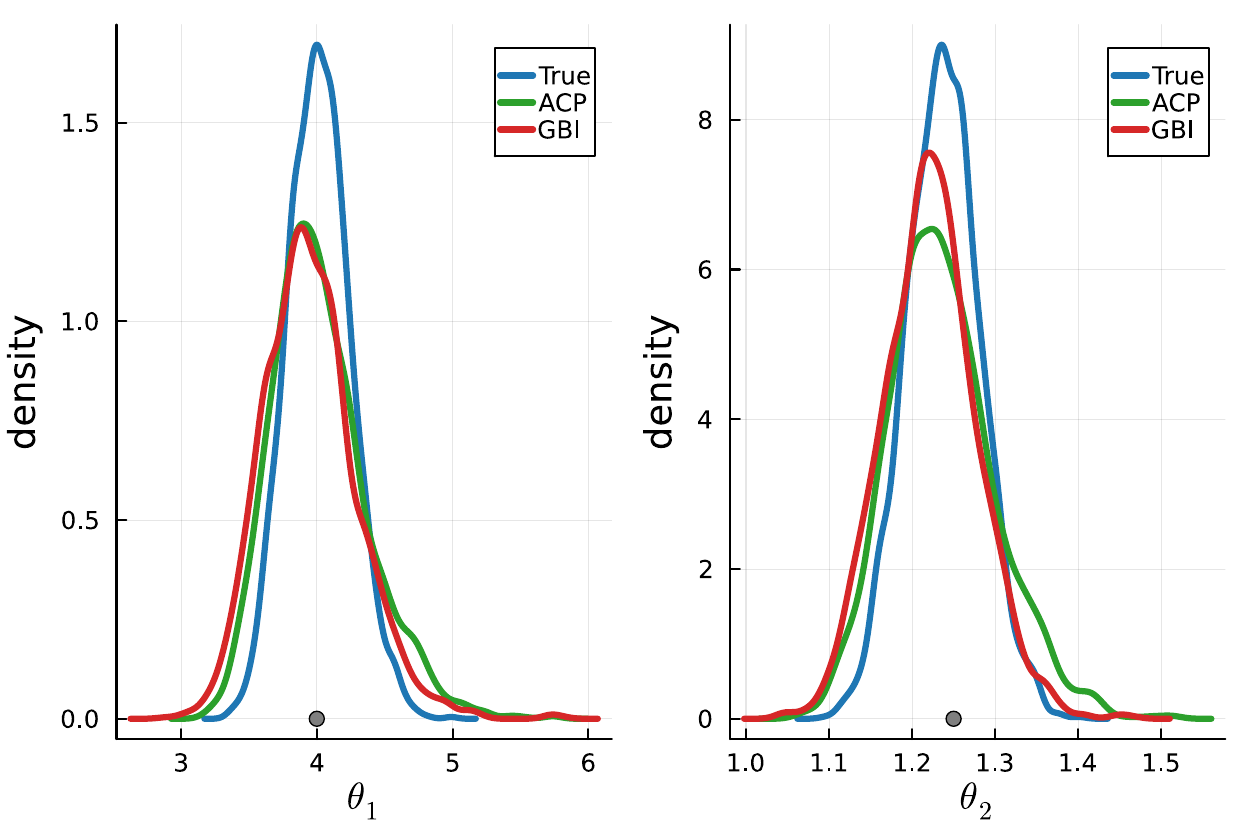}
		\caption{Univariate density estimates of approximations to the posterior distribution for a single dataset generated from a CMP model with true parameter value $\theta=(4,0.75)^\top$.}
		\label{fig:cmp_posteriors}
	\end{subfigure}
	\hfill
	\begin{subfigure}[t]{0.45\textwidth}
		\centering
		\includegraphics[width=\linewidth]{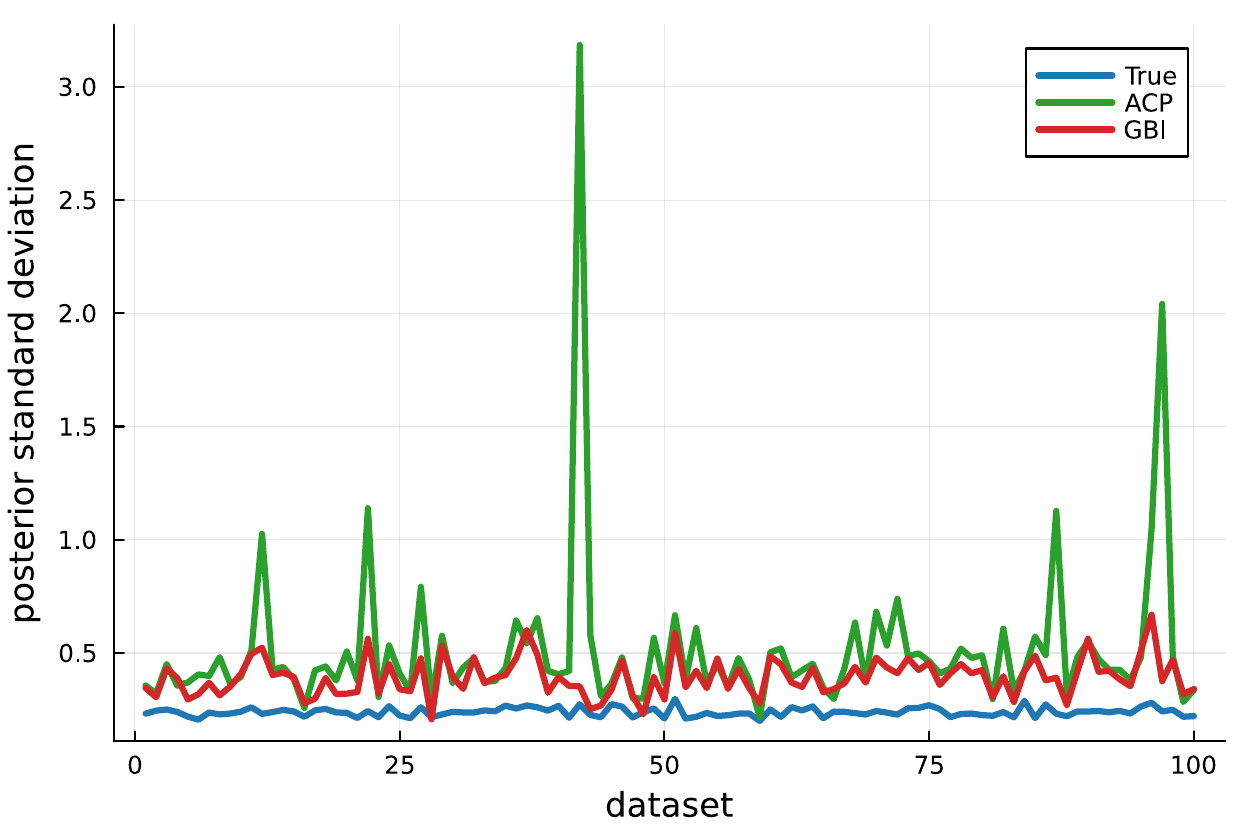}
		\caption{Univariate posterior standard deviations of $\theta_1$ across 100 datasets generated from a CMP model with true parameter value $\theta=(4,0.75)^\top$.}
		\label{fig:cmp_posterior_sds}
	\end{subfigure}
	
	\captionsetup{width=1.0\linewidth}
	\caption{Posterior summaries for data generated from a CMP model.  Results are shown for ACP (green), the GB approach of \citet{matsubara2022robust} (red), and an accurate approximation of the exact likelihood (blue).}
	\label{fig:cmp_side_by_side}
\end{figure}

%\begin{figure}[!htp]
%	\centering
%		\includegraphics[width=100mm,height=50mm]{cmp_posteriors_withgbi.pdf}
  %   \captionsetup{width=1.1\linewidth}
%	\caption{Univariate density estimates of approximations to the posterior distribution for a single dataset generated from a CMP model with true parameter value $\theta = (\theta_1, \theta_2)^\top = (4,0.75)^\top$.  The posterior approximations based on the ACP (green), the GB approach of \citet{matsubara2022robust} (red) and an accurate approximation of the exact likelihood (blue).}
%	\label{fig:cmp_posteriors}
%\end{figure}
%
%
%\begin{figure}[!htp]
%	\centering
%	\includegraphics[scale=0.35]{cmp_posterior_sds.pdf}
  %    \captionsetup{width=1.1\linewidth}
%	\caption{Univariate posterior standard deviations of $\theta_1$ across 100 datasets generated from a CMP model with true parameter value $\theta = (\theta_1, \theta_2)^\top = (4,0.75)^\top$.  The results presented are for ACP (green), the GB approach of \citet{matsubara2022robust} (red) and an accurate approximation of the exact likelihood (blue).}
%	\label{fig:cmp_posterior_sds}
%\end{figure}

\subsubsection{KSD-Bayes: Contaminated Normal}\label{sec:ksd}
While discrete Fisher divergence-based Bayes (DFD-Bayes) can be used to conduct posterior inference in models where the normalizing constant for the mass function is intractable, \cite{matsubara2022robust} suggest using the kernel Stein discrepancy (KSD) within a generalized Bayes framework when the variables observed variables are continuous. In particular, to avoid the intractable normalizing constant $Z_\theta$ in $p^{(n)}_\theta(y)=p(y\mid\theta)/Z_\theta$, \cite{matsubara2022robust} conduct generalized Bayesian inference using the KSD, which is defined as follows: for some positive-definite kernel function $K:\mathcal{Y}\times\mathcal{Y}\rightarrow\mathbb{R}^{d\times d}$, $P,Q\in\mathcal{P}(\mathcal{Y})$, where $\mathcal{P}(\mathcal{Y})$ denotes the space of distributions over $\mathcal{Y}$, and $\mathcal{S}_{Q}$ the Stein operator (see \citealp{matsubara2022robust} for a formal definition), the KSD takes the form
$
\operatorname{KSD}^2(Q \| P):=\mathbb{E}_{Y, Y^{\prime} \sim P}\left[\mathcal{S}_{Q} \mathcal{S}_{Q} K\left(Y, Y^{\prime}\right)\right],
$ 
which can be explicitly estimated using 
$$
\mathrm{KSD}^2 ({P}^{}_\theta \| {P}_n)=\frac{1}{n^2} \sum_{i=1}^n \sum_{j=1}^n \mathcal{S}_{P_\theta} \mathcal{S}_{P_\theta} K\left(y_i, y_j\right),
$$where ${P}_n$ is the empirical measure of the sample $y_1,\dots,y_n$. \cite{matsubara2022robust} show that $\mathrm{KSD}^2\left(P_\theta \| {P}_n\right)$ can be evaluated in closed form when $\mathcal{Y}=\mathbb{R}^d$ and when $\mathcal{S}_{\mathcal{Q}}$ is the Langevin-Stein operator; see \citealp{matsubara2022robust} for complete details, and, in particular, the discussion surrounding their equation (5).

\cite{matsubara2022robust} embed the KSD within a generalized Bayesian framework, and demonstrate that in exponential family models with conjugate priors, the KSD-Bayes posterior has a closed form. The main motivation behind the use of KSD-Bayes comes from its robustness properties, which allows the authors to show that the KSD-Bayes posterior displays a global bias-robustness property under data contamination.

While KSD-Bayes displays meaningful robustness, the resulting posterior is not calibrated. The authors suggest a two-stage approach for setting the learning rate to overcome this issue; however, the resulting KSD-Bayes posterior will not be calibrated unless $\theta$ is a scalar. In contrast, the ACP based on $\mathrm{KSD}^2 ({P}^{(n)}_\theta \| {P}_n)$ inherits the bias robustness properties of the KSD-Bayes posterior, and also delivers calibrated inference. 

{\color{black}To demonstrate this, we compare the ACP based on the KSD, KSD-Bayes and standard Bayes in the contaminated normal location model analyzed in Section 6.1 of \cite{matsubara2022robust}.} For the ACP and KSD-Bayes, we set the kernel function $K(\cdot,\cdot)$  as the inverse multi-quadratic kernel suggested as the default choice by \cite{matsubara2022robust}, and for KSD-Bayes we use their optimal choice of learning rate in an attempt to correct the coverage of the KSD-Bayes posterior; see Section 5 of \cite{matsubara2022robust} for details on these choices. 

The assumed model is $P_\theta=\mathcal{N}(\theta,1)$, with prior belief $\theta\sim \mathcal{N}(0,1)$, but the actual generating process is an $\epsilon$-contamination of $P_\theta$ with contaminating distribution $\mathcal{N}(5,1)$. We take the true value of $\theta=-1$, so that the true DGP is $(1-\epsilon) \mathcal{N}(-1,1) + \epsilon\mathcal{N}(5,3)$. We generate 100 data points from this model and 100 replicated datasets under $\epsilon=0$ (correct specification), and $\epsilon=0.10$ (model misspecification). {\color{black}For each dataset, we calculate the bias and average posterior variance and Monte Carlo coverage of the 95\% credible set.} Table \ref{tab:KSD-bayes}  demonstrates that the ACP has a similar location as the KSD-Bayes posterior, {\color{black}but its credible sets are closer to the nominal level under contamination than the KSD-Bayes posterior, even though the KSD-Bayes posterior utilizes the two-stage learning rate described in \cite{matsubara2022robust}.} Similar findings hold under both designs.

\begin{table}[H]
	\hspace*{-1.25cm} 			
	\centering
	{\footnotesize
		\begin{tabular}{rrrrrrrrrr}
			\hline\hline
			& \multicolumn{3}{c}{ACP} & \multicolumn{3}{c}{SB} & \multicolumn{3}{c}{KSD-B}\\\hline\hline
			 & Bias & Var &Cover& Bias & Var &Cover&Bias & Var &Cover\\ \hline
$\epsilon=0,\;\;\theta$ & $-$0.011 & 0.021   & 0.98   & 0.010 & 0.010  & 0.95  & 0.003  & 0.014 & 0.96  \\
$\epsilon=0.10,\;\;\theta$ & 0.087   & 0.021   & 0.94  & 0.614  &  0.010  &  0.02  & 0.100 & 0.014 & 0.83  \\ \hline \hline
	\end{tabular}}
	\captionsetup{width=1.1\linewidth}
	\caption{Results for the contaminated normal location for the ACP, KSD-Bayes (KSD-B) and standard Bayes (SB) with true parameter value $\theta = 1$. The case of $\epsilon=0$ corresponds to correct specification, while $\epsilon=0.10$ corresponds to 10\% data contamination. Table \ref{tab:reg} describes the column entries. }
	\label{tab:KSD-bayes}
\end{table}
\section{Theoretical Results}\label{sec:theory}
\subsection{Assumptions}
Using standard regularity conditions commonly encountered in frequentist inference, we formally demonstrate that  $\pi(\theta\mid \mathcal{Q}_n)$ delivers calibrated credible sets. We first fix notation. For a positive sequence $a_n\rightarrow\infty$ as $n\rightarrow\infty$, $X_n=o_p(a_n^{-1})$ denotes convergence of the sequence $a_n X_n$ to zero in probability; $X_n=O_p(a_n^{-1})$ denotes that $a_nX_n$ is bounded in probability. For a set $A\subseteq\mathbb{R}^d$, $\mathrm{Int}(A)$ denotes the interior of $A$. The notation $\Rightarrow$ denotes weak convergence under $P^{(n)}_0$.

\begin{assumption}\label{ass:infeasible}There exist $m:\Theta\rightarrow\mathbb{R}^{d_\theta}$ satisfying the following conditions. 
	\begin{enumerate}
		\item[(i)] $\sup_{\theta\in\Theta}\|\overline{m}_n(\theta)-m(\theta)\|=o_p(1)$. 
		\item[(ii)] There exists some $\theta_\star\in\mathrm{Int}(\Theta)$, such that $m(\theta_\star)=0$.
		\item[(iii)] For some $\delta>0$, $m(\theta)$ is continuously differentiable over $\|\theta-\theta_\star\|\le\delta$, and $\mathcal{H}(\theta_\star):=-\nabla_{\theta}m(\theta_\star)$ is invertible.
		\item[(iv)] There is a positive-definite matrix $\mathcal{I}(\theta_\star)$, such that $\sqrt{n}\overline{m}_n(\theta_\star)\Rightarrow N\{0,\mathcal{I}(\theta_\star)\}$.
		\item[(v)] For any $\epsilon>0$, there exists $\delta>0$ such that $$\limsup_{n\rightarrow\infty}P^{(n)}_0\left(\sup _{\left\|\theta-\theta_\star\right\|<\delta}\frac{\sqrt{n}\left\|\{\overline{m}_{n}(\theta)-\overline{m}_{n}(\theta_\star)\}-\{m(\theta)-m(\theta_\star)\}\right\|}{1+\sqrt{n}\|\theta-\theta_\star\|}>\epsilon\right)<\epsilon.$$
	\end{enumerate}
	
\end{assumption}

%We require that the matrix $W_n(\theta)$ satisfy the following assumption. 
\begin{assumption}\label{ass:weight}The following conditions
	are satisfied: {\color{black}(i) for $n$ large enough, with $P^{(n)}_0$-probability one, the matrix ${W}_n(\theta)$ is positive semi-definite and symmetric;} (ii) there exist a matrix $W(\theta)$ that is positive semi-definite and symmetric for all $\theta\in\Theta$, such that $\sup_{\theta\in\Theta}\|{W}^{}_n(\theta)-W(\theta)\|=o_{p}(1)$ and $W(\theta)$ is continuous and positive-definite for all $\|\theta-\theta_\star\|\le\delta$ and some $\delta>0$; (iii) {\color{black}for any $\epsilon>0$, and some $\theta_\star\in\Theta$,} $\sup_{\|\theta-\theta_\star\|\ge\epsilon}m(\theta)^\top W(\theta)^{-1}m(\theta)>0$.
\end{assumption}
\begin{remark}{\normalfont 
		Assumptions \ref{ass:infeasible} and \ref{ass:weight} together enforce smoothness and identification conditions on  the loss used within the ACP. These conditions permit the existence of a quadratic expansion for the loss that is smooth in $\theta$ near $\theta_\star$, and with a remainder term that can be suitably controlled. Assumption \ref{ass:infeasible}(i) is a uniform law of large numbers and is satisfied for a large class of functions under many different data generating processes, and can often be verified using empirical processes methods (see, e.g., \citealp{vaart2023empirical}, Chapter 3). {\color{black}Assumption \ref{ass:infeasible}(iii) requires that the limit counterpart $m(\theta)$ is differentiable near $\theta_\star$.  Assumption \ref{ass:infeasible}(v) is a smoothness condition that only requires $\overline{m}_n(\theta)$ to be smooth ``in expectation'' and also requires that the deviations between $\overline{m}_n(\theta)$ and $m(\theta)$ can be appropriately controlled. This smoothness condition allows for loss functions that are only weakly differentiable, e.g., loss functions based on absolute values or check functions, which do not satisfy a standard differentiability condition, see, e.g., Chapter 3 of  \cite{vaart2023empirical} for a similar  assumption.} We also note that  Assumption \ref{ass:infeasible} does not require that $\theta_\star$ is unique. This is important as we will later treat the case of multi-modal posteriors caused by score equations that admit multiple roots. } 
\end{remark}

\begin{remark}{\normalfont 
Since the behavior of the ACP depends on the matrix $W_n(\theta)$, we require certain regularity conditions on $W_n(\theta)$. Assumption \ref{ass:weight}(ii) requires that $W_n(\theta)$ converges uniformly to $W(\theta)$, which is continuous and positive-definite for all $\theta$ sufficiently close to $\theta_\star$. Assumption \ref{ass:weight}(ii) does not require $W(\theta)$ to be invertible uniformly over $\Theta$, but only sufficiently close to $\theta_\star$. {\color{black}In such cases the limiting quadratic form $Q(\theta)=2^{-1}m(\theta)^\top W(\theta)^{-1}m(\theta)$ need not be continuous over $\Theta$, and there is no guarantee that a minimum exists. If this is the case, an additional identification condition given by Assumption \ref{ass:weight}(iii) is required; this condition is equivalent to the standard condition that the population criterion $Q(\theta)$ admits a (possibly non-unique) minimizer. This condition is not onerous, and ensures the existence of a point onto which the posterior will concentrate. Such a condition is well known in related fields such as the generalized method of moments (see, e.g., Assumption 3 in \citealp{hall2003large}). If $W(\theta)^{-1}$ is positive-definite and bounded over $\Theta$, Assumption \ref{ass:weight}(iii) is automatically satisfied.
}}
\end{remark}

The next assumption requires the  existence of certain prior moments. 
\begin{assumption}\label{ass:prior}
	For $\theta_\star$ as defined in Assumption \ref{ass:infeasible}, $\pi(\theta_\star)>0$ and $\pi(\theta)$ is continuous on $\Theta$. For some $p\ge1$, $\int_\Theta\|\theta\|^p\pi(\theta)\dt\theta<+\infty$, and for all $n$ large enough, {\color{black} with $P^{(n)}_0$-probability one}, $\int_\Theta |W_n(\theta)|^{-1/2}\|\theta-\theta_\star\|^{p}\pi\left(\theta\right)\dt\theta<+\infty$.
\end{assumption}
\begin{remark}
	The condition $\int_\Theta |W_n(\theta)|^{-1/2}\|\theta-\theta_\star\|^{p}\pi\left(\theta\right)\dt\theta<+\infty$, {\color{black} with $P^{(n)}_0$-probability one}, is not commonly encountered and is required since the matrix $W_n(\theta)$ may be singular far away from $\theta_\star$. This condition {\color{black}allows $W_n(\theta)$ to have points of singularity, but restricts them to have negligible mass under $\pi(\theta)$ so that the expectation of the $\|W_n(\theta)|^{-1/2}\|\theta-\theta_\star\|^p$ under $\pi(\theta)$ exists. For example, if $W_n(\theta)$ is singular at $\theta'$, then so long as $\theta'$ is far enough out in the tails of $\pi(\theta)$, i.e., $\pi(\theta')\approx0$, then the expectation will exist. For a given dataset this condition can be verified using Monte Carlo simulation from the prior.}
	
	% We note that a similar assumption was used in the analysis of \cite{frazier2022bayesian} to control the BSL posterior.  
\end{remark}

\subsection{Results}
\subsubsection{Unique identification}
If $\theta_\star$ in Assumption \ref{ass:infeasible} is unique, then the ACP correctly quantifies uncertainty. {\color{black}To formally state this result, recall that $\theta_n$ satisfies $0=\overline{m}_n(\theta_n)$, define  $t:=\sqrt{n}(\theta-\theta_n)$, let its support be $\mathcal{T}_n:=\{t:=\sqrt{n}(\theta-\theta_n):\theta\in\Theta\}$, and define $\Delta(\theta):=\mathcal{H}(\theta)W(\theta)^{-1}\mathcal{H}(\theta)^\top$.}

\begin{theorem}\label{thm:two}If Assumptions \ref{ass:infeasible}-\ref{ass:prior} hold and $\theta_\star$ is unique, then, for $p\in\{0,1\}$,
	$$
		\int_{\mathcal{T}_n}\|t\|^p|\pi(t\mid \mathcal{Q}_{n})-{\color{black}N\{t;0,[\omega\cdot\Delta(\theta_\star)]^{-1}\}}|\dt t=o_{p}(1).
	$$
\end{theorem}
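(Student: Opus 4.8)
The plan is to establish this as a Bernstein--von Mises--type statement by localizing the Q-posterior around the root $\theta_n$ of $\overline{m}_n$ and showing that, after the change of variables $\vartheta=\sqrt{n}(\theta-\theta_n)$, the rescaled density converges to the stated Gaussian in the weighted $L^1$ sense for $p=0,1$. Writing the unnormalized localized density as
$$
f_n(\vartheta):=M_n(\theta_n+\vartheta/\sqrt n)^{-1/2}\exp\{-nQ_n(\theta_n+\vartheta/\sqrt n)\}\,\pi(\theta_n+\vartheta/\sqrt n),
$$
the goal reduces to showing that $\int_{\mathcal{T}_n}(1+\|\vartheta\|^p)\,|f_n(\vartheta)/C_n-N\{\vartheta;0,\Delta(\theta_\star)^{-1}\}|\,\dt\vartheta=o_p(1)$, where $C_n=\int_{\mathcal{T}_n}f_n(\vartheta)\,\dt\vartheta$ is the normalizing constant. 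Before this, I would use Assumption~\ref{ass:infeasible}(i)--(iii) together with Assumption~\ref{ass:weight} to show that a root $\theta_n$ of $\overline{m}_n$ exists with $\theta_n\to\theta_\star$ and $\sqrt{n}(\theta_n-\theta_\star)=O_p(1)$, which justifies centering at $\theta_n$ and keeps $\theta_n+\vartheta/\sqrt n$ inside the $\delta$-ball where $W(\theta)$ and $m(\theta)$ are well behaved for $\vartheta$ in compacta.

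The core of the argument is the quadratic expansion of $nQ_n$ around $\theta_n$. Since $\overline{m}_n(\theta_n)=0$ we have $Q_n(\theta_n)=0$ and $\nabla_\theta Q_n(\theta_n)=0$, so the leading behavior is governed by the curvature. Using the stochastic equicontinuity in Assumption~\ref{ass:infeasible}(v) to linearize $\sqrt{n}\,\overline{m}_n(\theta_n+\vartheta/\sqrt n)\approx \sqrt{n}\,\overline{m}_n(\theta_\star)-\mathcal{H}(\theta_\star)\{\vartheta+\sqrt{n}(\theta_n-\theta_\star)\}\approx-\mathcal{H}(\theta_\star)\vartheta$ (this requires differentiability only of the limit $m$, not of $\overline{m}_n$), and using Assumption~\ref{ass:weight}(ii) to replace $W_n$ by its continuous limit near $\theta_\star$, I would obtain
$$
nQ_n(\theta_n+\vartheta/\sqrt n)=\tfrac{1}{2}\vartheta^\top \Delta_n(\theta_n)\vartheta+R_n(\vartheta),\qquad \Delta_n(\theta)=\mathcal{H}_n(\theta)^\top W_n(\theta)^{-1}\mathcal{H}_n(\theta),
$$
with $\Delta_n(\theta_n)\Rightarrow\Delta(\theta_\star)=\mathcal{H}(\theta_\star)W(\theta_\star)^{-1}\mathcal{H}(\theta_\star)$ and the remainder $R_n(\vartheta)\to 0$ uniformly over $\|\vartheta\|\le K$ for each fixed $K$. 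Because $W(\theta_\star)=\mathcal{I}(\theta_\star)$ under the maintained interpretation of $W_n$ as an estimator of $\mathcal{I}$, the limiting precision $\Delta(\theta_\star)$ inverts to the sandwich matrix $\mathcal{H}_\star^{-1}\mathcal{I}_\star\mathcal{H}_\star^{-1}$, which is exactly the calibrated variance. The factors $M_n(\theta_n+\vartheta/\sqrt n)^{-1/2}\to|W(\theta_\star)|^{-1/2}$ and $\pi(\theta_n+\vartheta/\sqrt n)\to\pi(\theta_\star)>0$ converge to positive constants by the continuity parts of Assumptions~\ref{ass:weight}(ii) and \ref{ass:prior}, so they cancel between $f_n$ and $C_n$ and contribute nothing to the limit.

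The remaining work is to control the tails and the normalizing constant. To show $C_n$ is dominated by the local contribution, I would split $\Theta$ into a shrinking neighborhood of $\theta_\star$ and its complement; on the complement, Assumption~\ref{ass:weight}(iii) guarantees $\inf_{\|\theta-\theta_\star\|\ge\epsilon}Q(\theta)>0$, so $\exp\{-nQ_n(\theta)\}$ decays exponentially, while Assumption~\ref{ass:prior} bounds $\int_\Theta|W_n(\theta)|^{-1/2}\|\theta-\theta_\star\|^p\pi(\theta)\,\dt\theta$, taming both the potential singularities of $W_n^{-1}$ away from $\theta_\star$ and the polynomial weight needed for the $p=1$ moment. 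Combining the local Gaussian approximation, a dominating integrable envelope on $f_n/C_n$ that lets the weighted $L^1$ limit pass through, and the negligible tails then yields the claimed convergence for $p=0,1$. I expect the main obstacle to be the remainder control in the quadratic expansion: because $\overline{m}_n$ is only assumed asymptotically equicontinuous rather than differentiable, the linearization error must be handled through Assumption~\ref{ass:infeasible}(v) uniformly over the local region, and this interacts delicately with the data-dependent weighting $W_n(\theta)^{-1}$, whose possible ill-conditioning away from $\theta_\star$ must be absorbed by the prior integrability condition rather than by smoothness.
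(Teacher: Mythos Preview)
Your outline is essentially the paper's strategy: center at a root $\theta_n$, expand $nQ_n$ quadratically, show the prefactors $M_n^{-1/2}$ and $\pi$ converge to positive constants and cancel in the normalization, handle the far tails via the identification condition in Assumption~\ref{ass:weight}(iii) together with the prior integrability in Assumption~\ref{ass:prior}, and stitch the pieces together. The paper organizes the same ingredients through three lemmas (root-$n$ rate for $\theta_n$ plus a quadratic expansion with explicit remainder; a bound on $\pi(\theta_n\mid\mathcal{Q}_n)/n^{d_\theta/2}$; and a remainder lemma invoking Proposition~1 of \cite{chernozhukov2003mcmc}) and then splits the integration region into $\{\|t\|\le h\}$, $\{h\le\|t\|\le\delta_n\sqrt{n}\}$, and $\{\|t\|\ge\delta_n\sqrt{n}\}$.

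There is one step you understate, and it is exactly the one you flag as the main obstacle. You assert that the remainder $R_n(\vartheta)\to 0$ uniformly over $\|\vartheta\|\le K$ for each fixed $K$, and then appeal to a ``dominating integrable envelope'' for the weighted $L^1$ convergence. Uniform control on compacta is not enough, because the integration region $\mathcal{T}_n$ grows like $\sqrt{n}$ and the weight $\|\vartheta\|$ is unbounded; you need something that controls the intermediate annulus $h\le\|\vartheta\|\le\delta_n\sqrt{n}$. What the paper actually uses (its remainder lemma, via Chernozhukov--Hong) is a \emph{relative} bound of the form
\[
\sup_{h/\sqrt{n}\le\|\theta-\theta_\star\|\le\delta}\frac{|R_n(\theta)|}{1+n\|\theta-\theta_\star\|^2}=o_p(1),
\]
which, after the change of variables, yields $|R_n(\vartheta)|\le\epsilon(1+\|\vartheta\|^2)$ on that annulus. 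This is precisely what delivers a Gaussian-type envelope $\exp\{-c\|\vartheta\|^2\}$ there and makes your dominated-convergence step go through for both $p=0$ and $p=1$. Your acknowledgment that Assumption~\ref{ass:infeasible}(v) must be used ``uniformly over the local region'' is in the right direction, but the specific \emph{form} of the control---relative to $1+n\|\theta-\theta_\star\|^2$ rather than uniform in $\vartheta$ on fixed compacta---is the missing ingredient.

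A smaller point: writing the expansion as $\tfrac12\vartheta^\top\Delta_n(\theta_n)\vartheta$ with $\Delta_n(\theta)=\mathcal{H}_n(\theta)^\top W_n(\theta)^{-1}\mathcal{H}_n(\theta)$ presumes differentiability of $\overline{m}_n$, which is not assumed. The paper, consistently with your own parenthetical remark, expands entirely in terms of the \emph{limit} objects $\mathcal{H}(\theta_\star)$ and $W(\theta_\star)$, pushing all the errors from replacing $\overline{m}_n$ by its linearization and $W_n$ by $W$ into the remainder $R_n$; the relative bound above is then what controls that remainder.
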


\begin{corollary}\label{corr:one}Define $\overline\theta=\int_{\Theta}\theta \pi(\theta\mid \mathcal{Q}_n)\dt\theta$. If the assumptions of Theorem \ref{thm:two} hold, then    
	$$
	\sqrt{n}(\overline\theta-\theta_\star)\Rightarrow N\{0,\Delta(\theta_\star)^{-1}\mathcal{H}(\theta_\star) W(\theta_\star)^{-1}\mathcal{I}(\theta_\star) W(\theta_\star)^{-1}\mathcal{H}(\theta_\star)^\top\Delta(\theta_\star)^{-1}\}.
	$$
\end{corollary}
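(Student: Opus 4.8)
The plan is to decompose $\sqrt{n}(\overline\theta-\theta_\star)=\sqrt{n}(\overline\theta-\theta_n)+\sqrt{n}(\theta_n-\theta_\star)$ and treat the two pieces separately. The first piece is the rescaled discrepancy between the Q-posterior mean and its centering point $\theta_n$, which I will show is $o_p(1)$ by feeding the $p=1$ conclusion of Theorem \ref{thm:two} through the definition of the posterior mean; the second piece is the ordinary sampling fluctuation of the $Z$-estimator $\theta_n$, for which I will invoke a standard linearization. The asymptotic law will then come entirely from the second piece, and the sandwiched covariance in the statement will be obtained by rewriting that limit.

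For the first piece, I would change variables to $\vartheta=\sqrt{n}(\theta-\theta_n)$ in $\overline\theta=\int_\Theta\theta\,\pi(\theta\mid\mathcal{Q}_n)\dt\theta$. Writing $\pi(\vartheta\mid\mathcal{Q}_n)$ for the induced density of $\vartheta$ on $\mathcal{T}_n$, this gives $\sqrt{n}(\overline\theta-\theta_n)=\int_{\mathcal{T}_n}\vartheta\,\pi(\vartheta\mid\mathcal{Q}_n)\dt\vartheta$. Theorem \ref{thm:two} with $p=1$ then bounds the norm of the difference between this posterior first moment and $\int_{\mathcal{T}_n}\vartheta\,N\{\vartheta;0,\Delta(\theta_\star)^{-1}\}\dt\vartheta$ by $\int_{\mathcal{T}_n}\|\vartheta\|\,|\pi(\vartheta\mid\mathcal{Q}_n)-N\{\vartheta;0,\Delta(\theta_\star)^{-1}\}|\dt\vartheta=o_p(1)$. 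Because $\theta_\star\in\mathrm{Int}(\Theta)$ and $\theta_n\to\theta_\star$, the set $\mathcal{T}_n$ eventually contains any fixed ball, so the Gaussian first moment $\int_{\mathcal{T}_n}\vartheta\,N\{\vartheta;0,\Delta(\theta_\star)^{-1}\}\dt\vartheta$ converges to the mean of a centered Gaussian, namely $0$. Hence $\sqrt{n}(\overline\theta-\theta_n)=o_p(1)$.

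For the second piece I would apply the master $Z$-estimator argument (e.g.\ \citealp{white1982maximum}), whose hypotheses are exactly Assumption \ref{ass:infeasible}(i)--(v): consistency $\theta_n\to\theta_\star$ follows from the uniform convergence in (i), the identification in Assumption \ref{ass:weight}(iii), and uniqueness of $\theta_\star$; the asymptotic linearization uses the equicontinuity in (v) together with the differentiability and invertibility in (iii) to yield $\sqrt{n}(\theta_n-\theta_\star)=\mathcal{H}_\star^{-1}\sqrt{n}\,\overline{m}_n(\theta_\star)+o_p(1)$. Combining with (iv) gives $\sqrt{n}(\theta_n-\theta_\star)\Rightarrow N\{0,\mathcal{H}_\star^{-1}\mathcal{I}_\star\mathcal{H}_\star^{-1}\}$, and by Slutsky together with the first piece, $\sqrt{n}(\overline\theta-\theta_\star)$ shares this limit. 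Finally I would verify that the stated sandwiched expression is only a rewriting of $\mathcal{H}_\star^{-1}\mathcal{I}_\star\mathcal{H}_\star^{-1}$: substituting $\Delta(\theta_\star)^{-1}=\mathcal{H}_\star^{-1}W(\theta_\star)\mathcal{H}_\star^{-1}$, both outer factors $\Delta(\theta_\star)^{-1}\mathcal{H}_\star W(\theta_\star)^{-1}$ and $W(\theta_\star)^{-1}\mathcal{H}_\star\Delta(\theta_\star)^{-1}$ collapse to $\mathcal{H}_\star^{-1}$, leaving exactly $\mathcal{H}_\star^{-1}\mathcal{I}_\star\mathcal{H}_\star^{-1}$.

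The main obstacle is the first piece: passing from the weighted total-variation convergence delivered by Theorem \ref{thm:two} to convergence of the posterior first moment. This step needs uniform integrability of $\|\vartheta\|$ under the Q-posterior as the domain $\mathcal{T}_n$ expands to $\mathbb{R}^{d_\theta}$, which is precisely why the $p=1$ version of Theorem \ref{thm:two} is required rather than the $p=0$ total-variation version alone; the $\|\vartheta\|$-weighting is exactly what controls the posterior tails and legitimizes exchanging limit and integral. Everything else --- the $Z$-estimator linearization and the matrix algebra --- is routine once consistency and the expansion are in hand.
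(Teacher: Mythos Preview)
Your proposal is correct and follows essentially the same route as the paper's proof: both decompose $\sqrt{n}(\overline\theta-\theta_\star)$ through the centering sequence $\theta_n$, use the $p=1$ conclusion of Theorem \ref{thm:two} to show $\sqrt{n}(\overline\theta-\theta_n)=o_p(1)$, and then invoke the asymptotic linearization of the $Z$-estimator $\theta_n$ together with Assumption \ref{ass:infeasible}(iv) to obtain the limiting normal law. The only cosmetic difference is that the paper writes the linearization directly in the form $Z_n=\Delta(\theta_\star)^{-1}\mathcal{H}_\star^\top W(\theta_\star)^{-1}\sqrt{n}\,\overline{m}_n(\theta_\star)$ and reads off the stated covariance immediately, whereas you obtain the standard sandwich $\mathcal{H}_\star^{-1}\mathcal{I}_\star\mathcal{H}_\star^{-1}$ and then verify algebraically that it coincides with the stated expression; since $\Delta(\theta_\star)^{-1}\mathcal{H}_\star W(\theta_\star)^{-1}=\mathcal{H}_\star^{-1}$, these are the same computation.
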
 

\begin{remark}
{\color{black}Theorem \ref{thm:two} and Corollary \ref{corr:one} together imply that credible sets obtained from $\pi(\theta\mid \mathcal{Q}_n)$ are calibrated so long as the learning rate satisfies $\omega\rightarrow1$ and $ W_n(\theta_\star)\xrightarrow{p}\mathcal{I}(\theta_\star)=\lim_{n\rightarrow+\infty}\text{Cov}\left\{\sqrt{n}\overline{m}_n(\theta_\star)\right\}.
$ To see this, note that when $\omega=1$ and $W(\theta_\star)=\mathcal{I}(\theta_\star)$, the asymptotic posterior variance in Theorem \ref{thm:two} is $\Delta(\theta_\star)^{-1}=[\mathcal{H}(\theta_\star)\mathcal{I}(\theta_\star)^{-1}\mathcal{H}(\theta_\star)^\top]^{-1}$, and the asymptotic variance of the posterior mean in Corollary \ref{corr:one} reduces to 
\begin{flalign*}
\Delta(\theta_\star)^{-1}\mathcal{H}(\theta_\star)W(\theta_\star)^{-1}\mathcal{I}(\theta_\star) W(\theta_\star)^{-1}\mathcal{H}(\theta_\star)^\top\Delta(\theta_\star)^{-1}&=\Delta(\theta_\star)^{-1}\mathcal{H}(\theta_\star)\mathcal{I}(\theta_\star)^{-1}\mathcal{H}(\theta_\star)^\top\Delta(\theta_\star)^{-1}\\&=\Delta(\theta_\star)^{-1},
\end{flalign*}so that the two coincide. As discussed in Remark \ref{rem:variance_estimator}, we can often ensure that $W(\theta_\star)=\mathcal{I}(\theta_\star)$ by taking $W_n(\theta)$ as the sample variance of $\overline{m}_n(\theta)$.}%
\end{remark}
The uniqueness of the solution $\theta_\star$ to the population estimating equations $0=m(\theta)$ is necessary to ensure that the posterior is asymptotically Gaussian; it is standard in the theoretical analysis of Bayesian methods, but does not always hold.

\subsubsection{Non-unique Identification}\label{sec:nonunique}
{\color{black}Since we are operating with arbitrary losses and 
possibly misspecified models, 
the equation $0=m(\theta)$ may admit multiple solutions; in this case a result similar to Theorem \ref{thm:two} should still be satisfied around each solution.}
%
%
%a result similar to Theorem \ref{thm:two} will be satisfied at each element in $\Theta_\star$. 
To derive such a result, recall that $\Delta(\theta):=\mathcal{H}(\theta)W(\theta)^{-1}\mathcal{H}(\theta)^\top$,  let $\lambda_{\text{min}}\{\Delta(\theta)\}$ be the smallest eigenvalue of $\Delta(\theta)$, and define $\Theta_\star:=\{\theta\in\Theta:0=m(\theta)\}$. 

%impose the following assumption on $\Theta_\star$.

%The following result demonstrates that, unlike Gaussian posterior corrections/approximations, the ACP remains a meaningful object even if the posterior is multi-modal. 

\begin{assumption}\label{ass:set}{\color{black}The set $\Theta_\star$ contains a finite number of distinct elements, and for some $c>0$, and each $\theta_\star\in\Theta_\star$, 
$0<c\le\lambda_{\text{min}}\{\Delta(\theta_\star)\}	\le1/c$. The elements in $\Theta_\star$ are well-separated, i.e., for any $\theta_{1,\star},\theta_{2,\star}\in\Theta_\star$, there exists an $\epsilon>0$ such that $\|\theta_{1,\star}-\theta_{2,\star}\|>\epsilon$.} 	
\end{assumption}

Similarly to the case where $\theta_\star$ is unique, we must choose a centering sequence to obtain the posterior shape. {\color{black}In general, a useful choice for this centering sequence would be the roots of $\overline{m}_n(\theta)$, i.e., $\theta_n\in \arg_{\theta\in\Theta}\{0=\overline{m}_n(\theta)\}$}. Assumption \ref{ass:infeasible} by itself does not imply that $\theta_n$ is consistent for any $\theta_\star\in\Theta_\star$, and without further assumptions $\theta_n$ cannot be used as a centering sequence to determine the posterior shape. %{\color{black}{An interpretable sufficient condition  for $\theta_n$ to be consistent for $\theta_\star$ is to require that $\overline{m}_n(\theta)$ be continuous differentiability, which is not maintained in Assumption \ref{ass:infeasible}.}} 
While it is straightforward to show the existence and consistency {\color{black}of $\theta_n\in\arg_{\theta\in\Theta}\{0=\overline{m}_n(\theta)\}$} under additional smoothness conditions on $\overline{m}_n(\theta)$, to maintain generality we impose the following high-level condition. 
\begin{assumption}\label{ass:cons} {\color{black}For each $\theta_\star\in\Theta_\star$, there exists a sequence $\theta_n\in\arg_{\theta\in\Theta}\{0=\overline{m}_n(\theta)\}$ such that $\|\theta_n-\theta_\star\|=o_p(1)$. }	
\end{assumption}

{\color{black}To ascertain the behavior of $\pi(\theta\mid \mathcal{Q}_n)$ when $\Theta_\star$ contains multiple elements, we analyze the local asymptotic behavior of $\pi(\theta\mid\mathcal{Q}_n)$ around a given root sequence $\theta_{n}\in\{0=\arg_{\theta\in\Theta}\overline{m}_n(\theta)\}$. Redefine $t:=\sqrt{n}(\theta-\theta_n)$ as the local parameter around the root $\theta_n$, and let $\pi(t\mid \mathcal{Q}_n)=\pi(\theta_n+t/\sqrt{n}\mid \mathcal{Q}_n)/{n}^{d_\theta/2}$ denote the posterior for $t$  in a shrinking neighborhood of $\theta_n$}. % For arbitrary $\theta_\star\in\Theta_\star$, let $\Delta=\Delta(\theta_\star)$. 
\begin{theorem}\label{thm:multi}
	If Assumptions \ref{ass:infeasible}-\ref{ass:cons} are satisfied at $\theta_\star$, {\color{black}then for  $\gamma_n>0$ such that $\gamma_n/\sqrt{n}=o(1)$, for some $C_\star\in[0,1]$,
	$
\left|\int_{\|t\|\le\gamma_n}{\pi}(t\mid \mathcal{Q}_n)\dt t-C_\star\int_{\|t\|\le\gamma_n}N\{t;0,[\omega\cdot\Delta(\theta_\star)]^{-1}\} \dt t\right|=o_{p}(1)
	$.}
\end{theorem}

Theorem \ref{thm:multi} demonstrates that if the score equations $0=m(\theta)$ admit multiple roots, say $\theta_{1,\star}$ and $\theta_{2,\star}$, {\color{black}then in a shrinking neighborhood of the root $\theta_{1,\star}$ (resp., $\theta_{2,\star}$) the ACP resembles -- \textit{but is not equal to} -- a  Gaussian density around $\theta_{1,\star}$ (resp., $\theta_{2,\star}$) with variance $\Delta(\theta_{1,\star})^{-1}$ (resp., $\Delta(\theta_{2,\star})^{-1}$). Hence, Theorem \ref{thm:multi} implies that if there are multiple roots then the ACP will asymptotically converge to a Gaussian mixture, with the specific value of $C_\star$ in Theorem \ref{thm:multi} accounting for this mixture formulation.  %with mixing weights determined by the number of elements in $\Theta_\star$, the value of the prior at each element in $\Theta_\star$, as well as the value of $\Delta(\theta)$ at these points. The specific value of $C_\star$ in Theorem \ref{thm:multi} accounts for this mixture formulation and changes from root-to-root. 
} 

%{\color{black} In these multi-modal situations, using an approximate Gaussian posterior correction (\citealp{muller2013risk}, \citealp{giummole2019objective}) or bootstrapping-based approaches to set the learning rate (\citealp{syring2019calibrating}) will likely deliver poor inferences due to the multi-modal nature of the posterior. As $\pi(\theta\mid \mathcal{Q}_n)$ does not assume anything regarding the multi-modality of the posterior, the ACP will deliver more reliable inference in these cases.} 

{\color{black}A consequence of the mixed-Gaussian limit implied by Theorem \ref{thm:multi} is that ACP posterior credible sets can only be represented as a highest probability density region (HPDR), and therefore do not admit a closed-form representation even asymptotically. %: the HPDR can only be represented as the union of several distinct sets around elements in $\Theta_\star$. 
This difficulty is due to the fact that in multimodal settings the shape of the HPDR depends on: 1) the number of elements in $\Theta_\star$ and their proximity to the boundary of $\Theta$; 2) the value of the prior at these elements; and 3) the value of $\Delta(\theta)^{-1/2}$ at the elements. Thus, in line with the analysis and recommendations of \citet{gustafson2012behaviour,gustafson2014behaviour}, in the multi-modal case the exact frequentist coverage of generic ACP credible sets can only be established on a case-by-case basis. }

{\color{black}
While obtaining a direct calibration guarantee for a generic ACP credible set of $\Theta_\star$ similar to that implied by Theorem \ref{thm:two} is not feasible without more structure on the problem under analysis, this \textit{does not imply that} the ACP is not calibrated in multi-modal settings: Theorem \ref{thm:multi} only implies that ACP credible sets do not admit a closed-form representation, as was the case in Theorem \ref{thm:two}. That being said, it is possible to construct ACP credible regions which account for the multimodal nature of the loss function and are calibrated. We demonstrate this using one construction, but note here that other constructions may deliver similar results. Let $t_k=\sqrt{n}(\theta-\theta_{k,n})$ denote a local parameter around the root $\theta_{k,n}$  (an estimator of $\theta_{k,\star}\in\Theta_\star$), and let $r_n\ge0$,  $r_n\rightarrow\infty$, be such that $r_n/\sqrt{n}=o(1)$; define a (local) quantile around the $k$-th root via $\gamma_{k,n}(\alpha)=\inf\{\gamma>0:\frac{\Pi\left\{\|t_{k}\|\le \gamma \mid\mathcal{Q}_n\right\}}{\Pi\left\{\|t_{k}\|\le r_n \mid\mathcal{Q}_n\right\}}\ge 1-\alpha\}$, and define the credible region $\mathcal{C}^{(n)}_{\Pi}(1-\alpha):=\bigcup_{k=1}^{K}\{t_k:\|t_k\|\le \gamma_{k,n}(\alpha/K)\}$. 
\begin{theorem}\label{thm:new}
Assume that $\Theta_\star=\{\theta_{1,\star},\dots,\theta_{K,\star}\}$. If Assumptions  \ref{ass:infeasible}--\ref{ass:cons} are satisfied at each  $\theta_{k,\star}\in\Theta_\star$,   then 
$
\lim_{n}P^{(n)}_0\left\{\Theta_\star\subseteq \mathcal{C}^{(n)}_{\Pi}(1-\alpha)\right\}\ge1-\alpha.
$
\end{theorem}
Theorem \ref{thm:new} show that reliable inference for $\Theta_\star$ is achievable when the ACP credible region accounts for the multimodal nature of the loss function, which is helpful as it is infeasible to deduce the exact level of coverage without further structure or assumptions. In addition, we note that $\mathcal{C}^{(n)}_{\Pi}(1-\alpha)$ coincides with the standard credible set in the case of unique identification. %Lastly, while $\mathcal{C}^{(n)}_{\Pi}(1-\alpha)$ is calibrated for $\Theta_\star$ (Theorem \ref{thm:new}), \textit{this does not imply that standard ACP credible sets are not calibrated}, only that they are not directly amenable to analytic evaluation. 
}

\subsubsection{Monte Carlo example: non-unique identification}\label{example:non_unique}
{\color{black}This section illustrates a prominent example where a standard ACP credible set delivers calibrated inferences, while Bayes does not; an additional example with similar conclusions is given in Appendix \ref{example:non_unique_tukey}. We assume that the observed data $y$ is generated from the two component Gaussian mixture model (GMM):
$$
y_i\stackrel{iid}{\sim}\eta \cdot N(\mu_1,\sigma_1^2)+(1-\eta) N(\mu_2,\sigma_2^2),
$$
where the means, $\mu=(\mu_1,\mu_2)^\top$, standard deviations, $\sigma=(\sigma_1,\sigma_2)^\top$, and mixture weight, $\eta$, are all unknown. We consider Gaussian $N(0,1)$ priors for both means, inverse Gaussian priors for the standard deviations, and a uniform prior on the mixture weight.

In GMMs the  parameter $\theta=(\mu^\top,\sigma^{\top},\eta)^\top$ is not uniquely identified since the corresponding mixture components can be switched without altering the value of the likelihood. In this way, non-unique identification arises from label switching: two maxima exist that correspond to permutations of component labels. Fundamentally, this is not a problem for Bayesian inference nor for the ACP as both methods deliver posteriors that concentrate onto the set of identified parameters.

While concentration of both the ACP and Bayes posterior will be robust in this multi-modal setting, it is also known that mixture models are sensitive to model deviations (\citealp{Gray1994}, \citealp{10.3150/18-BEJ1087}). Therefore, in this section we gauge the ability of the Bayes posterior and the ACP to deliver calibrated inferences on $\theta$ when the model is well-specified and misspecified. In the well-specified case observed data is generated from the GMM with $\theta=(1,-1,1,0.8,0.5)^\top$; in the misspecified case data is generated from a mixture model with one Gaussian component, $N(1,1)$, and one Student-t component, with mean $-1$, scale $0.8$, and six degrees of freedom. In both designs, we generate an equal proportion from both mixture components and observe $n=250$ data points.% This design was chosen so that the pseudo-true value remains unchanged in both cases.

To assess the coverage of standard Bayes and the ACP in this setting, we generated 100 replications from each DGP and recorded the frequentist coverage of the methods' 95\% HPD regions for $\theta=(1,-1,1,0.8,0.5)^\top$ in both settings. Since the posteriors can be multi-modal, we only report the methods coverage in Table \ref{tab:cover-comp}. Under correct model specification, both posteriors exhibit similar coverage, and are conservative in this small sample case. In contrast, under model misspecification the Bayes posterior exhibits significant under-coverage, whereas the ACP again delivers conservative inferences.
}
\begin{table}[H]
	\hspace*{-1.25cm} 			
	\centering
	{\footnotesize
		\begin{tabular}{rrrrrrrrrrr}
			\hline\hline
			 \multicolumn{4}{c}{Correct} & \multicolumn{1}{c}{} & \multicolumn{4}{c}{Misspec.}\\\hline\hline
			 & $\sigma_1$ & $\mu_1 $&$\sigma_2$& $\mu_2$ & $\eta $&$\sigma_1$&$\mu_1$ & $\sigma_2$ &$\mu_2$ &$\eta$\\ \hline
ACP& 1.00 & 0.98   & 1.00   & 1.00 & 1.00  & 1.00  & 0.98  & 0.98 & 0.98 & 1.00\\
SB & 1.00   & 0.98   & 0.98  & 1.00  &  1.00  &  0.79  & 0.91 & 0.88 & 0.96 &0.87 \\ \hline \hline
	\end{tabular}}
	\captionsetup{width=1.1\linewidth}
	\caption{Coverage results for the mixture model example: Correct refers to correct model specification, while Misspec. refers to the misspecified model setting. Results report are the actual frequentist coverage of the posteriors 95\% credible sets for the unknown parameter $\theta=(\mu^\top,\sigma^{\top},\eta)^\top$, where the true value is $\theta=(1,-1,1,0.8,0.5)^\top$. }
	\label{tab:cover-comp}
\end{table}

%\subsubsection{Large Sample Behavior}

%\section{Discussion and Conclusions}\label{sec:discuss}
\section{Discussion: Similar Approaches and Alternatives}\label{sec:discuss}

\subsection{Similarities with Other Gibbs Posteriors}

{\color{black}The ACP, $\pi(\theta\mid\mathcal{Q}_n)$, is the solution to the variational optimization problem in \eqref{eq:gibbs} under the specific loss function $\mathcal{Q}_n(\theta)$; so long as this loss satisfies the regularity conditions in Theorem \ref{thm:two} the ACP delivers asymptotically calibrated inferences. Although the ACP is formulated as a general belief update, posteriors with a similar functional form do appear elsewhere in the Bayesian literature.}

{\color{black}
When the KL divergence is used as the regularizer in \eqref{eq:gibbs}, the kernel of the ACP can be written in terms of a quadratic form in empirical scores, 
which resembles the exponentiated moment-criterion used in some `quasi-Bayesian' posteriors. When faced with parameters defined via a set of \textit{over-identified estimation equations for $\theta$}, i.e., more equations than unknown parameters, \cite{chernozhukov2003mcmc} propose to conduct a type of posterior inference on $\theta$ by considering the exponentiated negative quadratic form based on a vector of sample moments as a kernel in MCMC; for a related approach see \citet{chib2018bayesian}. When the moments take the specific form of estimating equations derived from a GLM, \cite{yin2009} proposes a related quasi-Bayesian posterior that can be applied ``as long as the moments are correctly specified''.}

% \cite{chernozhukov2003mcmc} propose to conduct a type of posterior inference on $\theta$ based on an arbitrary M-estimation criterion by considering the exponentiated loss function as a kernel in MCMC, which produces a generalized posterior in the sense of \cite{bissiri2016general}. However, when faced with parameters defined via a set of \textit{over-identified estimation equations for $\theta$}, i.e., more equations than unknown parameters, the authors propose the use of a quasi-posterior  formed from a quadratic form based on a vector of sample moments; for a related approach see \citet{chib2018bayesian}. When the moments take the form of estimating equations from a GLM, \cite{yin2009} proposes a related quasi-posterior that can be applied ``as long as the moments are correctly specified''.}

{\color{black}
The underlying philosophy and scope of quasi-Bayesian posteriors also differ from the ACP. Quasi-Bayesian posteriors begin with a set of correctly specified  moment conditions or estimating equations,  and treat the exponentiated (typically quadratic) criterion as a kernel in MCMC. In contrast, the ACP starts from an arbitrary loss and learning rate, which explicitly acknowledges possible model misspecification, and then uses the variational optimization problem in \eqref{eq:gibbs} to provide an asymptotically calibrated posterior update. Unlike quasi-posteriors, the general definition of the ACP as a variational optimizer based on a given loss and learning rate, ensures that this method can deliver asymptotically calibrated inferences across a wide range of losses arising from quasi-likelihoods, divergence-based discrepancies, and robust contamination losses. %In this sense, the ACP can be viewed as a loss-based generalization of quasi-Bayesian methods that is explicitly designed to deliver reliable uncertainty quantification.
}

{\color{black}Philosophically, Gibbs posteriors are based on conducting a form of Bayesian inference \textit{on a population loss minimizer of interest that is defined by a user-chosen loss function.} Our approach takes this idea one step further: by using the general belief update defined in \eqref{eq:gibbs}, we can produce a posterior that conducts inference on the same population loss minimizer of interest  but in such a way that the resulting posterior appropriately quantifies uncertainty. Further, unlike existing generalized Bayes methods, where the choice of the learning rate requires careful thought and can greatly impact uncertainty quantification, there exists a default choice for the learning rate in the ACP: any sequence of learning rates $\omega_n\rightarrow1$ will deliver asymptotically calibrated inferences.}

%\end{remark}

\subsection{Alternatives}\label{sec:discuss1}
In a likelihood context, when the model is correctly specified $\Sigma^{-1}=W(\theta_\star)^{-1}=\mathcal{I}^{-1}_\star$, and credible sets built from the ACP and the Bayes posterior will asymptotically coincide. If the model is misspecified the ACP still yields credible sets that are calibrated so long as $W_n(\theta_\star)$ is a consistent estimator of $\mathcal{I}_\star$. 
%
%
%The ACP approach is a significant departure from existing approaches to Bayesian inference in misspecified models. 
Two alternative approaches that seek to deliver similar coverage guarantees and which have received meaningful attention are the `sandwich' correction suggested in \cite{muller2013risk}, and the BayesBag approach of \cite{huggins2019robust}. 

{\color{black}\citeauthor{muller2013risk}'s approach amounts to transforming the draws from the Bayes posterior using the explicit Gaussian approximation: $\theta\sim N\{\bar\theta,n^{-1}\mathcal{H}_n(\bar\theta)^{-1}W_n(\bar\theta)\mathcal{H}_n(\bar\theta)^{-1}\}$, where $\bar\theta$ is the posterior mean; see \cite{giummole2019objective} for a related approach in the case of generalized posteriors built using scoring rules.} We argue that this \textit{ex-post correction} is suboptimal for several reasons: first, philosophically, it amounts to the subsequent application of additional inference methods to the output of a Bayesian learning algorithm, and has no representation as a belief update in the sense of \eqref{eq:gibbs}; second, it requires the explicit calculation of second-derivatives, which can be difficult and may be ill-behaved; third, this Gaussian approximation is poor when posteriors are not Gaussian, e.g., when the parameters have restricted support, in small sample sizes, or when the posterior is multi-modal{\color{black}, see Sections \ref{sec:poiss} and \ref{example:non_unique} for such examples}; fourth, without additional constraints, this correction can easily produce values of $\theta$ that lie outside the support of $\pi(\theta)$, for instance, when $\Theta$ is a bounded subset of $\mathbb{R}^{d_\theta}$.

Posterior bagging, as suggested in the BayesBag approach (\citealp{huggins2019robust}) is an alternative method that attempts to correct posterior coverage through bagging. Let $b=1,\dots,B$ denote bootstrap indices, and $y^{(b)}=(y_1^{(b)},\dots,y_n^{(b)})$ the $b$-th bootstrap sample, where $y_i^{(b)}$ is sampled with replacement from the original dataset. The BayesBag posterior is then given by 
$$
\pi^\star(\theta \mid y)\approx B^{-1}\sum_{b=1}^{B}\pi(\theta\mid y^{(b)}).
$$BayesBag is easy to use, but \cite{huggins2019robust} demonstrate that it is not calibrated in general; {\color{black}however, it does seem to deliver reasonable results in many empirical situations and can be tuned so that inferences are conservative}. Further, we note that it is unclear how to easily apply BayesBag to weakly dependent data, which are easily handled by the ACP (see Section \ref{sec:whittle} for one such example). 

Similarly to BayesBag, the bootstrapping approach to {\color{black}tuning the learning rate originally proposed by \cite{syring2019calibrating} may not deliver calibrated inferences unless the generalized matrix information equality is satisfied. However, in most applications, it is possible to tune the learning rate so that the corresponding credible sets are conservative  (\citealp{martin2022direct})}.

\section{Conclusion}\label{sec:conclusions}

{\color{black}We propose a new generalized (Gibbs) posterior which ensures that the resulting inferences are calibrated asymptotically.} All existing approaches of which we are aware attempt to correct the coverage of generalized posteriors using either ex-post correction of the posterior draws, which are ultimately based on some (implicit) normality assumption on the resulting posterior, or  bootstrapping approaches. In contrast, our approach delivers correct uncertainty quantification without any tuning or ex-post corrections.

When the likelihood is intractable and must be estimated, a version of the ACP can still be implemented; however, we must then resort to importance sampling, or sequential importance sampling, along with  Fisher's identity to estimate the gradients used in the ACP, which can then be used in MCMC schemes to produce posterior inference. In such cases, analyzing the behavior of the posterior becomes more difficult than the case where the scores are not estimated, and obtaining theoretical results similar to those in Theorem \ref{thm:two} is more onerous. Given the additional technicalities that are required to implement such an approach, we leave this topic for future research.

\subsubsection*{Supplementary Material}
The supplementary material for this submission contains additional empirical results, examples, and proofs of all technical results. %This includes additional empirical results for the linear and Poisson regression model examples, and a range of additional examples, including robust models of location, and time series models.
\subsubsection*{Acknowledgments}
David Frazier was supported by the Australian Research Council's Discovery Early Career Researcher Award funding scheme (DE200101070).  Christopher Drovandi was supported by the Australian Research Council Future Fellowships Scheme (FT210100260).  The authors thank Jeremias Knoblauch, and Jonathan Huggins for comments on a previous draft. The authors would also like to thank three anonymous referee's for their very helpful comments on a previous version of the draft.

 {% DON'T change the spacing!
	\spacingset{1.0} % DON'T change the spacing!
	\footnotesize
\bibliographystyle{apalike}
\bibliography{library}
}

	\begin{abstract}
	This supplementary material contains additional empirical results and proofs of all technical results. This includes additional empirical results for the linear and Poisson regression model examples, and a range of additional examples, including robust models of location, and time series models. %, and inference under Tukey's loss function. 

\end{abstract}

\spacingset{1.8} % DON'T change the spacing!

\appendix
\section{Intuition for Calibrated Inferences}\label{sec:soe}
In this section, we give an intuitive argument that explains why the posterior $\pi(\theta\mid \mathcal{Q}_n)$, which we refer to as the ACP, delivers calibrated inferences. To this end, we recall that 
$$
\mathcal{Q}_n(\theta):=\frac{1}{2}\log|W_n(\theta)|+n\cdot Q_n(\theta),\;\text{ where }Q_n(\theta):=\frac{1}{2}\cdot{\overline{m}_n(\theta)^\top}{}W_{n}(\theta)^{-1}{\overline{m}_n(\theta)}{}.
$$ Replacing $\MD_n(\theta)$ in the optimization problem defined in \eqref{eq:gibbs} with $\mathcal{Q}_n(\theta)$ produces the posterior 
\begin{equation}\label{eq:genpost}
	\pi(\theta\mid \mathcal{Q}_n):=\frac{M_n(\theta)^{-\frac{\omega}{2}}\exp\{- \omega\cdot n\cdot Q_n(\theta)\}\pi(\theta)}{\int_\Theta M_n(\theta)^{-\frac{\omega}{2}}\exp\{- \omega\cdot n\cdot Q_n(\theta)\}\pi(\theta)\dt\theta},
\end{equation}	
where $M_n(\theta)=|W_{n}(\theta)|$. We note here that we do not require $\mathcal{D}_n(\theta)$ to be differentiable everywhere, and that $\overline{m}_n(\theta)=n^{-1}\nabla_\theta \mathcal{D}_n(\theta)$ must only satisfy Assumption \ref{ass:infeasible}. This flexibility is useful if one wishes to handle losses such as the check function loss, which is not differentiable at specific points. As our goal in this section is to deliver intuition on why $\pi(\theta\mid\mathcal{Q}_n)$ is calibrated, in what follows we dispense with this technical requirement and assume that all derivatives which are written exist.

\subsection{Second-Order Expansions}\label{app:soe}

To understand why $\pi(\t\mid \mathcal{Q}_n)$  accurately quantifies uncertainty, first let us consider a second-order expansion of $Q_n(\theta)$ around the loss minimizer $\theta_n$, the solution to $0=\overline{m}_n(\theta)$. Expanding $Q_n(\theta)$ around $\theta_n$ up to the second order we obtain
\begin{flalign*}
	Q_n(\theta)=Q_n(\theta_n)+(\theta-\theta_n)^\top \nabla_\theta Q_n(\theta_n)&+\frac{1}{2}(\theta-\theta_n)^\top \nabla_\theta^2 Q_n(\theta_n) (\theta-\theta_n)\\&+\frac{1}{2}(\theta-\theta_n)^\top [\nabla_\theta^2 Q_n(\overline\theta_n)-\nabla_\theta^2 Q_n(\theta_n)] (\theta-\theta_n)+R_n(\theta),    
\end{flalign*}where $\overline\theta_n$ is a line-by-line intermediate value between $\theta$ and $\theta_n$. 

The derivatives in this expansion are relatively simple, but rather tedious to write, so we will instead consider the scalar case $\theta\in\mathbb{R}$ and use this to generalize to the case of vector $\theta\in\mathbb{R}^{d_{\theta}}$. In the scalar case, the first derivative in the expansion is given by 
$$
\nabla_\theta Q_n(\theta)=[\nabla_\theta\overline{m}_n(\theta)]W_n^{-1}(\theta)\overline{m}_n(\theta)-\frac{1}{2}\overline{m}_n(\theta)^\top W_n(\theta)^{-1}\nabla_\theta W_n(\theta) W_{n}^{-1}(\theta)\overline{m}_n(\theta),
$$
while the second derivative is more complicated, and given by 
\begin{flalign*}
	\nabla_\theta^2 Q_n(\theta)=[\nabla_\theta^2\overline{m}_n(\theta)]W_n^{-1}(\theta)\overline{m}_n(\theta)+&[\nabla_\theta\overline{m}_n(\theta)]W_n^{-1}(\theta)[\nabla_\theta\overline{m}_n(\theta)]\\&-2\overline{m}_n(\theta)^\top W_n(\theta)^{-1}\nabla_\theta W_n(\theta) W_{n}^{-1}(\theta)m_n(\theta)\\&+\overline{m}_n(\theta)^\top W_n(\theta)^{-1}\nabla_\theta W_n(\theta)W_n(\theta)^{-1}\nabla_\theta W_n(\theta) W_{n}^{-1}(\theta)m_n(\theta)\\&-\frac{1}{2}\overline{m}_n(\theta)^\top W_n(\theta)^{-1}\nabla_\theta^2 W_n(\theta)W_n(\theta)^{-1}\overline{m}_n(\theta).
\end{flalign*} %The above derivatives can be generalized to the vector case by concatenating the relevant dimensions. 

Now, since $0=\overline{m}_n(\theta)$, when evaluated at $\theta_n$, the first derivative in the expansion becomes
\begin{flalign*}
	\nabla_\theta Q_n(\theta_n)&=[\nabla_\theta\overline{m}_n(\theta_n)]^{\top}W_n^{-1}(\theta_n)\overline{m}_n(\theta_n)-\frac{1}{2}\overline{m}_n(\theta_n)^\top W_n(\theta_n)^{-1}\nabla_\theta W_n(\theta_n) W_{n}^{-1}(\theta_n)\overline{m}_n(\theta_n)\\&=0.
\end{flalign*}To obtain the form of the second term in the expansion, note that if any $\overline{m}_n(\theta_n)$ appears in any term, then this term is zero. Consequently, the only term that remains in the second derivative when evaluated at $\theta_n$ is $[\nabla_\theta\overline{m}_n(\theta)]W_n^{-1}(\theta)[\nabla_\theta\overline{m}_n(\theta)]$, so that 
\begin{flalign*}
	\nabla_\theta^2 Q_n(\theta)=    [\nabla_\theta\overline{m}_n(\theta)]W_n^{-1}(\theta)[\nabla_\theta\overline{m}_n(\theta)]=\mathcal{H}_n(\theta)W_n(\theta)^{-1}\mathcal{H}_n(\theta).
\end{flalign*}
Generalizing the above to the vector case delivers the second-order expansion
\begin{flalign*}
	Q_n(\theta)&=Q_n(\theta_n)+(\theta-\theta_n)^\top \nabla_\theta^2 Q_n(\theta_n) (\theta-\theta_n)+\frac{1}{2}(\theta-\theta_n)^\top [\nabla_\theta^2 Q_n(\overline\theta_n)-\nabla_\theta^2 Q_n(\theta_n)] (\theta-\theta_n)\\&=   \frac{1}{2} (\theta-\theta_n)^\top [\nabla_\theta\overline{m}_n(\theta)]^\top W_n^{-1}(\theta)[\nabla_\theta\overline{m}_n(\theta)](\theta-\theta_n)+R_n(\theta)\\&=\frac{1}{2}(\theta-\theta_n)^\top\mathcal{H}_n(\theta_n)^\top W_n(\theta_n)^{-1}\mathcal{H}_n(\theta_n)(\theta-\theta_n)+R_n(\theta).
\end{flalign*}

To apply the above within the Gibbs posterior $\pi(\theta\mid \mathcal{Q}_n)$, note that 
\begin{flalign*}
	\pi(\theta\mid \mathcal{Q}_n)=&\frac{M_n(\theta)^{-\frac{\omega}{2}}\exp\{-\omega\cdot n\cdot Q_n(\theta)\}\pi(\theta)}{\int M_n(\theta)^{-\frac{\omega}{2}}\exp\{-\omega\cdot n\cdot Q_n(\theta)\}\pi(\theta)\dt\theta}\\=&\frac{M_n(\theta)^{-\frac{\omega}{2}}\exp\{-\omega\cdot n\cdot \left[Q_n(\theta)-Q_n(\theta_n)\right]\}\pi(\theta)}{\int M_n(\theta)^{-\frac{\omega}{2}}\exp\{-\omega\cdot n\cdot \left[Q_n(\theta)-Q_n(\theta_n)\right]\}\pi(\theta)\dt\theta}\\=&\frac{M_n(\theta)^{-\frac{\omega}{2}}\exp\{-\omega\cdot n\cdot \left[\frac{1}{2}(\theta-\theta_n)^\top\mathcal{H}_n(\theta_n)^\top W_n(\theta_n)^{-1}\mathcal{H}_n(\theta_n)(\theta-\theta_n)-R_n(\theta)\right]\}\pi(\theta)}{\int M_n(\theta)^{-\frac{\omega}{2}}\exp\{-\omega\cdot n\cdot \left[\frac{1}{2}(\theta-\theta_n)^\top\mathcal{H}_n(\theta_n)^\top W_n(\theta_n)^{-1}\mathcal{H}_n(\theta_n)(\theta-\theta_n)-R_n(\theta)\right]\}\pi(\theta)\dt\theta}
	\\\propto & M_n(\theta)^{-\frac{\omega}{2}}\exp\left\{-\frac{\omega}{2}\sqrt{n}(\theta-\theta_n)^\top\mathcal{H}_n(\theta_n)^\top W_n(\theta_n)^{-1}\mathcal{H}_n(\theta_n)\sqrt{n}(\theta-\theta_n)-\omega\cdot n\cdot R_n(\theta)\right\}\pi(\theta)
\end{flalign*}

Defining $\Delta_n(\theta)=\mathcal{H}_n(\theta)^\top W_n(\theta)^{-1}\mathcal{H}_n(\theta)$, we can state this more compactly as 
$$
\pi(\theta\mid \mathcal{Q}_n)\propto \underbrace{e^{-\frac{\omega n}{2}(\theta-\theta_n)^\top{\Delta_n(\theta)}(\theta-\theta_n)}}_{\mathrm{Term (1)}}\underbrace{\pi(\theta)M_n(\theta)^{-\frac{\omega}{2}}e^{-\omega\cdot n R_n(\theta)}}_{\mathrm{Term (2)}}
$$We can immediately recognize that Term (1) is the kernel of a Gaussian density with mean $\theta_n$ and variance $[(n\omega)\Delta_n(\theta_n)]^{-1}$. Therefore, if we can show that Term (2) can be suitably controlled, then in large samples the first term will dominate and $\pi(\theta\mid \mathcal{Q}_n)$ will behave as if it were a Gaussian density with mean $\theta_n$ and variance $[(n\omega)\Delta_n(\theta_n)]^{-1}$. Further, the variance term  $[(n\omega)\Delta_n(\theta_n)]^{-1}$ is precisely the sandwich form that delivers calibrated uncertainty quantification. The bulk of the technical arguments in Theorem \ref{thm:two} are centered around demonstrating that these additional terms do not meaningfully alter the behavior of $\pi(\theta\mid\mathcal{Q}_n)$ in large samples. 

Now, if we have unique identification, then we can show that $\theta_n=\theta_\star+o_p(1)$, and $W_n(\theta_\star)=\mathcal{I}(\theta_\star)+o_p(1)$, so that the following convergence follows: $$\Delta_n^{-1}(\theta_n)=[\mathcal{H}_n(\theta_n) W_n(\theta_n)^{-1}\mathcal{H}_n(\theta_n)]^{-1}=\mathcal{H}(\theta_\star)^{-1}\mathcal{I}(\theta_\star)\mathcal{H}(\theta_\star)^{-1}+o_p(1)=\Sigma_\star+o_p(1).$$ In this case, the $1-\alpha$ credible sets formed from $\pi(\theta\mid \mathcal{Q}_n)$ are elliptical and asymptotically equivalent to the set:
$$
%\left\{\theta\in\Theta:(\theta-\theta_n)^\top \frac{\Delta_n(\theta_n)^{-1}}{(n\omega)}(\theta-\theta_n)\right\}= 
\left\{\theta\in\Theta:(\theta-\theta_n)^\top \frac{\Sigma_\star}{(n\omega)}(\theta-\theta_n)\le c_{1-\alpha}\right\},
$$ for some threshold $c_{1-\alpha}$.  The above set is the usual ``Wald-type'' frequentist confidence set for testing the null hypothesis $H_0:\theta=\theta_\star$. As a consequence, in large samples,  credible sets for $\theta_\star$ based on $\pi(\theta\mid \mathcal{Q}_n)$ will be calibrated. 
\subsection{Specific Example: Generalized Linear Models}\label{sec:gml_app}
Given a specific class of loss functions for $\theta$, we can derive explicit formulas for the ACP $\pi(\theta\mid \mathcal{Q}_n)$ that can be used to analyze the asymptotic shape of its confidence sets. In what follows, we carry out such a discussion for the class of quasi-likelihood loss functions. 

For $i=1,\dots,n$ we observe responses $y_i\in\mathbb{R}$, and covariates  $x_i\in\mathbb{R}^{d_\theta}$, with $x_{i,1}=1$. Our goal is inference on the unknown regression parameter $\theta$ in the generalized linear model (GLM):
\begin{equation*}
	\E(y_i\mid x_i)=\mu_i=g^{-1}(x_i^\top\theta),\quad \text{var}(y_i\mid x_i)=V(\mu_i;\psi),
\end{equation*}where $g(\cdot)$ is a strictly monotone and differentiable link function, and $V(\cdot)$ is a positive and continuous variance function  with dispersion parameter $\psi$. We have prior beliefs $\pi(\theta)\propto 1$, while we treat $\psi$ as a hyper-parameter. In this case, inference on $\theta$ can often be carried out using a quasi-log-likelihood loss, which delivers the cumulative loss  function
$$
\mathcal{D}_n(\theta) = -\sum_{i=1}^{n}\int^{\mu_i}_{-\infty} \frac{y_i - t}{V(t;\psi)} , \dt t.
$$
From here, it is straightforward to give the first and second order derivatives of the loss and to deduce the shape of the resulting posterior. 

In this case, the average gradient of $\mathcal{D}_n(\theta)$, i.e., $\overline{m}_n(\theta)$, is given by
$$
\overline{m}_n(\theta)
= -\frac{1}{n}\sum_{i=1}^n x_i\frac{\{y_i - \mu_i(\theta)\}}{V[\mu_i(\theta);\psi]} \cdot \mu_i'(x_i^\top\theta),
$$ where $\mu_i'(x_i^\top\theta)=\nabla_{\eta}\mu_i(\eta){|}_{\eta=x_i^\top\theta}$. The Hessian is given by 
$$
\mathcal{H}_n(\theta)= -\sum_{i=1}^n \left\{
\left[ -\frac{[\mu_i'(\eta_i)]^2}{V(\mu_i)} - \frac{(y_i - \mu_i) V'(\mu_i)}{V(\mu_i)^2} [\mu_i'(\eta_i)]^2 + \frac{y_i - \mu_i}{V(\mu_i)} \mu_i''(\eta_i) \right] x_i x_i^\top
\right\},
$$where $V'(\mu_i)=\nabla_\mu V(\mu)|_{\mu=g^{-1}(x_i^\top\theta)}$. It is also possible to give an explicit formula for $W_n(\theta)$ in this setting: for $\varepsilon_i(\theta)=\mu_i'(x^\top_i\theta)\{y_i-\mu_i(\theta)\}/V\{\mu_i;\psi\}$, we have 
$$
W_n(\theta)=\frac{1}{n}\sum_{i=1}^{n}x_ix_i^\top\epsilon_i(\theta)^2.
$$

Using these components, for $\omega=1$, the ACP has the straightforward expression 
$$
\pi(\theta\mid \mathcal{Q}_n)\propto |W_n(\theta)|^{-\frac{\omega}{2}}e^{-\frac{\omega}{2}\left\{\frac{1}{\sqrt{n}}\sum_{i=1}^n x_i\varepsilon_i(\theta)\right\}^\top \left\{\frac{1}{n}\sum_{i=1}^{n}x_i\varepsilon_i(\theta)^2x_i^\top\right\}^{-1}\left\{\frac{1}{\sqrt{n}}\sum_{i=1}^n x_i\varepsilon_i(\theta)\right\}}\times \pi(\theta).
$$

Two primary cases explored in the main text are the linear and Poisson regression. In such cases, the posterior given above can be simplified using the following. 
\begin{enumerate}
	\item Linear Regression: Take $\mu(\eta)=\eta=x^\top\theta$, and $V[\mu;\psi]=1$, so that $\varepsilon_i(\theta)=(y_i-x_i^\top\theta)$.
	\item Poisson Regression: Take $\mu(\eta)=e^\eta$, and $V[\mu;\psi]=\mu=e^\eta=e^{x^\top\theta}$, so that $\varepsilon_i(\theta)=(y_i-\mu(x_i^\top\theta))$.
\end{enumerate}

As pointed out to us by an anonymous referee, in the special case of GLMs estimated under quasi-likelihood loss, the ACP resembles the Bayesian generalized method of moments (GMM) approach proposed by \cite{yin2009}. In GLMs, \cite{yin2009} proposed a quasi-posterior based on the GMM criterion function constructed using the quasi-likelihood estimating equations $\overline{m}_n(\theta)$. While the two posteriors are indeed similar in this context, they will differ even when $\omega=1$ as the two posterior kernels differ by the determinant term $|W_n(\theta)|^{-\omega/2}$. More generally, since all we require for the ACP to deliver well-calibrated inferences is that the tuning parameter satisfies $\omega_n\rightarrow 1$, the two posteriors will be distinct in finite samples. We refer to Section 2.2 and 5.1 in the main paper for a more detailed comparison.

\section{Additional Results: Regression models}
\subsection{Normal Linear Regression}\label{app:sup_results_reg}
This section contains complete numerical results for the case of $d=20$ covariates for the normal linear regression model considered in Section \ref{sec:reg} of the main text. Please refer to Section \ref{sec:reg} for full details on the numerical experiments. 
\begin{table}[H]
	\hspace*{-1.25cm} 			
	\centering
	{\footnotesize
		\begin{tabular}{rrrrrrrrrrrrr}
			\hline\hline
			& \multicolumn{3}{c}{ACP} & \multicolumn{3}{c}{SB} & \multicolumn{3}{c}{HrB}& \multicolumn{3}{c}{PostCorr}\\\hline\hline
			& Bias & Var &Cover& Bias & Var &Cover&Bias & Var &Cover&Bias & Var &Cover\\ \hline
			$\beta_1$ & 0.0009 & 0.001 &  0.97 &  0.0009  & 0.001 &  0.97 &  0.0016 &  0.0014 &  0.97 &  0.0009 &  0.001 &  0.96 \\
			$\beta_2$ & -0.0025 &  0.0011 &  0.97 &  -0.0024 &  0.001 &  0.97 &  -0.0029 &  0.0014 &  0.98 &  -0.0025 &  0.001 &  0.97 \\
			$\beta_3$ & 0.0057 &  0.001 &  0.94 &  0.0056 &  0.001 &  0.94 &  0.0049 &  0.0015 &  0.94 &  0.0056 &  0.001 & 0.93 \\
			$\beta_4$ & 0.0059 &  0.0011 &  0.93 &  0.0059 &  0.001 &  0.93 &  0.0057 &  0.0014 &  0.94 &  0.0059 &  0.001 &  0.91 \\
			$\beta_5$ & -0.0024 &  0.001 & 0.95 &  -0.0023 &  0.001 &  0.93 &  -0.0022 &  0.0013 &  0.97 &  -0.0023 &  0.001 &  0.92 \\
			$\beta_6$ & 0.0009 &  0.001 &  0.97 &  0.0008 & 0.001 & 0.96 & 0.0003 & 0.0015 & 0.98 & 0.0008 & 0.001 & 0.95 \\
			$\beta_7$ & -0.0033 & 0.001 & 0.93 & -0.0032&  0.001&  0.93&  -0.0029&  0.0013&  0.94&  -0.0032&  0.001&  0.93 \\
			$\beta_8$ & -0.0004&  0.001&  0.96&  -0.0007&  0.001&  0.96&  -0.0006&  0.0014&  0.97&  -0.0006&  0.001&  0.96 \\
			$\beta_9$ & -0.003&  0.001&  0.97&  -0.0029&  0.001&  0.96&  -0.0033&  0.0014&  0.97&  -0.0029&  0.001&  0.97 \\
			$\beta_{10}$ & 0.0 &  0.001&  0.93&  0.0002&  0.001&  0.93&  -0.0003&  0.0014&  0.94&  0.0002&  0.001&  0.93 \\
			$\beta_{11}$ & 0.0047&  0.001&  0.96&  0.0046&  0.001&  0.96&  0.0056&  0.0013&  0.97&  0.0046&  0.001&  0.96 \\
			$\beta_{12}$ & 0.0029&  0.001&  0.95&  0.0026&  0.001&  0.92&  0.0029&  0.0013&  0.95&  0.0026&  0.001&  0.94 \\
			$\beta_{13}$ & 0.0026&  0.001&  0.96&  0.0025&  0.001&  0.95&  0.0037&  0.0014&  0.98&  0.0025&  0.001&  0.95 \\
			$\beta_{14}$ & 0.0073&  0.001&  0.93&  0.0075&  0.001&  0.91&  0.0076&  0.0013&  0.94&  0.0075&  0.001&  0.93 \\
			$\beta_{15}$ & 0.0056&  0.001&  0.95&  0.0056&  0.001&  0.95&  0.0056&  0.0014&  0.95&  0.0056&  0.001&  0.95 \\
			$\beta_{16}$ & -0.0035&  0.001&  0.96&  -0.0033&  0.001&  0.96&  -0.0027&  0.0014&  0.95&  -0.0033&  0.001&  0.96 \\
			$\beta_{17}$ & 0.0002&  0.001&  0.96&  0.0006&  0.001&  0.96&  0.0008&  0.0014&  0.96&  0.0006&  0.001&  0.96 \\
			$\beta_{18}$ & -0.0017&  0.001&  0.98&  -0.0018&  0.001&  0.97&  -0.0018&  0.0014&  0.97&  -0.0018&  0.001&  0.97 \\
			$\beta_{19}$ & 0.0055&  0.001&  0.95&  0.0056&  0.001&  0.95&  0.0056&  0.0014&  0.96&  0.0056&  0.001&  0.95 \\
			$\beta_{20}$ & 0.0032&  0.001&  0.97&  0.0032&  0.001&  0.97&  0.003&  0.0014&  0.97&  0.0032&  0.001&  0.97 \\ \hline \hline
		\end{tabular}
		\caption{Accuracy results in the normal linear regression model for standard Bayes (SB), HrB-posterior, ACP and the Gaussian posterior correction method (PostCorr) when $d=20$ and $\gamma = 0$ (i.e.\ correctly specified).  Bias is the bias of the posterior mean across the replications. Var is the average posterior variance deviation across the replications. Cover is the actual posterior coverage, where the nominal level is set to 95\% for the experiments.}}
	\label{tab:reg_d20_correct}
\end{table}

\begin{table}[H]
	\hspace*{-1.25cm} 			
	\centering
	{\footnotesize
		\begin{tabular}{rrrrrrrrrrrrr}
			\hline\hline
			& \multicolumn{3}{c}{ACP} & \multicolumn{3}{c}{SB} & \multicolumn{3}{c}{HrB} & \multicolumn{3}{c}{PostCorr}\\\hline\hline
			& Bias & Var &Cover& Bias & Var &Cover&Bias & Var &Cover&Bias & Var &Cover\\ \hline
			$\beta_1$ & 0.0011 &  0.001 &  0.96 &  0.001 &  0.001 &  0.95 &  -0.0006 &  0.0012&  0.97&  0.0011&  0.001&  0.97 \\
			$\beta_2$ & -0.0026&  0.0017&  0.99&  -0.0028&  0.001&  0.91&  -0.0028&  0.0019&  0.94&  -0.0028&  0.0016&  0.97 \\
			$\beta_3$ & 0.007&  0.0017&  0.94&  0.0071&  0.001&  0.87&  0.0056&  0.0018&  0.9&  0.0071&  0.0017&  0.94 \\
			$\beta_4$ & 0.0062&  0.0011&  0.94&  0.006&  0.001&  0.93&  0.0056&  0.0014&  0.94&  0.006&  0.001&  0.93 \\
			$\beta_5$ & -0.0008&  0.001&  0.97&  -0.0007&  0.001&  0.96&  -0.0024&  0.0013&  0.99&  -0.0006&  0.001&  0.96 \\
			$\beta_6$ & 0.0017&  0.001&  0.97&  0.0016&  0.001&  0.96&  0.0019&  0.0017&  0.98&  0.0016&  0.001&  0.96 \\
			$\beta_7$ & -0.0025&  0.001&  0.95&  -0.0023&  0.001&  0.95&  -0.0027&  0.0014&  0.95&  -0.0023&  0.001&  0.95 \\
			$\beta_8$ & -0.0008&  0.001&  0.95&  -0.0009&  0.001&  0.95&  -0.0007&  0.0014&  0.97&  -0.0009&  0.001&  0.95 \\
			$\beta_9$ & -0.0033&  0.001&  0.98&  -0.0035&  0.001&  0.96&  -0.002&  0.0014&  0.99&  -0.0035&  0.001&  0.98 \\
			$\beta_{10}$ & -0.0008&  0.001&  0.92&  -0.0005&  0.001&  0.91&  0.0005&  0.0014&  0.94&  -0.0005&  0.001&  0.91 \\
			$\beta_{11}$ & 0.0045&  0.0011&  0.96&  0.0045&  0.001&  0.96&  0.0053&  0.0014&  0.98&  0.0045&  0.001&  0.96 \\
			$\beta_{12}$ & 0.0024&  0.001&  0.94&  0.0021&  0.001&  0.92&  0.0033&  0.0013&  0.97&  0.0021&  0.001&  0.93 \\
			$\beta_{13}$ & 0.0025&  0.001&  0.96&  0.0026&  0.001&  0.96&  0.0005&  0.0014&  0.94&  0.0026&  0.001&  0.95    \\
			$\beta_{14}$ &  0.0063&  0.001&  0.95&  0.0067&  0.001&  0.95&  0.0078&  0.0013&  0.94&  0.0067&  0.001&  0.94    \\
			$\beta_{15}$ &  0.0043&  0.001&  0.94&  0.0046&  0.001&  0.94&  0.0038&  0.0014&  0.95&  0.0046&  0.001&  0.94 \\
			$\beta_{16}$ &  -0.0042&  0.0011&  0.97&  -0.0038&  0.001&  0.97&  -0.0029&  0.0014&  0.98&  -0.0038&  0.001&  0.97 \\
			$\beta_{17}$ &  0.0002&  0.001&  0.95&  0.0004&  0.001&  0.95&  0.0005&  0.0013&  0.95&  0.0004&  0.001&  0.95 \\
			$\beta_{18}$ &  -0.0011&  0.001&  0.99&  -0.0012&  0.001&  0.98&  -0.0037&  0.0013&  0.96&  -0.0012&  0.001&  0.97 \\
			$\beta_{19}$ &  0.0059&  0.001&  0.98&  0.006&  0.001&  0.93&  0.0054&  0.0013&  0.97&  0.006&  0.001&  0.94 \\
			$\beta_{20}$ &  0.0037&  0.001&  0.96&  0.0039&  0.001&  0.96&  0.0019&  0.0014&  0.97&  0.0039&  0.001&  0.97 \\ \hline \hline
		\end{tabular}
		\caption{Accuracy results in the normal linear regression model for standard Bayes (SB), HrB-posterior and ACP when $d=20$ and $\gamma = 2$ (i.e.\ misspecified).  Bias is the bias of the posterior mean across the replications. Var is the average posterior variance deviation across the replications. Cover is the actual posterior coverage, where the nominal level is set to 95\% for the experiments.} }
	\label{tab:reg_d20_misspecified}
\end{table}

\subsection{Additional Results Poisson Regression}\label{app:sup_results_poiss}

This section contains complete numerical results for the case of $d_\theta=10$ and $d_\theta=20$ covariates for the Poisson regression model considered in Section \ref{sec:poiss} of the main text. Please refer to Section \ref{sec:poiss} for full details on the numerical experiments. 
\begin{table}[H]
	\hspace*{-1.25cm} 			
	\centering
	{\footnotesize
		\begin{tabular}{rrrrrrrrrrrrr}
			\hline\hline
			& \multicolumn{3}{c}{ACP} & \multicolumn{3}{c}{SB} & \multicolumn{3}{c}{GB}& \multicolumn{3}{c}{PostCorr}\\\hline\hline
			& Bias & Var &Cover& Bias & Var &Cover&Bias & Var &Cover&Bias & Var &Cover\\ \hline
			$\theta_1$         &        -0.0144  &             0.055   &  0.92     &           -0.0100      &          0.037  &    0.88      &      -0.0100    &        0.055   &   0.93 & -0.0009&  0.0073 & 0.93\\
			$\theta_2$        &          0.0136   &            0.032   &  0.96     &            0.01319       &         0.022    &  0.94        &     0.0132        &    0.032     & 0.97 &0.0013&   0.0055  & 0.97 \\
			$\theta_3$       &         -0.0045   &            0.032   &  0.95     &           -0.0041     &           0.021   &   0.88      &      -0.0040     &       0.032   &   0.95& -0.0004&    0.0055 & 0.95\\
			$\theta_4$       &         -0.0016   &            0.032    & 0.95      &          -0.0022      &          0.022    &  0.88      &      -0.0022      &      0.032    &  0.98&  -0.0002&   0.0055 & 0.96  \\
			$\theta_5$       &          -0.0040  &             0.031   &  0.93     &           -0.0040    &            0.021  &    0.89     &       -0.0040     &       0.032   &   0.95&  -0.0004&   0.0054& 0.93 \\
			$\theta_6$      &            0.0059   &            0.032  &   0.96     &            0.0060     &           0.021  &    0.92      &       0.0060       &     0.032   &   0.96&  0.0005&  0.0055  & 0.95\\
			$\theta_7$       &          0.0008    &           0.031   &  0.94       &          0.0087       &         0.021    &  0.91        &     0.0080          &  0.032    &  0.94&  0.0001&   0.0055 & 0.96 \\
			$\theta_8$      &            0.0074   &            0.031  &   0.97       &          0.0075      &          0.021    &  0.93       &      0.0075         &   0.031    &  0.96&  0.0007&   0.0055& 0.96 \\
			$\theta_9$      &            0.0026   &            0.032  &   0.98      &           0.0027     &           0.022   &   0.95     &        0.0026        &    0.032   &   0.97&  0.0003&   0.0056& 0.97  \\
			$\theta_{10}$    &             0.0003   &            0.032&     0.95      &           0.0004    &            0.021   &   0.86     &        0.0003        &    0.032  &    0.94& 0.0001&  0.0056& 0.91\\\hline\hline
		\end{tabular}
		\captionsetup{width=1.1\linewidth}
		\caption{Results in the Poisson regression model for standard Bayes (SB), ACP, {\color{black}the generalized Bayes approach of \cite{agnoletto2023bayesian} (GB) based on the quasi-likelihood,  and the Gaussian posterior correction (PostCorr). The true value of the displayed parameter is $(\theta_1,\theta_2,\theta_3)=(3.5,0.5,-0.5,0,\dots,0)$, where for both $d_\theta=10$ and $d_\theta=20$, only the first three coefficients are non-zero. See Table \ref{tab:reg} for a detailed description of the table entries.}{\color{black} Bias has been rounded to three decimal places}.}
		\label{tab:poiss_d10_correct}
	}
\end{table}

\begin{table}[H]
	\hspace*{-1.25cm} 			
	\centering
	{\footnotesize
		\begin{tabular}{rrrrrrrrrrrrr}
			\hline\hline
			& \multicolumn{3}{c}{ACP} & \multicolumn{3}{c}{SB} & \multicolumn{3}{c}{GB}& \multicolumn{3}{c}{PostCorr}\\\hline\hline
			& Bias & Var &Cover& Bias & Var &Cover&Bias & Var &Cover&Bias & Var &Cover\\ \hline
			$\theta_1$      &          -0.0157           &    0.056&     0.95       &         -0.0124  &              0.037  &    0.90  &          -0.0112         &   0.056     &   0.97& -0.0014& 0.0076& 0.87 \\
			$\theta_{2}$     &           -0.0050         &      0.033  &   0.94     &           -0.0048    &            0.022 &     0.88    &        -0.0048      &      0.034   &   0.95& 0.0004 & 0.0060& 0.94    \\
			$\theta_{3}$      &           0.0020        &       0.033&     0.96   &              0.0025        &        0.022   &   0.93        &     0.0025      &      0.033 &     0.96&  0.0003& 0.0060&  0.93   \\
			$\theta_{4} $     &           0.0060      &         0.033&     0.96                & 0.0056          &      0.022   &   0.92          &   0.0053   &         0.034   &   0.96&  0.0005& 0.0060&  0.93   \\
			$\theta_{5}$       &          0.0066    &           0.033&     0.94              &   0.0066            &    0.022  &    0.89 &            0.0066           & 0.033     & 0.94&  0.0012&  0.0060&     0.98   \\
			$\theta_{6}$       &          0.0026  &             0.033 &    1.00            &     0.0028   &             0.022   &   0.94    &         0.0029          &  0.033 &     0.99&  -0.0003&    0.0060&   0.89  \\
			$\theta_{7}$        &         0.0064 &              0.033  &   0.98         &        0.0063    &            0.022   &   0.93     &        0.0063       &     0.033  &    0.98&  -0.0003&    0.0060&  0.94   \\
			$\theta_{8}$       &         -0.0033 &              0.032  &   0.97      &          -0.0033     &           0.022  &    0.89      &      -0.0036   &         0.033  &    0.97&  0.0002& 0.0060&  0.91   \\
			$\theta_{9}$        &         0.0059&               0.032    & 0.94     &            0.0062         &       0.022    &  0.91           &  0.0061    &        0.033  &    0.94&  -0.0001&    0.0060& 0.86    \\
			$\theta_{10}$       &        -0.0060          &     0.033&     0.96   &             -0.0061  &              0.022   &   0.92 &           -0.0059           & 0.033 &     0.95&  -0.0003&    0.0060&   0.89  \\
			$\theta_{11}$         &      -0.0068        &       0.033&     0.98  &              -0.0067     &           0.022   &   0.95     &       -0.0068          &  0.033     & 0.97&  0.0001&     0.0060&   0.90  \\
			$\theta_{12}$    &           -0.0080      &         0.033 &    0.98   &             -0.0076     &           0.022&      0.97      &      -0.0084      &      0.033&      0.98&  -0.0006&     0.0059&  0.87 \\
			$\theta_{13}$     &           0.0041     &          0.033 &    0.99 &                0.0043       &         0.022     & 0.95         &    0.00418      &      0.033    &  0.98&  -0.0014&     0.0059&  0.93 \\
			$\theta_{14}$      &          0.0127     &          0.033  &   0.98                & 0.0126           &     0.022    &  0.95             &0.01281      &      0.033   &   0.98&  0.0014&    0.0059&    0.92   \\
			$\theta_{15}$       &         0.0065   &            0.033 &    0.93             &    0.0073            &    0.022    &  0.88  &           0.0077  &          0.033   &   0.92&  0.0011&    0.0059&    0.96   \\
			$\theta_{16}$        &       -0.0023 &              0.032 &    0.95          &      -0.0022              &  0.022  &    0.90    &        -0.0022          &  0.033    &  0.94&  0.0007&    0.0059&     0.95  \\
			$\theta_{17}$          &      0.0046 &              0.033  &   0.95        &         0.0046   &             0.022&      0.92       &      0.0047         &   0.033     & 0.95&  0.0013&    0.0059&      0.92 \\
			$\theta_{18}$           &    -0.0080  &             0.033 &    0.95     &           -0.0080    &            0.022      &0.89        &    -0.0078     &       0.033     & 0.94&  -0.0010&    0.0059 &     0.89  \\
			$\theta_{19}$   &            -0.0036 &              0.033 &    0.96    &            -0.0023      &          0.022    &  0.95          &  -0.0023  &          0.033      &0.96&  0.0016&   0.0059&        0.93 \\
			$\theta_{20} $   &            0.0045&               0.033  &   0.98    &             0.0044         &       0.022  &    0.94             &0.0045   &         0.033  &    0.98&	0.0001	&  0.0060&  0.92 \\\hline\hline
		\end{tabular}
		\caption{Accuracy results in the Poisson regression model for exact Bayes, ACP, and generalized Bayes (GenBayes) based on the quasi-likelihood.  Bias is the bias of the posterior mean across the replications and has been multiplied by 1000 for readability. Var is the average posterior variance deviation across the replications and has been multiplied by 100 for readability. Cover is the actual posterior coverage, where the nominal level is set to 95\% for the experiments.}}
	\label{tab:poiss_d20_correct}
\end{table}

\newpage

\section{Additional Examples}This section applies the ACP to several examples that have been treated in the generalized Bayesian literature. 

\subsection{Additional Non-unique Identification Example }\label{example:non_unique_tukey}
{This section illustrates the theoretical behavior of the ACP derived in Theorem \ref{thm:multi}. We assume that the observed data $y$ is generated from a normal distribution with unknown mean $\theta$ and known variance $\sigma^2$, which we fix at the estimated median absolute deviation for simplicity. However, we also believe the data is contaminated with outliers, and so we follow \cite{jewson2021general} and conduct generalized Bayesian inference on $\theta$ using Tukey's bi-weight loss: for $u_i=(y_i-\theta)/\sigma$, with $\sigma=\sqrt{\text{mad}(y_1,\dots,y_n)}$,
	$$
	\rho(u_i)
	=\begin{cases}
		\displaystyle\frac{\kappa^2}{6}\Bigl\{1-\bigl(1-\tfrac{u_i^2}{\kappa^2}\bigr)^3\Bigr\}
		&|u_i|\le \kappa\\
		\displaystyle\frac{\kappa^2}{6}&|u_i|>\kappa
	\end{cases}.
	%\qquad
	%\psi(u)=\rho'(u)
	%=\begin{cases}
		%u\Bigl[1-\bigl(\tfrac{u}{c}\bigr)^2\Bigr]^2,&|u|\le c,\\
		%0,&|u|>c.
		%\end{cases}
		$$
		%In the case of no covariates, Tukey's loss for a single observation $y$ is given by
		%\begin{align*}
		%	q\left(y ; \theta, \kappa\right) &= \begin{cases}\frac{1}{2} \log \left(2 \pi \phi\right)+\frac{\left(y-\mu\right)^2}{2 \phi}-\frac{\left(y-\mu\right)^4}{2 \phi^2 \kappa^2}+\frac{\left(y-\mu\right)^6}{6 \phi^3 \kappa^4}, & \text { if }\left|y-\mu\right| \leq \kappa \sqrt{\phi} \\ \frac{1}{2} \log \left(2 \pi \phi\right)+\frac{\kappa^2}{6}, & \text { otherwise }\end{cases}  
		%\end{align*}
		Tukey's loss depends on the unknown mean $\theta$, and the hyper-parameter $\kappa$, which controls the tails of the loss. The value of $\kappa$ artificially deflates extreme data points, such as outliers, to ensure that these points do not unduly influence inferences. Interestingly, for certain data and depending on the value of $\kappa$, Tukey's loss admits multiple minimizers: if the observed data is a mixture of two well-separated groups then, for certain choices of $\kappa$, Tukey's loss produces multiple global minima that depend on the nature of the mixture.
		
		In this example, we conduct inference on $\theta$ in the assumed model and observe $n=500$ independent draws from a two-component Gaussian mixture model:  $0.5 \cdot\mathcal{N}(1,0.25) + 0.5\cdot \mathcal{N}(-1,0.25)$. Fixing $\kappa=1$, one can show that Assumption \ref{ass:set} is satisfied at three distinct points: $\theta=0$, $\theta=-1$ and $\theta=1$. Each of these minima are meaningful and interpretable: $\theta=\pm1$ correspond to the means of the mixture components, and $\theta=0$ corresponds to the population mean of the observed data.
		
		We compare the ability of the ACP and a Gaussian posterior correction to the generalized posterior - based on Tukey's loss - to produce calibrated inference. We specifically measure the ability of these methods to accurately capture the three population minimizers within their credible regions, which we calculate as a 95\% highest-density region of the posterior. Figure \ref{fig:tukey_posteriors} plots the ACP and the Gaussian posterior correction (PostCorr) for a representative sample of data. The results show that the posterior correction may not contain any of the three modes, while the ACP places nearly equal weight on all modes. This same behavior is observed in repeated samples: over 100 repeated samples, the ACP credible set for $\Theta_\star=\{-1,0,1\}$ exhibits 98\% frequentist coverage, while the PostCorr set contains $\Theta_\star$ in just 5\% of the replications. Further, the marginal coverage of PostCorr for the three individual points in $\Theta_\star$ is just 7\%, 31\% and 8\%, respectively. 
	}

	\begin{figure}[!htp]
		\centering
		\includegraphics[width=100mm,height=50mm]{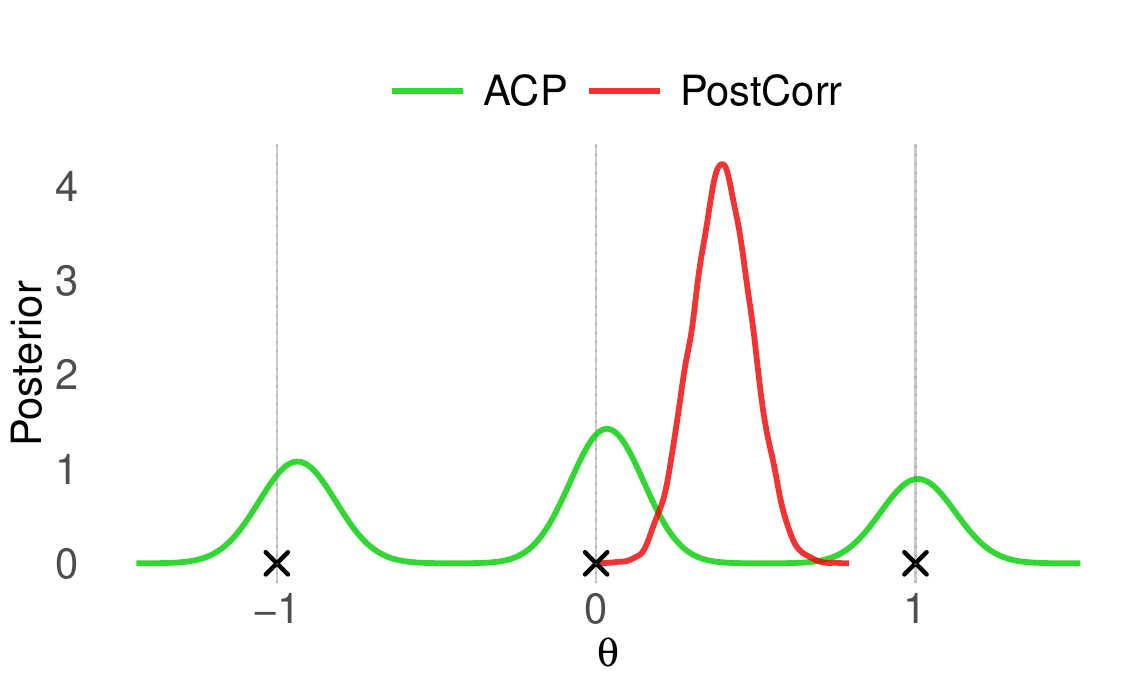}
		\captionsetup{width=1.1\linewidth}
		\caption{Univariate posterior distributions for the mean parameter under Tukey's loss for ACP (green) and the Gaussian posterior correction (PostCorr) (red) when $\Theta_\star=\{-1,0,1\}$ (marked by crosses on the $x$-axis).}
		\label{fig:tukey_posteriors}
	\end{figure}
	
	%We consider the same example as in Section 5.1 of \citet{jewson2021general}, where the true data generating process involves taking $n$ independent draws from a mixture model,  $0.9 \mathcal{N}(0,1) + 0.1 \mathcal{N}(5,3)$.  The assumed model for the data is $\mathcal{N}(\mu,\phi)$, so that $\theta = (\mu,\phi)^\top$, and the assumed model is correct up to a small proportion of outliers.  As in \citet{jewson2021general}, we consider datasets of size $n=500$, and set the priors as $\mu \sim \mathcal{N}(0,5)$ and $\phi \sim \mathrm{Exp}(1)$.

	\subsection{Robust Location Inference}\label{ex:med}
	Consider observing a sequence $y_1,\dots,y_n$ from the model 
	$$
	y_i=\theta+\epsilon_i,\quad i=1,\dots,u,\quad \epsilon_i\stackrel{iid}{\sim}F(\cdot)
	$$where $F(\cdot)$ is unknown, the unknown parameter $\theta$ has prior density $\pi(\theta)$ and is independent of $\epsilon$. Our goal is posterior inference on $\theta$, and for reasons of robustness we follow \cite{doksum1990consistent}, and consider posterior inference based on the sample median $T_n=\text{med}(y_1,\dots,y_n)$, which is known to be robust to outliers in the data; for a related approach see \cite{lewis2021bayesian}. When $F$ has density $f$, and $n$ is odd, \cite{doksum1990consistent} show that the posterior $\pi(\theta\mid T_n)$ is
	$$
	\pi(\theta\mid T_n)  = \pi(\theta)\exp\left\{\frac{1}{2}(n-1)\log F(T_n-\theta)\{1-F(T_n-\theta)\}+\log f_\theta(T_n)(1-F(T_n-\theta)))\right\}.
	$$
	
	Given the above form of the posterior, \cite{miller2021asymptotic} suggests using generalized Bayesian inference for $\theta$ with the simpler loss function 
	$
	\mathcal{D}_n(\theta)=-\frac{1}{2}\log F(T_n-\theta)\{1-F(T_n-\theta)\}
	$. While \cite{miller2021asymptotic} demonstrates that such a posterior is well-behaved in large samples, the resulting posterior does not have correct coverage even when the model is correctly specified. It is simple to show that the same result, i.e., inaccurate posterior coverage, applies if one uses the original posterior $\pi(\theta\mid T_n)$ proposed in \cite{doksum1990consistent}, or the more general approach suggested in \cite{lewis2021bayesian}. 
	
	In contrast, if we conduct inference using the ACP based on the loss $\mathcal{D}_n(\theta)=\ell_n(\theta)$, then the resulting posterior is well-behaved and correctly quantifies uncertainty. The ACP requires calculating the gradient of $\mathcal{D}_n(\theta)$ with respect to $\theta$, given by
	$$
	\overline{m}_n(\theta)=\frac{1}{2n}\frac{f(T_n-\theta)}{F(T_n-\theta)}-\frac{1}{2}\frac{f(T_n-\theta)}{1-F(T_n-\theta)},
	$$ and the variance of $\sqrt{n}\overline{m}_n(\theta)$, for any $\theta\in\Theta$. For $W_n(\theta)$ denoting an estimator of $\text{var}\{\sqrt{n}\overline{m}_n(\theta)\}$,  the ACP is
	$$
	\pi(\theta\mid \mathcal{Q}_{n})\propto \pi(\theta) \exp\left[-\frac{1}{2n}\left\{\frac{1}{2}\frac{f(T_n-\theta)}{F(T_n-\theta)}-\frac{1}{2}\frac{f(T_n-\theta)}{1-F(T_n-\theta)}\right\}^2 /W_n(\theta)\right].
	$$
	
	A simple estimator for $\text{var}\{\sqrt{n}\overline{m}_n(\theta)\}$ can be obtained using bootstrap replications of the median, denoted as $\{T_n^{(b)}:b=1,\cdots,B\}$. Such an estimator is extremely fast since $\{T_n^{(b)}:b=1,\cdots,B\}$ are only generated once outside of the sampling algorithm. For any value of $\theta$, we can then take $W_n(\theta)$ as the sample variance of the observations $\{\overline{m}_n(\theta){|}_{T_n=T_n^{(b)}}:b=1,\dots,B\}$, which only requires evaluating the closed-form gradient $B$ times and then taking the sample variance of the evaluations.\footnote{Such an approach is similar to the estimating equations bootstrap, see \citealp{hu2000estimating} and \citealp{chatterjee2005generalized} for a discussion.} Consequently, the steps necessary to calculate $W_n(\theta)$ are precisely the same as forming any sample variance estimator.

	We now compare the uncertainty quantification produced using the generalized and ACPs in correctly and misspecified models. In both cases, we assume $F(\cdot)$ is a standard Gaussian distribution. In the first experiment, referred to as DGP1, the observed data is generated from a Gaussian distribution with mean $\theta=1$, and variance 4. In the misspecified regime, referred to as DGP2, we generate observed data from a mixed Gaussian distribution with parameterization
	$
	0.9N(\theta=1,4)+0.1N(0,1).
	$ 
	
	In the first case the true median is unity, while in the second case the true median of the data is approximately $0.84$; {this value is found numerically by inverting the CDF of the mixture distribution.} We simulate 1000 observed datasets from both DGPs and compare the results of generalized Bayes and that based on the ACP. Table \ref{tab:med} compares the posterior means, variances and coverage of the generalized Bayes and the ACP procedures. The results demonstrate that the generalized posterior proposed in \cite{miller2021asymptotic} does not produce reliable coverage for the true median, while the coverage of the ACP is again close to the nominal level. 
	
	\begin{table}[H]
		\centering
		{\footnotesize
			\begin{tabular}{lrrrrrr}
				\hline\hline
				& \multicolumn{3}{c}{ACP} & \multicolumn{3}{c}{GenBayes} \\
				\textbf{DGP1}  & Bias & Var &Cover& Bias & Var &Cover \\ \hline
				$\theta$ &   -0.0031  &  0.0654 &   0.9420  & -0.0031  &  0.0152   & 0.6700\\
				%& \multicolumn{3}{c}{ACP} & \multicolumn{3}{c}{Exact}   \\ 
				\textbf{DGP2}  & Bias & Var & Cover& Bias & Var &Cover \\ \hline
				$\theta$ &     -0.0115  &  0.0607  &  0.9360  & -0.0118  &  0.0157 &   0.7160
				\\
				\hline\hline
		\end{tabular}}
		\caption{Posterior accuracy results for the median using the generalized posterior (GenBayes) and ACP.  Bias is the bias of the posterior mean across the replications. Var is the average posterior variance deviation across the replications. Cover is the posterior coverage (95\% nominal coverage). For GenBayes we set $\omega=1$ in all experiments.}
		\label{tab:med}
	\end{table}

	\subsection{Approximate Inference in Time Series Models}\label{sec:whittle}
	
	This example shows how the ACP can provide calibrated Bayesian inferences when the assumed data generating process is correct, but an approximate likelihood is used to speed-up computation.  Furthermore, this example highlights how automatic differentiation can be used to calculate the scores that are not easily accessible analytically.
	
	Let $\{X_{t}\}_{t=1}^{n}$ be a discretely observed, zero-mean, random variable generated according to an  autoregressive fractionally integrated moving average (ARFIMA$(2,d,1)$) model 
	\begin{align*}
		(1-\phi_1L-\phi_2L^2)^d X_t = (1-\vartheta_1L)\epsilon_t,\quad \theta=(\phi_1,\phi_2,\vartheta_1,d)^\top, 
	\end{align*}
	where $L$ is the lag operator, and $\epsilon_t \sim \mathcal{N}(0,\sigma^2)$. We consider observed data generated from the ARFIMA($2,d,1$) model with true parameter $\theta_\star = (\phi_{1\star}, \phi_{2\star}, \vartheta_{1,\star}, d_\star)^\top = (0.45, 0.1, -0.4, 0.4)^\top$ and we set $n = 15,000$.  
	
	For simplicity, we impose independent uniform priors on the components of $\theta$, i.e., for $\mathcal{U}(a,b)$ denoting a uniform random variable with support $(a,b)$, $\phi_1 \sim \mathcal{U}(-1,1)$, $\phi_2 \sim \mathcal{U}(-1,1)$, $\vartheta_1 \sim \mathcal{U}(-1,1)$ and $d \sim \mathcal{U}(-0.5,0.5)$.  \citet{bon2021accelerating} impose stationarity constraints on the parameters, but given the large sample size investigated here, the prior has minimal influence.

	The likelihood function of the ARFIMA model for large $n$ is computationally intensive \citep{Chan1998}.  Consequently, we approximate the likelihood by the Whittle likelihood \citep{whittle1953estimation} to form an approximate posterior; see, also, \citet{salomone2020spectral} and \citet{bon2021accelerating} for a similar approach. We compare the approximate posterior and the ACP based on the Whittle likelihood, and show that the ACP delivers reliable uncertainty quantification, while the approximate posterior has poor uncertainty quantification.

	The Whittle likelihood operates with the data and covariance function in the frequency domain, as opposed to the time domain, and is based on the spectral density of the model, and the periodogram. For a frequency $\omega\in[-\pi,\pi]$, the spectral density of the ARFIMA($2,d,1$) for $d\in(-0.5,0.5)$ is
	$$
	f_\theta(\omega)=\frac{\sigma^2}{2 \pi} \frac{|1-\vartheta_1 \exp (i \omega)|^2}{\left|\left\{1-\phi_1 \exp (i \omega)-\phi_2 \exp (2 i \omega)\right\}\{1-\exp (i \omega)\}^d\right|^2}.
	$$The periodigram of the data at the frequencies $\{\omega_{k}: k = -\lceil n/2 \rceil+1,\ldots, \lfloor n/2 \rfloor\}$ is 
	$$
	\mathcal{F}(\omega_{k}) = \frac{|J(\omega_{k})|^2}{n},\quad 	J(\omega_{k}) = \frac{1}{\sqrt{2\pi}}\sum_{t = 1}^{n}X_{t}\exp(-i \omega_{k}t),\quad  \omega_{k} = \frac{2\pi k}{n},
	$$where $	J(\omega_{k})$ is the  discrete Fourier transform (DFT) of the time series. The Whittle log-likelihood for zero-mean data is then (\citealp{whittle1953estimation})
	$$
	\ell_{\text{whittle}}(\theta) = -\sum_{k = -\lceil n/2 \rceil+1}^{\lfloor n/2 \rfloor}\left(\log f_{\theta}(\omega_{k}) + \frac{\mathcal{F}(\omega_{k})}{f_{\theta}(\omega_{k})}\right).
	$$
	In practice, it is only necessary to calculate the summation over about half of the Fourier frequencies, $\omega_k$, due to symmetry about $\omega_{0}=0$ and since $f_{\theta}(\omega_{0}) = 0$ for centred data.
	
	The periodogram can be calculated in $\mathcal{O}(n \log n)$ time, and only needs to be calculated once per dataset. Given the periodogram, the cost of each subsequent likelihood evaluation is $\mathcal{O}(n)$, compared to the usual likelihood cost for time series (with dense precision matrix) which is $\mathcal{O}(n^2)$.
	
	The Whittle likelihood effectively treats $\mathcal{F}(\omega_{k})$ for each $\omega_{k}$ as being an independent exponential random variate with mean $f_{\theta}(\omega)$.  Given the assumed independence, it makes the ACP easy to apply in principle.  However, the score for each component of $\ell_{\text{whittle}}(\theta)$ is not easy to obtain analytically.  Herein, we use automatic differentiation in \texttt{Julia} \citep{Julia-2017}, specifically the \texttt{ForwardDiff.jl} library \citep{RevelsLubinPapamarkou2016} to compute the scores $\overline{m}_n(\theta)$.
	
	We use the \texttt{AdaptiveMCMC.jl} library \citep{vihola2014ergonomic} to draw samples from the posterior under the Whittle approximation and the ACP approach.   So long as the model is correctly specified, and under appropriate regularity conditions and restrictions on the true parameters, point-estimates obtained using the Whittle likelihood converge asymptotically to the true value generating the data  (see, e.g., \citealp{fox1986large}, and \citealp{dahlhaus1989efficient}).  Thus, given the large sample size considered here, the pseudo-true parameter and the true parameter coincide.  Table \ref{tab:whittle_results} gives accuracy measures based on 100 independent datasets (we use the \texttt{arfima} package \citep{Veenstra2012} in \texttt{R} \citep{R2022} to simulate datasets).  The results show that the Whittle approximation has low bias and MSE, but produces posterior approximations that are over-concentrated, as the poor coverage rates show.  In contrast, the ACP approximations exhibit substantially more accurate uncertainty quantification, but still having biases that are similar to those based on the Whittle likelihood.
	
	\begin{table}[!htp]
		\centering
		{\footnotesize	
			\begin{tabular}{lcccc}
				\hline
				Method & MSE & Bias & St.\ Dev.\ & Coverage (90\%)\\ 
				\hline
				\hline
				\multicolumn{5}{c}{$\phi_1$}  \\
				\hline  
				Whittle & 0.0028  & -0.0055  & 0.030 & 72  \\ 
				ACP &  0.0038 & -0.0024  & 0.044  & 93  \\ 
				\hline
				\multicolumn{5}{c}{$\phi_2$}  \\
				\hline  
				Whittle & 0.0013   & 0.00031  & 0.021  & 74  \\ 
				ACP & 0.0017  & -0.00038  & 0.029  & 91  \\ 
				\hline
				\multicolumn{5}{c}{$\vartheta_1$}  \\
				\hline  
				Whittle & 0.0021  & -0.0011  & 0.026  & 73  \\ 
				ACP & 0.0027  & 0.00024  & 0.037  & 89  \\ 
				\hline
				\multicolumn{5}{c}{$d$}  \\
				\hline  
				Whittle & 0.00078  & 0.0027  &  0.016 & 72  \\ 
				ACP & 0.0012  & 0.00061 & 0.025  & 88  \\ 
				\hline
			\end{tabular}
		}
		\caption{Results for the Whittle example using 100 independent datasets simulated with true parameter value $\theta = (\phi_1, \phi_2, \vartheta_1, d)^\top = (0.45, 0.1, -0.4, 0.4)^\top$. MSE denotes mean squared error, Bias the average bias, St.Dev. the average standard deviation, and Coverage is the actual coverage rate of the credible sets. }
		\label{tab:whittle_results}
	\end{table}
	
	In Figure \ref{fig:arfima_posteriors} we compare the approximate posteriors for the Whittle likelihood and ACP for a single dataset, the same one used in \citet{bon2022bayesian} generated with the same true parameter considered here.  We also compare the approximate posteriors with the true posterior generated in \citet{bon2022bayesian}.  It is evident that the posterior approximations based on the Whittle likelihood are overconcentrated compared to the true posterior.  In contrast, the ACPs are quite close to the true posteriors for this particular dataset.

	\begin{figure}[!htp]
		\centering
		\includegraphics[scale=0.7]{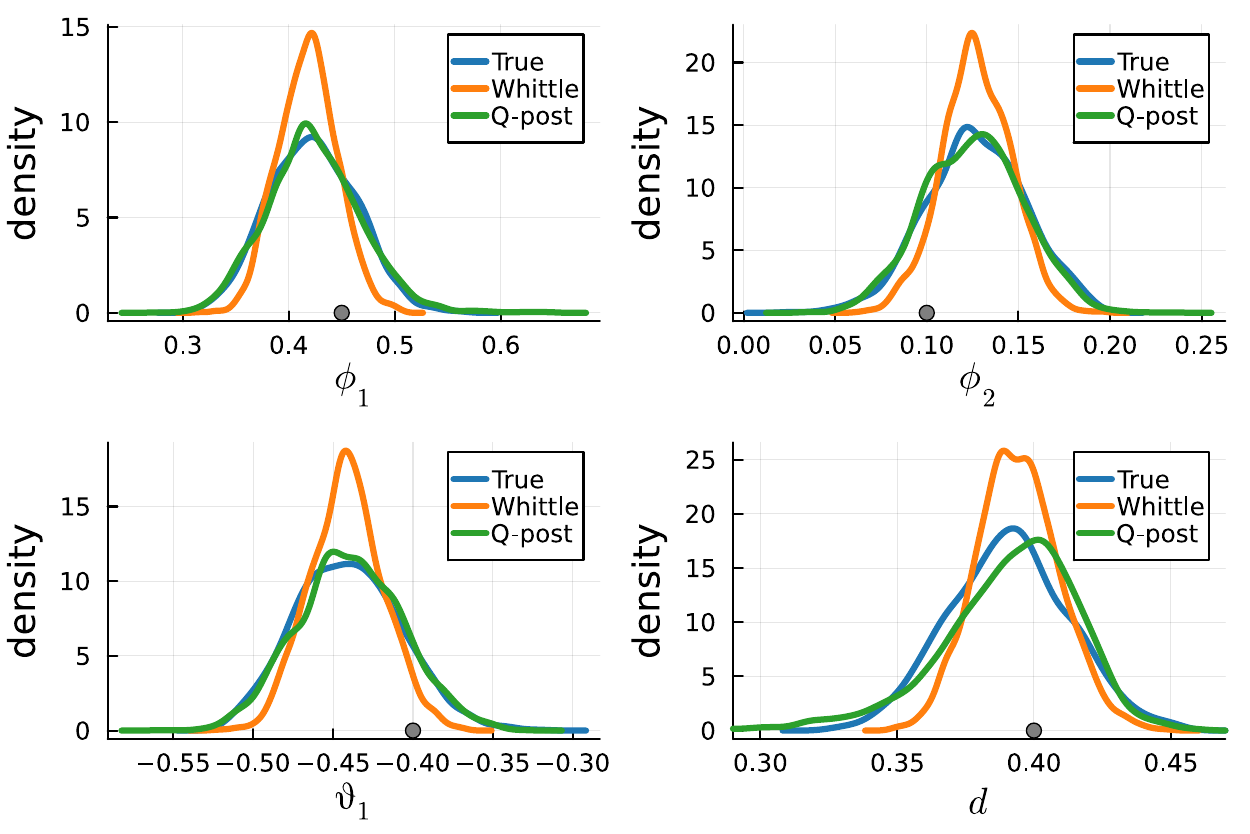}
		\caption{Univariate densities estimates of approximations to the posterior distribution for a single dataset generated from an ARFIMA with true parameter value $\theta = (\phi_1, \phi_2, \vartheta_1, d)^\top = (0.45, 0.1, -0.4, 0.4)^\top$.  Shown are posterior approximations based on the Whittle likelihood (orange), ACP (green) and the exact likelihood (blue).}
		\label{fig:arfima_posteriors}
		
	\end{figure}

	On a laptop computer with an Intel i7-12800H processor (3.70 GHz clock speed) and 32GB RAM the approximate timing for a single likelihood evaluation with the Whittle likelihood, Q-score and true likelihood are 0.001, 0.01 and 0.17 seconds.  The true likelihood uses the \texttt{arfima} package in \texttt{R}.

	\section{Proofs of Main Results}
	This section contains proofs of the stated results in the main paper. We first establish the following additional notation used throughout the remainder.  For two (possibly random) sequences $a_n,b_n$ we say that  $a_n\lesssim b_n$ if for some $n'$ large enough, and all $n\ge n'$, there exists a $C>0$ such that $a_n\le C b_n$ (almost surely); while we write  $a_n\asymp b_n$ if $a_n\lesssim b_n$ and $b_n\lesssim a_n$ (almost surely). Throughout, $\|\cdot\|$ denotes the Euclidean norm for vectors or a convenient matrix norm for matrices.
	
	%\onehalfspacing
	\subsection{Lemmas }\label{app:Lemmas}
	
	The following lemmas are used to prove the main results. To simplify the statement and derivation of results, redefine $Q_n(\theta)$ as 
	$$
	Q_n(\theta)=-\frac{n}{2}\overline{m}_n(\theta)^\top W_n(\theta)^{-1}\overline{m}_n(\theta). 
	$$While this definition is not as easily interpretable as the $Q_n(\theta)$ defined in equation \eqref{eq:genpost}, it simplifies our manipulations. 
	
	\begin{lemma}\label{lemma:freq}
		Under Assumptions \ref{ass:infeasible}-\ref{ass:cons}, the following are satisfied for some $\theta_\star\in\Theta_\star$, and $\theta_n\in\arg_{\theta\in\Theta}\{0=\overline{m}_n(\theta)\}$. 
		
		\noindent1. $\|\theta_n-\theta_\star\|=O_p(n^{-1/2})$.  
		
		\noindent2. For any $\delta_n=o(1)$, $T_n=\{\theta\in\Theta,\theta_\star\in\Theta_\star:\|\theta-\theta_\star\|\leq \delta_n\}$, and $t={\sqrt{n^{}}}(\theta-\theta_\star),\; \theta\in T_n$,
		$$
		Q_n(\theta_\star+t/{\sqrt{n^{}}})-Q_n(\theta_\star)=t^{\top}\mathcal{H}_n(\theta_\star)W_n(\theta_\star)^{-1}\sqrt{n}\overline{m}_n(\theta_\star)-\frac{1}{2}t^{\top}\Delta(\theta_\star)t+R_n(\theta_\star+t/{\sqrt{n^{}}}), 
		$$for a remainder term $R_n(\theta)$ satisfying 
		$$
		\sup_{\|\theta-\theta_\star\|\le\delta_n}\frac{R_n(\theta)}{1+\|\sqrt{n}(\theta-\theta_\star)\|^2}=o_p(1).
		$$
	\end{lemma}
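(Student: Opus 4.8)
The plan is to handle the two parts in sequence, using the stochastic equicontinuity condition Assumption~\ref{ass:infeasible}(v) as the workhorse that lets us linearize $\overline{m}_n$ without ever differentiating it. Throughout I write $\mathcal{H}:=\mathcal{H}(\theta_\star)$, $W_\star:=W_n(\theta_\star)$, and $Z_n:=\sqrt{n}\,\overline{m}_n(\theta_\star)$, noting that $Z_n=O_p(1)$ by the central limit theorem in Assumption~\ref{ass:infeasible}(iv).

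For Part~1, I would start from the defining relation $\overline{m}_n(\theta_n)=0$ in Assumption~\ref{ass:cons} and decompose $0=\overline{m}_n(\theta_\star)+\{\overline{m}_n(\theta_n)-\overline{m}_n(\theta_\star)\}$. Consistency $\|\theta_n-\theta_\star\|=o_p(1)$ places $\theta_n$ inside the $\delta$-ball with probability tending to one, so Assumption~\ref{ass:infeasible}(v) applies and gives $\overline{m}_n(\theta_n)-\overline{m}_n(\theta_\star)=\{m(\theta_n)-m(\theta_\star)\}+\eta_n$ with $\|\eta_n\|\le\epsilon\,(n^{-1/2}+\|\theta_n-\theta_\star\|)$ on a high-probability event. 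Differentiability of the limit (Assumption~\ref{ass:infeasible}(iii)) together with $m(\theta_\star)=0$ then yields $m(\theta_n)-m(\theta_\star)=-\mathcal{H}(\theta_n-\theta_\star)+o(\|\theta_n-\theta_\star\|)$. Solving for $\theta_n-\theta_\star$ via invertibility of $\mathcal{H}$ and bounding the right-hand side by $\|\overline{m}_n(\theta_\star)\|=O_p(n^{-1/2})$ plus terms of order $o(\|\theta_n-\theta_\star\|)+\epsilon\|\theta_n-\theta_\star\|$, I would close the self-referential inequality by taking $\epsilon$ small (so the coefficient of $\|\theta_n-\theta_\star\|$ stays below one) and absorbing the $o(\cdot)$ term for $n$ large. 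This is the standard rate argument and delivers $\|\theta_n-\theta_\star\|=O_p(n^{-1/2})$.

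For Part~2, with the redefined $Q_n(\theta)=-\tfrac{n}{2}\overline{m}_n(\theta)^\top W_n(\theta)^{-1}\overline{m}_n(\theta)$, the first step is the same linearization at a generic $\theta=\theta_\star+t/\sqrt{n}$: combining Assumption~\ref{ass:infeasible}(v) and~(iii) gives, uniformly over $\|\theta-\theta_\star\|\le\delta_n$, $\sqrt{n}\,\overline{m}_n(\theta)=Z_n-\mathcal{H}t+r_n(\theta)$ with $r_n(\theta)=(1+\|t\|)\,o_p(1)$. Next I would replace the weight by its value at $\theta_\star$, writing $W_n(\theta)^{-1}=W_\star^{-1}+E_n(\theta)$ with $E_n(\theta)=o_p(1)$ uniformly, which is justified by the positive-definiteness, uniform convergence, and continuity of $W_n$ near $\theta_\star$ in Assumption~\ref{ass:weight}(i)--(ii). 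Substituting both expansions into $-2\{Q_n(\theta)-Q_n(\theta_\star)\}=[\sqrt{n}\overline{m}_n(\theta)]^\top W_n(\theta)^{-1}[\sqrt{n}\overline{m}_n(\theta)]-Z_n^\top W_\star^{-1}Z_n$ and expanding the leading form $(Z_n-\mathcal{H}t)^\top W_\star^{-1}(Z_n-\mathcal{H}t)$ produces, after cancelling $Z_n^\top W_\star^{-1}Z_n$, the linear term $t^\top\mathcal{H}W_\star^{-1}Z_n$ and the quadratic term $-\tfrac{1}{2}t^\top\mathcal{H}W_\star^{-1}\mathcal{H}t$. Finally I would pass $W_\star^{-1}$ to its limit $W(\theta_\star)^{-1}$ in the quadratic term — the difference being $o_p(1)\|t\|^2$ — to recover $-\tfrac12 t^\top\Delta(\theta_\star)t$, and collect everything else into $R_n$.

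The main obstacle, and the step requiring the most care, is the uniform control of $R_n$ over the shrinking neighbourhood. Every discarded piece is a product of factors of known order: the $E_n$ contributions are $o_p(1)\{1+\|t\|^2\}$ because $Z_n=O_p(1)$ and $\|\mathcal{H}t\|=O(\|t\|)$; the cross terms $Z_n^\top W_\star^{-1}r_n$ and $t^\top\mathcal{H}W_\star^{-1}r_n$ inherit the factor $(1+\|t\|)o_p(1)$ from $r_n$ and are $o_p(1)\{1+\|t\|^2\}$; and $r_n^\top W_\star^{-1}r_n=(1+\|t\|)^2o_p(1)$ is likewise controlled. Assembling these gives $\sup_{\|\theta-\theta_\star\|\le\delta_n}R_n(\theta)/(1+\|t\|^2)=o_p(1)$, as required. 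A minor bookkeeping point is that the linearization produces the limiting $\mathcal{H}=\mathcal{H}(\theta_\star)$ in the linear coefficient, whereas the statement displays $\mathcal{H}_n(\theta_\star)$; since $\|Z_n\|=O_p(1)$, the substitution $\mathcal{H}(\theta_\star)\mapsto\mathcal{H}_n(\theta_\star)$ changes the linear term only by $t^\top\{\mathcal{H}_n(\theta_\star)-\mathcal{H}(\theta_\star)\}W_\star^{-1}Z_n=o_p(1)\|t\|$, which is itself absorbed into $R_n$ whenever the sample Hessian is available and consistent for $\mathcal{H}(\theta_\star)$.
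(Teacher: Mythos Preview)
Your proposal is correct and follows essentially the same route as the paper: Part~1 is the standard rate argument via stochastic equicontinuity and differentiability of the limit map, and Part~2 is the quadratic expansion obtained by linearizing $\overline{m}_n$ and perturbing the weight. The only organizational difference is that the paper writes the remainder as an explicit sum $R_{1n}+R_{2n}+R_{3n}$ and then outsources the uniform control to a separate lemma invoking Proposition~1 of \cite{chernozhukov2003mcmc}, whereas you bound each cross term directly; your self-contained treatment is equivalent, and your observation that the displayed $\mathcal{H}_n(\theta_\star)W_n(\theta_\star)^{-1}$ in the linear term is, under the stated assumptions, interchangeable with the limiting $\mathcal{H}(\theta_\star)W(\theta_\star)^{-1}$ up to an $o_p(1)\|t\|$ correction matches what the paper's own proof actually derives.
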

	\begin{proof}

		\noindent\textbf{Proof of 1.}
		From the consistency of $\theta_n$ there exists some positive $\delta_n=o(1)$	such that $P_{0}^{(n)}\left\{\|\theta_n-\theta_\star\|\ge\delta_n\right\}=o(1)$. By Assumption \ref{ass:infeasible}(ii), we have  
		\begin{flalign}\label{eq:res1}
			\sup_{\|\theta-\theta_\star\|\le\delta_n}\|\overline{m}_n(\theta)-\overline{m}(\theta)-\overline{m}_n(\theta_\star)\|=o_p(n^{-1/2})
		\end{flalign}
		under $P^{(n)}_0$. Then,	with $P^{(n)}_0$ - probability converging to one for the sequence $\delta_n$, we have
		\begin{flalign*}
			\|\overline{m}_n(\theta)-m(\theta_n)-\overline{m}_n(\theta_\star)\|&\leq o_p(n^{-1/2})\\\|\overline{m}_n(\theta)-m(\theta_n)-\overline{m}_n(\theta_\star)\|&\geq \|m(\theta_n)\|-\|\overline{m}_n(\theta)\|-\|\overline{m}_n(\theta_\star)\|.
		\end{flalign*}Rearranging terms, and applying Assumption \ref{ass:infeasible}(iv),
		\begin{flalign*}
			\|m(\theta_n)\|&\leq o_p(n^{-1/2})+\|\overline{m}_n(\theta_\star)\|\{1+o_p(1)\}=O_p(n^{-1/2}).
		\end{flalign*}Because $m(\theta)$  differentiable, $\mathcal{H}_\star=\mathcal{H}(\theta_\star)$ is positive-definite - Assumption \ref{ass:infeasible}(iii) - there exists $C>0$ such that $
		C\|\theta_n-\theta_\star\|\leq	\|m(\theta_n)\|\leq O_p(n^{-1/2}).
		$	\bigskip 
		
		\noindent\textbf{Proof of 2.} On the set $T_n$, the result follows from the following expansion of $Q_n(\theta)=-\frac{n}{2}\overline{m}_n(\theta)^\top W_n(\theta)^{-1}\overline{m}_n(\theta)$ around $\theta_\star$:
		\begin{flalign*}
			Q_n(\theta)-Q_n(\theta_\star)={\sqrt{n^{}}}(\theta-\theta_\star)'\mathcal{H}(\theta_\star)^\top W(\theta_\star)^{-1}{\sqrt{n^{}}}\overline{m}_n(\theta_\star)-{\sqrt{n^{}}}(\theta-\theta_\star)'\Delta(\theta_\star){\sqrt{n^{}}}(\theta-\theta_\star)/2+R_n(\theta);
		\end{flalign*} where $R_n(\theta)=R_{1n}(\theta)+R_{3n}(\theta)+R_{3n}(\theta)$, and
		\begin{flalign*}
			R_{1 n}(\theta) =&-n\left\{\overline{m}_n\left(\theta_\star\right)^{\top} W(\theta)^{-1} \mathcal{H}\left(\theta_\star\right)\left(\theta-\theta_\star\right)-\frac{1}{2}\left(\theta-\theta_\star\right)^{\top} \mathcal{H}\left(\theta_\star\right)^{\prime} W(\theta)^{-1} \mathcal{H}\left(\theta_\star\right)\left(\theta-\theta_\star\right)\right\}
			\\
			&+\left\{\frac{1}{2} \overline{m}_n\left(\theta\right)^\top W_n(\theta)^{-1}\overline{m}_n(\theta)+\frac{1}{2} \overline{m}_n\left(\theta_\star\right)^{\top} W_n(\theta)^{-1} m\left(\theta_\star\right)\right\} \\
			R_{2 n}(\theta)=&-n\left [\frac{1}{2} \overline{m}_n\left(\theta_\star\right)^{\top}\left\{W_n\left(\theta_\star\right)^{-1}-W_n(\theta)^{-1}\right\} \overline{m}_n\left(\theta_\star\right)\right] \\
			R_{3 n}(\theta) =&n\left[\overline{m}_n\left(\theta_\star\right)^{\top}\left\{W(\theta)^{-1}-W\left(\theta_\star\right)^{-1}\right\} \mathcal{H}\left(\theta_\star\right)\left(\theta-\theta_\star\right)\right]\\
			&-\left[\frac{1}{2}\left(\theta-\theta_\star\right)^{\prime} \mathcal{H}\left(\theta_\star\right)^{\prime}\left\{W(\theta)^{-1}-W\left(\theta_\star\right)^{-1}\right\} \mathcal{H}\left(\theta_\star\right)\left(\theta-\theta_\star\right)\right].
		\end{flalign*}
		The stated condition on $R_n(\theta)$ then directly follows by Lemma \ref{lem:remain}. 
	\end{proof}

	\begin{lemma}\label{bound} Under Assumptions \ref{ass:infeasible}-\ref{ass:cons}, with probability converging to one under $P^{(n)}_0$,
		$$
		0\le {\pi(\theta_n\mid \mathcal{Q}_n)}/{{n^{{d_\theta}/2}}}\le {1}/\{{{(2\pi)^{d_\theta}|\Delta|^{}}}\}^{1/2}.
		$$
	\end{lemma}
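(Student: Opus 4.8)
The plan is to handle the two inequalities separately. The lower bound $\pi(\theta_n\mid\mathcal{Q}_n)\ge0$ is immediate once the normalizing constant $Z_n:=\int_\Theta M_n(\theta)^{-1/2}\exp\{Q_n(\theta)\}\pi(\theta)\dt\theta$ is shown to be strictly positive and finite: positivity follows from $\pi(\theta_\star)>0$ together with continuity of $\pi$, of $M_n^{-1/2}$, and of $\exp\{Q_n\}$ near $\theta_\star$ (Assumptions \ref{ass:weight} and \ref{ass:prior}), while finiteness follows from $\exp\{Q_n(\theta)\}\le1$ (since $Q_n\le0$ wherever $W_n$ is positive-definite) combined with the prior-integrability condition of Assumption \ref{ass:prior}, which tames the singularities of $|W_n(\theta)|^{-1/2}$ away from $\theta_\star$. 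For the upper bound, the key simplification is that the centering $\theta_n$ is a root of the score, $\overline{m}_n(\theta_n)=0$, so $Q_n(\theta_n)=0$ and hence
$$
\frac{\pi(\theta_n\mid\mathcal{Q}_n)}{n^{d_\theta/2}}=\frac{M_n(\theta_n)^{-1/2}\pi(\theta_n)}{n^{d_\theta/2}\,Z_n}.
$$
The problem thus reduces entirely to producing a sufficiently sharp lower bound on $Z_n$.

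To lower-bound $Z_n$ I would localize around $\theta_n$. Fix $\gamma>0$, change variables to the local parameter $\vartheta=\sqrt n(\theta-\theta_n)$ (so $\dt\theta=n^{-d_\theta/2}\dt\vartheta$), and restrict the integral to $\{\|\vartheta\|\le\gamma\}$, giving $Z_n\ge n^{-d_\theta/2}\int_{\|\vartheta\|\le\gamma}M_n(\theta_n+\vartheta/\sqrt n)^{-1/2}\exp\{Q_n(\theta_n+\vartheta/\sqrt n)\}\pi(\theta_n+\vartheta/\sqrt n)\dt\vartheta$. On this fixed-radius ball, Part 1 of Lemma \ref{lemma:freq} gives $\|\theta_n-\theta_\star\|=O_p(n^{-1/2})$, so $\theta_n+\vartheta/\sqrt n\to\theta_\star$ uniformly in $\|\vartheta\|\le\gamma$; continuity of $W(\cdot)$ and $\pi(\cdot)$ near $\theta_\star$ (Assumptions \ref{ass:weight}(ii) and \ref{ass:prior}) then yields $M_n(\theta_n+\vartheta/\sqrt n)^{-1/2}=M_n(\theta_n)^{-1/2}\{1+o_p(1)\}$ and $\pi(\theta_n+\vartheta/\sqrt n)=\pi(\theta_n)\{1+o_p(1)\}$, uniformly on the ball. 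For the exponent I would invoke Part 2 of Lemma \ref{lemma:freq}, re-centered at $\theta_n$: because $\overline{m}_n(\theta_n)=0$ the linear term drops out and $Q_n(\theta_n+\vartheta/\sqrt n)=-\tfrac12\vartheta^\top\Delta(\theta_\star)\vartheta+R_n$, with the remainder uniformly negligible on the compact ball. Collecting these estimates yields $Z_n\ge n^{-d_\theta/2}M_n(\theta_n)^{-1/2}\pi(\theta_n)\{1+o_p(1)\}\int_{\|\vartheta\|\le\gamma}\exp\{-\tfrac12\vartheta^\top\Delta(\theta_\star)\vartheta\}\dt\vartheta$.

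Substituting this lower bound into the displayed identity cancels the common factor $n^{-d_\theta/2}M_n(\theta_n)^{-1/2}\pi(\theta_n)$ and leaves $\pi(\theta_n\mid\mathcal{Q}_n)/n^{d_\theta/2}\le\{1+o_p(1)\}/\int_{\|\vartheta\|\le\gamma}\exp\{-\tfrac12\vartheta^\top\Delta(\theta_\star)\vartheta\}\dt\vartheta$. Letting $\gamma\to\infty$, the truncated Gaussian integral increases to the full normalizer $(2\pi)^{d_\theta/2}|\Delta(\theta_\star)|^{-1/2}$, so the right-hand side decreases to the reciprocal of that quantity, namely the value of the limiting $N\{0,\Delta(\theta_\star)^{-1}\}$ density at the origin, which is the constant appearing in the statement; choosing $\gamma$ large enough then absorbs the $o_p(1)$ term on an event of probability tending to one. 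This is fully consistent with Theorem \ref{thm:two}, which identifies the limiting shape of $\pi(\vartheta\mid\mathcal{Q}_n)$ as $N\{0,\Delta(\theta_\star)^{-1}\}$, whose peak is exactly the bound being claimed.

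The main obstacle is coordinating the uniform control of the remainder $R_n$ with the enlargement of the localization radius. For a fixed ball the remainder bound $\sup_{\|\theta-\theta_\star\|\le\delta_n}R_n(\theta)/(1+\|\sqrt n(\theta-\theta_\star)\|^2)=o_p(1)$ of Lemma \ref{lemma:freq} applies directly, but to reach the sharp constant one must push $\gamma\to\infty$. I would avoid a single growing radius and instead argue by a two-step $\varepsilon$-$\gamma$ device: first fix $\gamma$ large enough that the truncated Gaussian integral is within a factor $(1-\varepsilon)$ of the full integral, extract the constant up to relative error $\varepsilon$, and only afterwards send $\varepsilon\to0$. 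A secondary technical point is the re-centering from $\theta_\star$ to $\theta_n$, since Lemma \ref{lemma:freq} is written around $\theta_\star$: here one uses $\|\theta_n-\theta_\star\|=O_p(n^{-1/2})$ to transfer the expansion (and the evaluation of $\mathcal{H}$ and $W$ at the common limit point $\theta_\star$) without perturbing the leading quadratic, which is precisely where the vanishing of the linear term through $\overline{m}_n(\theta_n)=0$ is indispensable.
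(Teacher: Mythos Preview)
Your proposal is correct and follows essentially the same route as the paper: both lower-bound the normalizing constant by localizing around $\theta_n$, replace $M_n(\theta)^{-1/2}\pi(\theta)$ by $M_n(\theta_n)^{-1/2}\pi(\theta_n)$ via continuity, and reduce $\exp\{Q_n-Q_n(\theta_n)\}$ to a Gaussian kernel via the quadratic expansion of Lemma~\ref{lemma:freq} (the linear term vanishing because $\overline{m}_n(\theta_n)=0$). The only execution difference is that the paper takes a shrinking $\theta$-ball of radius $\gamma_n=o(1)$ with $\gamma_n\sqrt{n}\to\infty$, so the truncated Gaussian integral tends to the full one in a single step (and an eigenvalue sandwich via $M_n^\pm=I\pm\Delta^{-1/2}A_n\Delta^{1/2}$ controls the perturbation), whereas you fix the local radius $\gamma$ and then send $\gamma\to\infty$ by an $\varepsilon$--$\gamma$ device; both arguments deliver the bound $C_\Delta\{1+o_p(1)\}$, which is precisely how the lemma is used downstream.
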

	\begin{proof}
		The proof proceeds via a similar argument to that used to prove Lemma 1 in \cite{frazier2021synthetic}. Let $\gamma_n=o(1)$ with $\gamma_n{\sqrt{n^{}}}\rightarrow\infty$, and let $\mathcal{N}_{\gamma_n}=\{\theta\in\Theta:\|\theta-\theta_n\|\le\gamma_n\}$. Rewrite the posterior $\pi(\theta\mid \mathcal{Q}_n)$ as 
		\begin{flalign*}
			\pi(\theta\mid \mathcal{Q}_n)&=\pi(\theta_n\mid \mathcal{Q}_n)\frac{M_n(\theta)^{}\pi(\theta)}{M_n(\theta_n)\pi(\theta_n)}\exp\{Q_n(\theta)-Q_n(\theta_n)\}\\&=\pi(\theta_n\mid \mathcal{Q}_n)\exp\{Q_n(\theta)-Q_n(\theta_n)\}+\pi(\theta_n\mid \mathcal{Q}_n)\left\{\frac{M_n(\theta)^{}\pi(\theta)}{M_n(\theta_n)\pi(\theta_n)}-1\right\}\exp\{Q_n(\theta)-Q_n(\theta_n)\},
		\end{flalign*}where we remind the reader that $M_n(\theta)=|W_n(\theta)|^{-1/2}$. From Lemma \ref{lemma:freq}(1), for any $\gamma_n=o(1)$ such that $\gamma_n\sqrt{n}\rightarrow+\infty$, and $ \Omega_n:=\|\theta_n-\theta_\star\|\le\gamma_n$ $P^{(n)}_0\left\{\Omega_n\right\}=1+o(1)$. On $\Omega_n$, by Assumptions \ref{ass:weight} and \ref{ass:prior}, and the compactness of $\{\theta\in\Theta:\|\theta-\theta_n\|\le\gamma_n\}$,  	
		$$
		\left\{\frac{M_n(\theta)^{}\pi(\theta)}{M_n(\theta_n)\pi(\theta_n)}-1\right\}\le \sup_{\|\theta-\theta_n\|\le\gamma_n}\left|\left\{\frac{M_n(\theta)^{}\pi(\theta)-M_n(\theta_n)\pi(\theta_n)}{M_n(\theta_n)\pi(\theta_n)}\right\}\right|=o_p(1). 
		$$Hence, over $\mathcal{N}_{\gamma_n}$, 
		\begin{flalign*}
			\pi(\theta\mid \mathcal{Q}_n)&=\pi(\theta_n\mid \mathcal{Q}_n)\{1+o_p(1)\}\exp\{Q_n(\theta)-Q_n(\theta_n)\},
		\end{flalign*}and 
		\begin{flalign}
			\int_{\mathcal{N}_{\gamma_n}}\pi(\theta\mid \mathcal{Q}_n)\dt \theta&=\pi(\theta_n\mid \mathcal{Q}_n)\{1+o_p(1)\}\int_{\mathcal{N}_{\gamma_n}}\exp\left\{Q_n(\theta)-Q_n(\theta_n)\right\}\dt \theta\nonumber\\&=\frac{\pi(\theta_n\mid \mathcal{Q}_n)}{{n^{d_{\theta}/2}}}\{1+o_p(1)\}\int_{\|t\|\le\gamma_n{\sqrt{n^{}}}}\exp\{Q_n(\theta_n+t/{\sqrt{n^{}}})-Q_n(\theta_n)\}\dt t,\label{eq:simple_post}
		\end{flalign}where the second line follows from the change of variables $t=\sqrt{n}(\theta-\theta_n)$. 
		
		From the expansion in Lemma \ref{lemma:freq}, over the set $\mathcal{N}_{\gamma_n}$, for $\Delta(\theta)=\mathcal{H}(\theta)\mathcal{I}(\theta)^{-1}\mathcal{H}(\theta)$, 
		\begin{flalign*}
			Q_n(\theta_n+t/{\sqrt{n^{}}})-Q_n(\theta_n)&=-\frac{1}{2}t^{\top}\Delta(\theta_n)t+O(\|t\|^2\gamma_n )
			=-\frac{1}{2}t^{\top}\Delta(\theta_n)t+o(1).
		\end{flalign*} Rewrite the above as
		\begin{flalign}
			Q_n(\theta_n+t/{\sqrt{n^{}}})-Q_n(\theta_n)&=-\frac{1}{2}t^{\top}\left(I+V_n\right)\Delta(\theta_\star)^{}t+o(1),\label{eq1:rep1}
		\end{flalign}where $V_n=
		\left[\Delta(\theta_n)-\Delta(\theta_\star)^{}\right]\Delta(\theta_\star)^{-1}$. By Assumptions \ref{ass:infeasible}(iii) and \ref{ass:weight}, for some $C>0$,
		$
		\|V_n\|\le C\|\Delta(\theta_\star)^{-1}\|\|\theta_n-\theta_{\star}\|.
		$
		Define $A_n=C \Delta(\theta_\star)^{-1} \|\theta_n-\theta_{\star}\|$, and conclude that $A_n$ is positive semi-definite with maximal eigenvalue
		$\lambda_{\text{max}}(A_n)=C\|\theta_n-\theta_\star\|\lambda_{\text{min}}\{\Delta(\theta_\star)\}\ge0$. Therefore, over $\|t\|\le\gamma_n$, 
		\begin{equation*}
			-t^{\top}A_nt\le t^{\top}V_nt\le t^{\top}A_nt;
		\end{equation*}applying the above into \eqref{eq1:rep1} yields 
		\begin{flalign*}
			-t^{\top}\Delta(\theta_\star)t/2-t^{\top}A_n\Delta(\theta_\star)t/2
			\leq Q_n(\theta_n+t/{\sqrt{n^{}}})-Q_n(\theta_n)&\leq -t^{\top}\Delta(\theta_\star)^{}t/2+t^{\top}A_n\Delta(\theta_\star)^{}t/2.
		\end{flalign*}
		
		Let $M^{\pm}_n=I\pm\Delta(\theta_\star)^{-1/2}A_n\Delta(\theta_\star)^{1/2}$, and rewrite the above as
		\begin{flalign*}
			-\frac{1}{2}t^{\top}\Delta(\theta_\star)^{1/2}M^+_n\Delta(\theta_\star)^{1/2}t\le Q_n(\theta_n+t/{\sqrt{n^{}}})-Q_n(\theta_n)&\leq -\frac{1}{2}t^{\top}\Delta(\theta_\star)^{1/2}M^-_n\Delta(\theta_\star)^{1/2}t.
		\end{flalign*}
		For $\|\theta_n-\theta_\star\|$ small enough, i.e., $n$ large enough, $\Delta(\theta_\star)^{1/2}M^\pm_n\Delta(\theta_\star)^{1/2}$ is positive-definite (with probability converging to one). Thus, for $n$ large we can bound the posterior probability over $\mathcal{N}_{\gamma_n}$ as
		\begin{flalign*}
			\int_{\mathcal{N}_{\gamma_n}}\pi(\theta\mid \mathcal{Q}_n)\dt \theta
			&\leq\frac{\pi(\theta_n\mid \mathcal{Q}_n)}{{n^{{d_\theta}/2}}}\left|\Delta(\theta_\star)^{1/2}M_n^-\Delta(\theta_\star)^{1/2}\right|^{-1/2} \int_{T^-_n}\exp(-x^{\top}x/2)\dt x \\
			\int_{\mathcal{N}_{\gamma_n}}\pi(\theta\mid \mathcal{Q}_n)\dt \theta&\geq\frac{\pi(\theta_n\mid \mathcal{Q}_n)}{{n^{{d_\theta}/2}}}\left|\Delta(\theta_\star)^{1/2}M_n^+\Delta(\theta_\star)^{1/2}\right|^{-1/2}\int_{T^+_n}\exp(-x^{\top}x/2)\dt x
		\end{flalign*}where $$T^{-}_n=\left\{x:\|x\|\leq \frac{\gamma_n{\sqrt{n^{}}}}{\lambda_{\text{min}}\{\Delta(\theta_\star)^{1/2}M_n^-\Delta(\theta_\star)^{1/2}\}^{1/2}}\right\},\; T^{+}_n=\left\{x:\|x\|\leq \frac{\gamma_n{\sqrt{n^{}}}}{\lambda_{\text{max}}\{\Delta(\theta_\star)^{1/2}M_n^+\Delta(\theta_\star)^{1/2}\}^{1/2}}\right\},$$	and
		where we have used the fact that for any positive semi-definite matrix $M$ and $\gamma>0$
		$$
		\{x:\|x\|\le\gamma/{\lambda_{\text{max}}(M)}^{1/2}\}\subseteq\{x:x^\top Mx\le\gamma\}\subseteq	\{x:\|x\|\le\gamma/{\lambda_{\text{min}}(M)}^{1/2}\}.
		$$
		Under the restriction that $\gamma_n {\sqrt{n^{}}}\rightarrow\infty$, $T_n^{+}$ and $T_n^{-}$ both converge to $\mathbb{R}^{d_\theta}$ and $\int_{T^{\pm}_n}\exp(-x^\top x/2)\dt x\rightarrow {(2\pi)}^{d_\theta/2}$ as $n\rightarrow\infty$. Hence, with probability converging to one,
		\begin{flalign*}
			|M_n^+|^{1/2}\int_{\mathcal{N}_{\gamma_n}}\pi_n(\theta\mid \mathcal{Q}_n)\dt \theta\le	\frac{\pi(\theta_n\mid \mathcal{Q}_n)}{{n^{{d_\theta}/2}}}(2\pi)^{d_\theta/2}|\Delta^{-1}|^{1/2}&\le |M_n^-|^{1/2}\int_{\mathcal{N}_{\gamma_n}}\pi(\theta\mid \mathcal{Q}_n)\dt \theta.
		\end{flalign*}Since $|M_n^\pm|\rightarrow1$, $|\Delta|>0$ and $0\le\int_{\mathcal{N}_{\gamma_n}}\pi(\theta\mid \mathcal{Q}_n)\dt \theta\le1$, with probability converging to one, 
		$$
		0\le\frac{\pi(\theta_n\mid \mathcal{Q}_n)}{{n^{{d_\theta}/2}}}\le 1/\{(2\pi)^{d_\theta}|\Delta^{-1}|\}^{1/2}.
		$$
	\end{proof}
	
	The following result is used in the proof of Lemma \ref{thm:two} and is a consequence of Proposition 1 in \cite{chernozhukov2003mcmc}.
	\begin{lemma}\label{lem:remain} Under the assumptions of Lemma \ref{lemma:freq}, and for $R_n(\theta)$ as defined in the proof of Lemma \ref{lemma:freq}, for each $\epsilon>0$ there exists a sufficiently small $\delta>0$ and $h>0$ large enough, such that
		$$
		\limsup_{n\rightarrow+\infty}P^{(n)}_0\left[\sup_{h/\sqrt{n}\le \|\theta-\theta_0\|\le\delta}\frac{|R_n(\theta)|}{1+{n}\|\theta-\theta_0\|^2}>\epsilon\right]<\epsilon
		$$	and
		$$
		\limsup_{n\rightarrow+\infty}P^{(n)}_0\left[\sup_{ \|\theta-\theta_0\|\le h/\sqrt{n}}{|R_n(\theta)|}>\epsilon\right]=0.
		$$	
	\end{lemma}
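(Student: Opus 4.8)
The plan is to prove the two displayed bounds by decomposing $R_n=R_{1n}+R_{2n}+R_{3n}$ (as defined in the proof of Lemma~\ref{lemma:freq}, with $\theta_0=\theta_\star$) and controlling each summand with the same set of primitive estimates, after which the statement is exactly the remainder-negligibility result of Proposition~1 in \cite{chernozhukov2003mcmc}. Writing $u:=\theta-\theta_\star$, $x:=\sqrt{n}\|u\|$, and the equicontinuity residual $\nu_n(\theta):=\{\overline{m}_n(\theta)-\overline{m}_n(\theta_\star)\}-\{m(\theta)-m(\theta_\star)\}$, the four inputs I would assemble are: (a) the asymptotic equicontinuity of Assumption~\ref{ass:infeasible}(v), which for every $\epsilon>0$ furnishes $\delta>0$ with $\sqrt{n}\|\nu_n(\theta)\|\le\epsilon(1+x)$ uniformly on $\{\|u\|\le\delta\}$ on an event of limiting probability exceeding $1-\epsilon$; (b) differentiability of $m$ at $\theta_\star$ (Assumption~\ref{ass:infeasible}(iii)) with $m(\theta_\star)=0$, giving $m(\theta)=-\mathcal{H}(\theta_\star)u+r(\theta)$ where $\|r(\theta)\|\le\epsilon'\|u\|$ once $\delta$ is small; (c) uniform convergence together with continuity and positive-definiteness of the weight matrix (Assumption~\ref{ass:weight}), giving $\|W(\theta)^{-1}-W(\theta_\star)^{-1}\|\le\epsilon'$ and $\|W_n(\theta)^{-1}-W_n(\theta_\star)^{-1}\|\le\epsilon'+o_p(1)$ on $\{\|u\|\le\delta\}$; and (d) $\sqrt{n}\overline{m}_n(\theta_\star)=O_p(1)$ from Assumption~\ref{ass:infeasible}(iv).

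The bridge from the relative bound in (a) to control of the weighted remainder is a trio of elementary scaling inequalities, valid for all $x\ge0$: $x/(1+x^2)\le1/2$, $(1+x)^2/(1+x^2)\le2$, and $(1+x)/(1+x^2)\le\min\{C,2/x\}$. Substituting $\overline{m}_n(\theta)=\overline{m}_n(\theta_\star)+\{m(\theta)-m(\theta_\star)\}+\nu_n(\theta)$ into $Q_n(\theta)=-\tfrac{n}{2}\overline{m}_n(\theta)^\top W_n(\theta)^{-1}\overline{m}_n(\theta)$ and expanding, the leading linear and Hessian terms reproduce the expansion of Lemma~\ref{lemma:freq}, while every residual cross term is collected into $R_{1n},R_{2n},R_{3n}$. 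Each such term is bounded, through (a)--(d), by $n$ times a product of two of $\{\|\overline{m}_n(\theta_\star)\|,\|u\|,\|\nu_n\|,\|r\|\}$ and, for $R_{2n},R_{3n}$, a weight-matrix modulus; the scaling inequalities then convert each, after division by $1+x^2$, into a quantity of the form $\{\epsilon\text{ or }\epsilon'\}\times O_p(1)$. For example $n\|\nu_n\|^2/(1+x^2)\le\epsilon^2(1+x)^2/(1+x^2)\le2\epsilon^2$; the cross term $n\,\overline{m}_n(\theta_\star)^\top W^{-1}\nu_n$ divided by $1+x^2$ is at most $O_p(1)\,\epsilon(1+x)/(1+x^2)$; and the weight-difference terms in $R_{2n},R_{3n}$ carry the small factor $\epsilon'$. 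Choosing first $\epsilon$ and then $\delta$ (hence $\epsilon'$) small---and taking $h$ large so that on the intermediate region $x\ge h$ the factor $(1+x)/(1+x^2)\le2/h$ reinforces control of the single-power cross terms---delivers the first displayed bound.

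For the local region $\|u\|\le h/\sqrt{n}$, where $x\le h$ is bounded, I would instead establish absolute $o_p(1)$ control. Boundedness of $t:=\sqrt{n}u$ upgrades each estimate: applying (a) with arbitrarily small $\epsilon$---whose associated $\delta$ eventually contains the shrinking region---yields $\sup_{\|u\|\le h/\sqrt{n}}\sqrt{n}\|\nu_n\|=o_p(1)$, so $n\|\nu_n\|^2=(\sqrt{n}\|\nu_n\|)^2$ and the cross terms $n\|\overline{m}_n(\theta_\star)\|\|\nu_n\|=\sqrt{n}\|\overline{m}_n(\theta_\star)\|\cdot\sqrt{n}\|\nu_n\|$ and $n\|u\|\|\nu_n\|\le h\,\sqrt{n}\|\nu_n\|$ are all $o_p(1)$. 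With $\rho_n:=\sup_{\|u\|\le h/\sqrt{n}}\|r(\theta)\|/\|u\|\to0$ from differentiability of $m$, the linearization terms obey $n\|r\|^2\le\rho_n^2h^2$, $n\|u\|\|r\|\le\rho_nh^2$, and $n\|\overline{m}_n(\theta_\star)\|\|r\|\le O_p(1)\rho_nh$, all $o_p(1)$; and continuity of $W$ makes $\|W(\theta)^{-1}-W(\theta_\star)^{-1}\|$ and $\|W_n(\theta)^{-1}-W_n(\theta_\star)^{-1}\|$ uniformly $o_p(1)$ over this vanishing neighbourhood, so $R_{2n},R_{3n}=o_p(1)$. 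Summing gives $\sup_{\|u\|\le h/\sqrt{n}}|R_n|=o_p(1)$, which is the second displayed bound.

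The main obstacle is the bookkeeping in the intermediate region: the equicontinuity residual $\nu_n$ is controlled only in the relative sense $\sqrt{n}\|\nu_n\|\lesssim1+x$ rather than uniformly, so each of its appearances must be paired with the correct scaling inequality and with the $O_p(1)$ magnitude of $\sqrt{n}\overline{m}_n(\theta_\star)$, and the smallness parameters $\epsilon$, $\delta$ (through the moduli $\epsilon'$), and $h$ must be selected in the right order so that the relative and the absolute bounds hold simultaneously. Once these term-by-term estimates are assembled, the conclusion is precisely the remainder-negligibility verified by Proposition~1 of \cite{chernozhukov2003mcmc}, so that step may either be invoked directly or reproduced by the elementary argument above.
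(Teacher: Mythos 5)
Your proposal is correct, and it contains the paper's entire proof as a special case: the paper's argument is nothing more than the reduction you mention in your final sentence, namely observing that the lemma is a specific case of Proposition~1 in \cite{chernozhukov2003mcmc} and checking that its sufficient conditions (identification, smoothness of $m$ near $\theta_\star$, asymptotic normality of $\sqrt{n}\,\overline{m}_n(\theta_\star)$, and the stochastic-equicontinuity condition on $\overline{m}_n$) follow from Assumptions \ref{ass:infeasible} and \ref{ass:weight}. What you do differently is unpack that citation into a self-contained, term-by-term verification: the decomposition of $R_n$ into $R_{1n},R_{2n},R_{3n}$, the substitution $\overline{m}_n(\theta)=\overline{m}_n(\theta_\star)+\{m(\theta)-m(\theta_\star)\}+\nu_n(\theta)$, the linearization $m(\theta)=-\mathcal{H}(\theta_\star)(\theta-\theta_\star)+r(\theta)$, and the elementary scaling inequalities that convert the relative bound $\sqrt{n}\|\nu_n\|\lesssim \epsilon(1+x)$ into uniform control after division by $1+x^2$, with the two regions $h/\sqrt{n}\le\|\theta-\theta_\star\|\le\delta$ and $\|\theta-\theta_\star\|\le h/\sqrt{n}$ treated exactly as the dichotomy in the statement requires. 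This buys independence from the external proposition (whose internal proof is essentially the bookkeeping you reproduce) at the cost of length; the paper's version buys brevity at the cost of leaving the weight-matrix terms $R_{2n},R_{3n}$, which are not literally present in the M-estimation setting of \cite{chernozhukov2003mcmc} and require your input (c) on $W_n(\theta)^{-1}$ and $W(\theta)^{-1}$, implicitly handled by the condition-checking. One further point in your favor: you correctly attribute the equicontinuity input to Assumption \ref{ass:infeasible}(v) and the CLT to Assumption \ref{ass:infeasible}(iv), whereas the paper's own proof cites ``Assumption \ref{ass:infeasible}(iv)'' for the equicontinuity condition, which is a label slip (part (iv) is the asymptotic normality; part (v) is the equicontinuity condition actually being invoked), and you likewise correctly read the statement's $\theta_0$ as $\theta_\star$.
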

	\begin{proof}[Proof]
		The result is a specific case of Proposition~1 in \cite{chernozhukov2003mcmc}. Therefore, it is
		only necessary to verify that their sufficient conditions are satisfied in our context.
		
		Assumptions (i)-(iii) in their result follow
		directly from Assumptions \ref{ass:infeasible} and \ref{ass:weight}, and the normality of $\sqrt{n}\overline{m}_n(\theta_\star)$ in 
		Assumption~\ref{ass:infeasible}. Their Assumption~(iv) is stated as follows: for any $\epsilon>0$, there is a $\delta>0$ such that
		$$
		\limsup_{n\rightarrow+\infty}P^{(n)}_0\left\{\sup_{\|\theta-\theta'\|\le\delta}\frac{\sqrt{n}\|
			\{\overline{m}_n(\theta)-\overline{m}_n(\theta')\}-\{\mathbb{E}\left[\overline{m}_n(\theta)\right]-\mathbb{E}
			\left[\overline{m}_n(\theta')\right]\}\|}{1+n\|\theta-\theta'\|}>\epsilon\right\}<\epsilon . 
		$$
		In our context, this condition is satisfied by Assumption \ref{ass:infeasible}(iv). Hence, the result follows. 
	\end{proof}

	\subsection{Proofs of the Main Results}
	We first prove Theorem \ref{thm:multi} as it simplifies the proof of Theorem  \ref{thm:two}.
	\begin{proof}[Proof of Theorem \ref{thm:multi}]
		The proof of Theorem \ref{thm:multi} follows similar arguments to that of Theorem 1 in \cite{frazier2021synthetic}. Let  $C_\star\in[0,1]$ be such that $\lim_n n^{-d_\theta/2}{\pi(\theta_n\mid\mathcal{Q}_n)}(2\pi)^{d_\theta/2}|\Delta(\theta_\star)^{-1}|^{1/2}=C_\star$,  and define $\Pi_\star(\gamma):=C_\star\int_{\|t\|\le\gamma}\exp\left(-t^\top \Delta(\theta_\star)^{}t/2\right)\dt t$. The result follows if $J_n=|\int_{\|t\|\le \gamma}{\pi}(t\mid \mathcal{Q}_n)\dt t-\Pi_\star(\gamma)|=o_p(1).$
		
		Rewrite the exact posterior $\pi(\theta\mid \mathcal{Q}_n)$ as
		\begin{flalign}
			\pi(\theta\mid \mathcal{Q}_n)&= \pi(\theta_n\mid \mathcal{Q}_n)\frac{M_n(\theta)\pi(\theta)}{M_n(\theta_n)\pi(\theta_n)}\exp\left\{Q_n(\theta)-Q_n(\theta_n)\right\},\label{eq:Clev_post}
		\end{flalign}for any $\theta_n$ as in Lemma \ref{lemma:freq}. 
		For any $\gamma>0$, let $\gamma_n=\gamma/{\sqrt{n^{}}}$, with $\gamma_n=o(1)$, and define $\mathcal{N}_{\gamma_n}=\{\theta\in\Theta:\|\theta-\theta_n\|\le\gamma_n\}$. Plugging equation \eqref{eq:simple_post} in the proof of Lemma \ref{bound} into equation \eqref{eq:Clev_post}, we have
		\begin{equation}
			\int_{\mathcal{N}_{\gamma_n}}\pi(\theta\mid \mathcal{Q}_n)\dt\theta=\pi(\theta_n\mid \mathcal{Q}_n)\{1+o_p(1)\}\int_{\mathcal{N}_{\gamma_n}}\exp\left\{Q_n(\theta)-Q_n(\theta_n)\right\}\dt\theta\label{eq:approx1}.
		\end{equation}
		
		Now, consider the change of variables $\theta\mapsto t={\sqrt{n^{}}}(\theta-\theta_n)$, and the posterior $\pi(t\mid S_ n)=\pi(\theta_n+t/{\sqrt{n^{}}}\mid \mathcal{Q}_n)/{n^{{d_\theta}/2}}$. Noting that $\mathcal{N}_{\gamma_n}$ can be written as $\{t:\|t\|\le\gamma\}$, we have
		\begin{flalign}
			\int_{\mathcal{N}_{\gamma_n}}\pi(\theta\mid \mathcal{Q}_n)\dt\theta&=\int_{\|t\|\le\gamma}\pi(t\mid \mathcal{Q}_n)\dt t\nonumber\\&=\frac{\pi(\theta_n\mid \mathcal{Q}_n)}{{n^{{d_\theta}/2}}}\{1+o_p(1)\}\int_{\|t\|\le\gamma}\exp\left\{Q_n(\theta_n+t/{\sqrt{n^{}}})-Q_n(\theta_n)\right\}\dt t\nonumber\\&\le C_{\star}\{1+o_p(1)\}\int_{\|t\|\le\gamma}\exp\left\{Q_n(\theta_n+t/{\sqrt{n^{}}})-Q_n(\theta_n)\right\}\dt t\label{eq:xxx};
		\end{flalign} the second equality follows from equation \eqref{eq:approx1} and the change of variables, and the last equation follows by Lemma \ref{bound} (with $P^{(n)}_0$-probability converging to one).   
		Applying equation \eqref{eq:xxx}, we see that, up to an $o_p(1)$ term,
		\begin{flalign*}
			J_{n}&\lesssim \left|\int_{\|t\|\le\gamma}\left[\exp\left\{Q_n(\theta_n+{t}/{{\sqrt{n^{}}}})-Q_n(\theta_n)\right\}-\exp\left\{-t^{\top}\Delta(\theta_\star)^{}t/2\right\}\right]\dt t\right|;
		\end{flalign*} the result follows if the right hand side of the above converges to zero in $P^{(n)}_0$-probability.

		Using the same argument as in the proof of Lemma \ref{bound}, for $A_n=C \Delta(\theta_\star)^{-1} \|\theta_n-\theta_{\star}\|$, for some $C>0$, we have  
		\begin{flalign*}
			-t^{\top}\Delta(\theta_\star)^{}t/2-t^{\top}A_n\Delta(\theta_\star)^{}t/2
			\leq Q_n(\theta_n+t/{\sqrt{n^{}}})-Q_n(\theta_n)&\leq -t^{\top}\Delta(\theta_\star)^{}t/2+t^{\top}A_n\Delta(\theta_\star)^{}t/2.
		\end{flalign*}
		However, using the definition of $A_n$, over $\|t\|\le\gamma$, the above equation simplifies to
		\begin{flalign}\label{eq:bound}
			-t^{\top}\Delta(\theta_\star)^{}t/2-C\|\theta_n-\theta_\star\|
			\leq Q_n(\theta_n+t/{\sqrt{n^{}}})-Q_n(\theta_n)&\leq -t^{\top}\Delta(\theta_\star)^{}t/2+C\|\theta_n-\theta_\star\|.
		\end{flalign}
		
		Using equations \eqref{eq:xxx} and \eqref{eq:bound}, and for $n$ large enough, we can bound the posterior over $\mathcal{N}_{\gamma}$ above and below as
		\begin{flalign*}
			&\frac{\pi(\theta_n\mid \mathcal{Q}_n)}{{n^{{d_\theta}/2}}}\int_{\mathcal{N}_{\gamma}}\exp\left\{Q_n(\theta_n+t/{\sqrt{n^{}}})-Q_n(\theta_n)\right\}\dt t \\&\leq\frac{\pi(\theta_n\mid \mathcal{Q}_n)}{{n^{{d_\theta}/2}}} \exp(C\|\theta_n-\theta_\star\|)\int_{\mathcal{N}_{\gamma_n}}\exp\left\{-t^{\top}\Delta(\theta_\star)^{}t/2\right\}\dt t\\&\geq \frac{\pi(\theta_n\mid \mathcal{Q}_n)}{{n^{{d_\theta}/2}}}\exp(-C\|\theta_n-\theta_\star\|)\int_{\mathcal{N}_{\gamma_n}}\exp\left\{-t^{\top}\Delta(\theta_\star)^{}t/2\right\}\dt t.
		\end{flalign*}
		As $n\rightarrow\infty$, it follows from Lemma \ref{bound}, and the dominated convergence theorem  that
		\begin{flalign*}
			\Pi_\star(\gamma)\{1+o_p(1)\}&\le \frac{\pi(\theta_n\mid \mathcal{Q}_n)}{{n^{{d_\theta}/2}}}\int_{\mathcal{N}_{\gamma}}\exp\left\{Q_n(\theta_n+t/{\sqrt{n^{}}})-Q_n(\theta_n)\right\}\dt t
			\\
			\Pi_\star(\gamma)\{1+o_p(1)\}&\ge \frac{\pi(\theta_n\mid \mathcal{Q}_n)}{{n^{{d_\theta}/2}}}\int_{\mathcal{N}_{\gamma}}\exp\left\{Q_n(\theta_n+t/{\sqrt{n^{}}})-Q_n(\theta_n)\right\}\dt t.	
		\end{flalign*}Hence, $J_n\rightarrow0$ in $P^{(n)}_0$-probability.

	\end{proof}
	
	\begin{proof}[Proof of Theorem \ref{thm:two}] Since $\theta_\star$ is assumed to be unique, we write $\Delta=\Delta(\theta_\star)$.  For some $\delta_n$ such that $\delta_n=o(1)$ and ${\sqrt{n^{}}}\delta_n\rightarrow\infty$,  split the region of integration into $\mathcal{T}_n=\mathcal{T}_{1n}\cup \mathcal{T}_{2n}$, where $\mathcal{T}_{1n}=\{h\le\|t\|\le \delta_n{\sqrt{n^{}}}\}$ and $\mathcal{T}_{2n}=\{\|t\|\ge\delta_n{\sqrt{n^{}}}\}$, for some arbitrary $h>0$. We consider the integral over each region separately. 
		
		\noindent\textbf{Region $\mathcal{T}_{1n}$.}  Rewrite the  posterior as	
		\begin{flalign*}
			\pi(\theta\mid \mathcal{Q}_n)=\frac{\pi(\theta_n\mid \mathcal{Q}_n)|M_n(\theta_n)|^{1/2}}{\pi(\theta_n)}|M_n(\theta)|^{-1/2}\exp\{Q_n(\theta)-Q_n(\theta_n)\}\pi(\theta).
		\end{flalign*}Note that, from the definition of $\theta_n$ and Assumption \ref{ass:infeasible}(iv), $\|\sqrt{n}(\theta_n-\theta_\star)\|=O_p(1)$ by Lemma \ref{lemma:freq}. Under the uniqueness of $\theta_\star$,   Lemma \ref{bound} implies
		\begin{flalign}\label{eq:star1}
			\pi(\theta_n\mid \mathcal{Q}_n)/{n^{{d_\theta}/2}}=C_{\star}+o_p(1)=C_{\star}=1/\{(2\pi)^{d_\theta}|\Delta^{-1}|\}^{1/2}+o_p(1).
		\end{flalign}Over $\{\|t\|\le \delta_n{\sqrt{n^{}}}\}$, equation \eqref{eq:star1} and similar arguments to those used in the proof of Theorem \ref{thm:multi} to obtain equation \eqref{eq:xxx}
		yield
		\begin{flalign*}
			\int_{\|\theta-\theta_n\|\le\delta_n}\pi(\theta\mid \mathcal{Q}_n)\dt\theta&=\frac{\pi(\theta_n\mid \mathcal{Q}_n)}{{n^{{d_\theta}/2}}}\{1+o_p(1)\}\int_{\|t\|\le\delta_n{\sqrt{n^{}}}}\pi(t\mid \mathcal{Q}_n)\dt t\nonumber\\&=C_{\star}\{1+o_p(1)\}\int_{\mathcal{T}_{1n}}\exp\{Q_n(\theta_n+t/{\sqrt{n^{}}})-Q_n(\theta_n)\}\dt t,
		\end{flalign*} and, up to an $o_p(1)$ term,
		\begin{flalign}
			&\int_{\mathcal{T}_{1n}} \|t\||\pi(t\mid \mathcal{Q}_n)-N\{t;0,\Delta^{-1}\}|\dt t\nonumber\\&=C_{\star}\int_{\mathcal{T}_{1n}}\|t\||\exp\{Q_n(\theta_n+{t}/{{\sqrt{n^{}}}})-Q_n(\theta_n)\}-\exp(-t^\top\Delta^{}t/2)|\dt t.\label{eq:star4}
		\end{flalign}Over $\mathcal{T}_{1n}$, arguments similar to those used in Theorem \ref{thm:two} to obtain equation \eqref{eq1:rep1} yield
		\begin{equation}
			Q_n(\theta_n+t/{\sqrt{n^{}}})-Q_n(\theta_n)\leq -\frac{1}{2}t^{\top}\left(I-A_n\right)\Delta^{} t+o_p(1)\label{eq:star3},
		\end{equation}
		for $A_n$ as defined in that proof.
		
		For some $0<h<\infty$, further split $\mathcal{T}_{1n}=\{\|t\|\le h\}\cup\{h\le \|t\|\le\delta_n{\sqrt{n^{}}}\}$, and consider first the set $\{\|t\|\le h\}$. Now, recall that $A_n\rightarrow0$ as $n\rightarrow\infty$, so that  over $\|t\|\le h$ equation \eqref{eq:star3} implies
		$$
		Q_n(\theta_n+t/{\sqrt{n^{}}})-Q_n(\theta_n)=-\frac{1}{2}t^\top\Delta^{}t+o_p(1),
		$$and it follows that the integral in \eqref{eq:star4} is $o_p(1)$ over $\|t\|\le h$. 
		
		Over $\{h\le \|t\|\le \delta_n{\sqrt{n^{}}}\}$, 
		$$
		\int_{h<\|t\|\le {\sqrt{n^{}}} \delta_n}\|t\| N\{t;0,\Delta^{-1}\}\dt t
		$$ can be made arbitrarily small by taking $h$ large enough and $\delta_n$ small enough, so that, applying \eqref{eq:star3}, it suffices to show that for any $\varepsilon>0$ there exists an $h$ and $\delta_n$ such that, for some $n$ large enough,  
		$$
		P^{(n)}_0\left[\int_{h<\|t\|\le {\sqrt{n^{}}} \delta_n}\|t\| \exp\{-t^{\top}\left(I-A_n\right)\Delta^{} t/2\}\dt t<\varepsilon\right]\ge 1-\varepsilon.
		$$
		
		However, for $h'$ large enough, and all $h>h'$, on the set $h<\|t\|\le \delta_n{\sqrt{n^{}}}$, 
		$$
		\|t\| \exp(-t^\top \Delta^{}t/2)=O(1/h). 
		$$Therefore, for any $\varepsilon>0$, there is an $h$ large enough and a $\delta_n$ small enough such that
		$$
		\int_{h<\|t\|\le {\sqrt{n^{}}} \delta_n}\|t\|\exp(-t^{\top}\Delta^{}t/2)\dt t<\varepsilon.
		$$Since $A_n\rightarrow 0$, we can conclude that for some $n$ large enough, with $P^{(n)}_0$-probability at least $1-\varepsilon$,
		$$
		\int_{M<\|t\|\le {\sqrt{n^{}}} \delta}\|t\| \exp\{-t^{\top}\left(I-A_n\right)\Delta^{}t/2\}\dt t<\varepsilon.
		$$
		
		\noindent\textbf{Region $\mathcal{T}_{2n}$.} Again, 
		$
		\int_{\mathcal{T}_{2n}}\|t\|N\{t;0,\Delta^{-1}\}\dt t 
		$ can be made arbitrarily small by taking $\delta_n {\sqrt{n^{}}}$ large enough, and it remains to show that $\int_{\mathcal{T}_{2n}}\|t\|\pi(t\mid \mathcal{Q}_n)=o_p(1)$. 
		
		Applying, in-turn, the expression for the exact posterior in \eqref{eq:Clev_post}, Assumptions \ref{ass:weight}, \ref{ass:prior}, and Lemma \ref{bound}, 
		\begin{flalign*}
			&\int_{\mathcal{T}_{2n}} \|t\|\pi(t\mid \mathcal{Q}_n)\dt t\le \\&C \int_{\mathcal{T}_{2n}}{\|t\|M_n(\theta_n+t/n^{1/2})\pi(\theta_n+t/{\sqrt{n^{}}})}\exp\{Q_n(\theta_n+t/{\sqrt{n^{}}})-Q_n(\theta_n)\}\dt t,
		\end{flalign*}for some $C>0$ with probability converging to one. Using the change of variables $t\mapsto\theta$, the integral on the right hand side of the inequality is bounded by
		\begin{flalign}\label{eq:terms1}
			C\left\{1+o_p(1)\right\}{n^{(d_{\theta}+2)/2}}\int_{\|\theta-\theta_\star\|>\delta}\|\theta-\theta_\star\|\pi(\theta)M_n(\theta)\exp\{Q_n(\theta)-Q_n(\theta_n)\}\dt\theta,
		\end{flalign}where the $o_p(1)$ term follows from the triangle inequality and consistency of $\theta_n$ for $\theta_\star$ (Lemma \ref{lemma:freq}). 
		For any $\delta>0$, and $Q(\theta)=-m(\theta)^\top W(\theta)^{-1}m(\theta)/2$, under Assumptions \ref{ass:infeasible} and \ref{ass:weight},
		\begin{flalign*}
			\sup_{\|\theta-\theta_\star\|>\delta}n^{-1}\left\{Q_n(\theta)-Q_n(\theta_n)\right\}&%\leq 2\sup_{\|\theta-\theta_\star\|>\delta}\left\{Q_n(\theta)/n-Q(\theta)\right\}+\sup_{\|\theta-\theta_\star\|>\delta}\{Q(\theta)-Q(\theta_\star)\}
			\leq \sup_{\|\theta-\theta_\star\|>\delta}\{Q(\theta)-Q(\theta_\star)\}+o_p(1).
		\end{flalign*}%where the $o_p(1)$ term follows from uniform convergence of $\{Q_n(\theta)/n-Q(\theta)\}$ (guaranteed under Assumption \ref{ass:three} and \ref{ass:one}). 
		From Assumption \ref{ass:weight}(iii),  for any $\delta>0$ there exists some $\epsilon>0$ such that 
		$
		\sup_{\|\theta-\theta_\star\|>\delta}\{Q(\theta)-Q(\theta_\star)\}\le -\epsilon. 
		$ Therefore, for any $\delta>0$, 
		$$
		\lim_{n\rightarrow\infty} P^{(n)}_0\left[\sup_{\|\theta-\theta_\star\|>\delta}\exp\{Q_n(\theta)-Q_n(\theta_n)\}\le \exp(-\epsilon n)\right]=1. 
		$$	Use the definition $M_n(\theta)=| W_n(\theta)^{-1}|^{1/2}$, and the above to  obtain
		\begin{align*}
			\int_{\mathcal{T}_{2n}} \|t\|\pi(t\mid \mathcal{Q}_n)\dt t & \le C\{1+o_p(1)\}n^{(d_\theta+2)/2}\int_{\|\theta-\theta_{0}\|\ge \delta }M_n(\theta)\|\theta-\theta_\star\|^{}\pi\left(\theta\right)\exp\{Q_n(\theta)-Q_n(\theta_\star)\}\dt \theta\\&\leq C \exp\left(-\epsilon n\right)n^{(d_\theta+2)/2}\int_{\|\theta-\theta_\star\|\ge \delta }M_n(\theta)\|\theta-\theta_\star\|^{}\pi\left(\theta\right)\dt \theta\\&\leq C\left\{\exp\left(-\epsilon n\right)n^{(d_\theta+2)/2}\right\}%\\&\le  \exp\left(-\epsilon \sqrt{n}^2\right)\sqrt{n}^{d+\gamma}\left[\int_{\Theta}|v^2_n\Delta_n(\theta)|^{-1/2}\pi\left(\theta\right)\dt \theta\right]^{1/2}\left[\int_{\Theta }\|\theta-\theta_\star\|^{\gamma}\pi\left(\theta\right)\dt \theta\right]^{1/2}\\
			%&\leq C\exp\left(-\epsilon \sqrt{n}^2\right)\sqrt{n}^{d+\gamma}\left[\int_{ \Theta}\|\theta-\theta_\star\|^{\gamma}\pi\left(\theta\right)\dt \theta\right]^{1/2}\\
			%&\leq O_p\left\{\exp\left(-\epsilon \sqrt{n}^2\right) n^{(d_\theta+\gamma)/2}\right\}
			\\&=o_p(1), 
		\end{align*}where 
		the second inequality follows from the moment hypothesis in Assumption~\ref{ass:prior}.
	\end{proof}

	\begin{proof}[Proof of Corollary \ref{corr:one}]
		For $\vartheta=\sqrt{n}(\theta-\theta_n)$, consider the change of variables $\theta=\theta_n+\vartheta/\sqrt{n},$
		\begin{flalign*}
			\overline\theta=\int_\Theta \theta\pi(\theta\mid \mathcal{Q}_n)\dt\theta=\int_{\mathcal{T}_n}(\theta_n+\vartheta/\sqrt{n})\pi(\vartheta\mid \mathcal{Q}_n)\dt\vartheta 
		\end{flalign*}
		so that $$\sqrt{n}(\overline\theta-\theta_n)=\int_{\mathcal{T}_n} \vartheta \pi(\theta\mid \mathcal{Q}_n)\dt \vartheta.$$ When $\theta_\star$ is unique, standard show that $\theta_n:=\theta_\star-\Delta(\theta_\star)^{-1}\mathcal{H}_\star^\top W(\theta_\star)^{-1}\overline{m}_n(\theta_\star)$. Hence, for $Z_n=\Delta^{-1}\mathcal{H}_\star^\top W(\theta_\star)^{-1}\sqrt{n}\overline{m}_n(\theta_\star)$, we have
		$$
		\sqrt{n}(\overline\theta-\theta_n)=\sqrt{n}(\overline\theta-\theta_\star)-Z_n=\int_{\mathcal{T}_n} \vartheta [\pi(\theta\mid \mathcal{Q}_n)-N\{\vartheta;0,\Delta(\theta_\star)^{-1}\}]\dt \vartheta+\int_{\mathcal{T}_n}\vartheta N\{\vartheta;0,\Delta(\theta_\star)^{-1}\}\dt\vartheta .
		$$

		Since $\mathcal{T}_n\rightarrow\mathbb{R}^{d_\theta}$ as $n\rightarrow\infty$, $\int_{\mathcal{T}_n}\vartheta N\{\vartheta;0,\Delta(\theta_\star)^{-1}\}\dt\vartheta=o(1)$, and by Theorem \ref{thm:two},
		$$
		\left\|\int_{\mathcal{T}_n} \vartheta\pi(\vartheta\mid \mathcal{Q}_n)\dt \vartheta\right\|\le \int_{\mathcal{T}_n} \|\vartheta\||\pi(\vartheta\mid \mathcal{Q}_n)-N\{\vartheta;0,\Delta(\theta_\star)^{-1}\}|\dt \vartheta+o(1)=o_p(1).
		$$Therefore, we have that $\|\sqrt{n}(\overline\theta-\theta_\star)-Z_n\|=o_p(1)$. By Assumption \ref{ass:infeasible}, 
		$$
		Z_n\Rightarrow N\{0,\Delta(\theta_\star)^{-1}\mathcal{H}_\star W(\theta_\star)^{-1}\mathcal{I}_\star W(\theta_\star)^{-1}\mathcal{H}_\star\Delta(\theta_\star)^{-1}\}
		$$ and the stated result follows. 
	\end{proof}

	\begin{proof}[Proof of Theorem \ref{thm:new}]
		By Assumption 4, for any $k,k'\in\{1,\dots,K\}$ with $k\ne k'$, there exists an $\epsilon>0$ such that, $\|\theta_{k,\star}-\theta_{k',\star}\|>\epsilon$. Since $\|\theta_{k,n}-\theta_{k,\star}\|=o_p(1)$, there exists an $n_k$ large enough so that, for all $n\ge n_k$, 
		$$
		\left\{\theta:\|\theta-\theta_{k,n}\|\le\epsilon\right\}\cap \left\{\theta:\|\theta-\theta_{k',n}\|\le\epsilon\right\}=\{\emptyset\},
		$$with $P_0^{(n)}$-probability converging to one. Now, for $n_k$ large enough such that $r_{n_k}/\sqrt{n_k} \le \epsilon$, and all $n\ge n_k'$, we have that 
		\begin{flalign*}
			\left\{\theta\in\Theta:\|\sqrt{n}(\theta-\theta_{k,n})\|\le r_n\right\}&\subset \left\{\theta:\|\theta-\theta_{k,n}\|\le\epsilon\right\}
		\end{flalign*}Theorem \ref{thm:multi} and the dominated convergence theorem them imply that 
		$$
		\left|\Pi\left\{\|t_k\|\le r_n\mid\mathcal{Q}_n\right\}-C_{k,\star}\int N\left\{t_k;0,\Delta(\theta_{k,\star})^{-1}\right\}\dt t_k\right|=o_p(1),
		$$which implies that $|\Pi\left\{\|t_k\|\le r_n\mid\mathcal{Q}\right\}-C_{k,\star}|=o_p(1).$
		
		Consequently, for $t_k=\sqrt{n}(\theta-\theta_{k,n})$, and $\gamma>0$, Theorem \ref{thm:multi} and the above imply: 
		$$
		\left|\frac{\Pi\left\{\|t_k\|\le \gamma\mid\mathcal{Q}_n\right\}}{\Pi\left\{\|t_k\|\le r_n\mid\mathcal{Q}_n\right\}}-\int_{\|t_k\|\le\gamma}N\{t_k;0,\Delta(\theta_{k,\star})^{-1}\}\dt t_k\right|=o_p(1).
		$$Hence, for $\alpha\in[0,1]$, and each $\gamma_{k,n}(\alpha)$, we have 
		\begin{flalign}
			\gamma_{k,n}(\alpha)&=\inf\left\{\gamma>0:\frac{\Pi\left\{\|t_k\|\le \gamma\mid\mathcal{Q}_n\right\}}{\Pi\left\{\|t_k\|\le r_n\mid\mathcal{Q}_n\right\}}\ge1-\alpha\right\}\nonumber\\&=\underbrace{\inf\left\{\gamma>0:\int_{\|t_k\|\le\gamma}N\{t_k;0,\Delta(\theta_{k,\star})^{-1}\}\dt t_k\ge 1-\alpha\right\}}_{:=\gamma_k(\alpha)}+o_p(1)\label{eq:gamma}	.
		\end{flalign}
		
		Now, note that 
		$$\theta_{k,\star}\in\left\{\|t_k\|\le \gamma_{k,n}(\alpha)\right\}\iff \|\theta_{k,n}-\theta_{k,\star}\|\le\gamma_{k,n}(\alpha).$$ Let, $Z_k\sim N(0,\Delta(\theta_{k,\star})^{-1})$, and applying Lemma \ref{lemma:freq}.(i) and Assumption 1.(iv) at $\theta_{k,\star}$ to deduce that  
		$
		\sqrt{n}(\theta_{k,n}-\theta_{k,\star})\Rightarrow Z_k
		$; since $\gamma_{k,n}(\alpha)=\gamma_k(\alpha)+o_p(1)$ by \eqref{eq:gamma}, we then see that  
		\begin{flalign}
			P_0^{(n)}\left(\theta_{k,\star}\in\{\|t_k\|\le \gamma_{k,n}\}\right)=&P_0^{(n)}\left(\theta_{k,\star}\in\{\|\sqrt{n}(\theta_{k,n}-\theta_{k,\star})\|\le \gamma_{k,n}\}\right)\nonumber\\=&\text{Pr}\left\{\|Z_k\|\le \gamma_k(\alpha)\right\} +o(1) \nonumber\\=&1-\alpha	+o(1)\label{eq:new_gamma}
		\end{flalign}

		Recalling $t_k=\sqrt{n}(\theta-\theta_{k,n})$, and the definition of $\mathcal{C}^{(n)}_\Pi(1-\alpha)=\cup_{k=1}^{K}\left\{\|t_k\|\le \gamma_{k,n}(\alpha/K)\right\}$, we have via the union bound
		\begin{flalign*}
			P_0^{(n)}	\left\{\Theta_\star\subseteq \mathcal{C}^{(n)}_\Pi(1-\alpha)\right\}&=P_0^{(n)}	\left\{\bigcap_{k=1}^{K}\left\{\theta_{k,\star}\in \left\{\|t_k\|\le \gamma_{k,n}(\alpha/K)\right\}\right\}\right\}\\&=1-P_0^{(n)}	\left\{\bigcup_{k=1}^{K}\left\{\theta_{k,\star}\not\in \left\{\|t_k\|\le \gamma_{k,n}(\alpha/K)\right\}\right\}\right\}
			\\&\ge 1-\sum_{k=1}^{K}P_0^{(n)}	\left\{\theta_{k,\star}\not\in \left\{\|t_k\|\le \gamma_{k,n}(\alpha/K)\right\}\right\}.
		\end{flalign*}From \eqref{eq:new_gamma}, $$\limsup_{n\rightarrow+\infty}P_0^{(n)}	\left\{\theta_{k,\star}\not\in \left\{\|t_k\|\le \gamma_{k,n}(\alpha/K)\right\}\right\}\le \alpha/K,$$ which implies that 
		\begin{flalign*}
			\liminf_{n\rightarrow+\infty}	P_0^{(n)}	\left\{\Theta_\star\subseteq \mathcal{C}^{(n)}_\Pi(1-\alpha)\right\}\ge 1-\alpha.
		\end{flalign*}as stated. 
	\end{proof}

	\section{Exponential Family Models in Natural Form}\label{sec:conjugacy}
	The following example shows that if the model is of the exponential family, then with conjugate priors the ACP has a closed-form. 
	
	Suppose $Y_1,\dots,Y_n\stackrel{iid}{\sim}p_\t(y)=\exp\{\eta(\theta)^\top S(y)-A(\t)\}h(y),
	$ where $S:\mathcal{Y}\rightarrow\mathbb{R}^{d_\t}$ is a vector of sufficient statistics,  $h:\mathcal{Y}\rightarrow \mathbb{R}$ a reference measure or density on the sample space $\mathcal{Y}$, and $A:\Theta\rightarrow\mathbb{R}$ the log-partition function (see \citealp{lehmann2006theory}, Section 1.5 for further details). Then, the joint density $p_\t^{(n)}(y)$ takes the form 
	$$
	p_\t^{(n)}(y)=\exp\left\{\eta(\theta)^\top \sum_{i=1}^{n}S(y_i)-nA(\t)\right\}\prod_{i=1}^{n}h(y_i),
	$$where $
	A(\theta)=\log \left[\int \exp \left\{\eta(\theta)^\top S(x)\right\} h(x) \dt \mu(x)\right],
	$ and $\mu(x)$ is the Lebesgue measure.
	
	Conducting inference on the natural parameter $\eta=\eta(\theta)$ is simplified by noting that if a conjugate prior is placed on $\eta$, then its ACP has a closed-form expression. In particular, 
	the above model has average scores $\overline{m}_n(\eta)=\nabla_\eta A(\eta)-Q_n$, where $S_n=n^{-1}\sum_{i=1}^{n}S(y_i)$. Since $A(\eta)$ is non-random and the variance of $\overline{m}_n(\eta)$ can be estimated using the sample variance
	$$
	W_n:=\frac{1}{n}\sum_{i=1}^{n}\left\{S(y_i)-S_n\right\} \left\{S(y_i)-S_n\right\} ^\top,
	$$ which does not depend on $\eta$. More generally, any consistent estimator of $\text{Cov}\{S(y_i)\}$ can be used. One can then consider inference on $\eta$ using $\pi(\eta\mid \mathcal{Q}_n)\propto \exp\{-nQ_n(\eta)\}\pi(\eta)$, where
	$
	Q_n(\eta)=\frac{1}{2}\left\{\nabla_\eta A(\eta)-S_n\right\}^\top W_n^{-1}\{\nabla_\eta A(\eta)-S_n\}.
	$
	
	Define the mean parameter $\mu$ by the function $\mu=g(\eta)=\nabla_\eta A(\eta)$. In regular models the function $g(\eta)$ exists and is invertible for all $\eta$. The parameter $\mu=\mu(\eta)$ is referred to as the mean parameterization of the model and satisfies $\mu=\E_{Y\sim P^{(n)}_\t }[Q_n]$. The form of the ACP for $\eta$ then follows by finding the ACP for $\mu$, and invoking a change of variables. 
	
	\begin{lemma}\label{lem:conjugate}
		Suppose that $\mathcal{Y}=\mathbb{R}^d$. If the (transformed) prior beliefs for the mean parameter $\mu=g(\eta)$ is $\pi(\mu)\propto \exp\{-\frac{1}{2}(\mu-\mu_0)^\top W_0^{-1}(\mu-\mu_0)\}$, then the ACP for $\eta$ is
		$
		\pi(\eta\mid \mathcal{Q}_n)=N\{b_n;g(\eta),\Sigma^{-1}_n\}|\nabla_\eta^2 A(\eta)|
		$, where 
		\begin{flalign*}
			\Sigma^{-1}_n&=n^{-1}W_0\left[n^{-1}W_n+W_0\right]^{-1}W_n,\\ b_n&=W_0\left[n^{-1}W_n+W_0\right]^{-1}S_n+W_0\left[n^{-1}W_n+W_0\right]^{-1}\frac{\mu_0}{n}.	
		\end{flalign*}
	\end{lemma}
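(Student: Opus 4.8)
The plan is to pass to the mean parameterization $\mu = g(\eta) = \nabla_\eta A(\eta)$, where the Q-posterior kernel becomes Gaussian, apply the standard Gaussian--Gaussian conjugacy, and then change variables back to $\eta$. First I would observe that, viewed as a function of $\mu$, the kernel $\exp\{-nQ_n(\eta)\}$ equals $\exp\{-\tfrac{n}{2}(\mu - S_n)^\top W_n^{-1}(\mu - S_n)\}$, which is proportional (with a $\mu$-free constant) to the Gaussian density $N\{\mu; S_n, n^{-1}W_n\}$, i.e.\ a Gaussian in the mean parameter with covariance $n^{-1}W_n$ and precision $nW_n^{-1}$. Since the supplied prior on the mean parameter is $\pi(\mu) \propto N\{\mu; \mu_0, W_0\}$, the induced prior on $\eta$ is $\pi(\eta) = \pi(g(\eta))\,|\nabla_\eta^2 A(\eta)|$ by the change-of-variables formula, using that the Jacobian of $g = \nabla_\eta A$ is the Hessian $\nabla_\eta^2 A$.

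Substituting into $\pi(\eta\mid \mathcal{Q}_n) \propto \exp\{-nQ_n(\eta)\}\pi(\eta)$ gives $\pi(\eta\mid \mathcal{Q}_n) \propto N\{g(\eta); S_n, n^{-1}W_n\}\,N\{g(\eta); \mu_0, W_0\}\,|\nabla_\eta^2 A(\eta)|$. The core step is then the product-of-Gaussians identity in the common argument $\mu = g(\eta)$: the product of the two Gaussian factors is proportional to a single Gaussian $N\{\mu; b_n, \Sigma_n^{-1}\}$ whose precision is the sum of precisions, $\Sigma_n = nW_n^{-1} + W_0^{-1}$, and whose mean is the precision-weighted combination $b_n = \Sigma_n^{-1}(nW_n^{-1}S_n + W_0^{-1}\mu_0)$. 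Reducing the covariance $\Sigma_n^{-1}$ to the stated form requires a push-through (Woodbury-type) identity, namely $(nW_n^{-1} + W_0^{-1})^{-1} = n^{-1}W_0[n^{-1}W_n + W_0]^{-1}W_n$, which I would confirm by inverting the right-hand side and checking it equals $W_0^{-1} + nW_n^{-1}$.

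Finally I would change variables back from $\mu$ to $\eta$. Because $g$ is a diffeomorphism on a regular exponential family, $\int N\{g(\eta); b_n, \Sigma_n^{-1}\}\,|\nabla_\eta^2 A(\eta)|\,\dt\eta = \int N\{\mu; b_n, \Sigma_n^{-1}\}\,\dt\mu = 1$, so the earlier proportionality is in fact an exact equality; the symmetry $N\{g(\eta); b_n, \Sigma_n^{-1}\} = N\{b_n; g(\eta), \Sigma_n^{-1}\}$ then delivers the claimed form. The main obstacle I anticipate is not conceptual but bookkeeping: carefully matching the precision-weighted mean $\Sigma_n^{-1}(nW_n^{-1}S_n + W_0^{-1}\mu_0)$ to the stated two-term expression for $b_n$. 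The data term reduces cleanly to $W_0[n^{-1}W_n + W_0]^{-1}S_n$, but the prior term must be written as $\Sigma_n^{-1}W_0^{-1}\mu_0 = n^{-1}W_0[n^{-1}W_n + W_0]^{-1}W_n W_0^{-1}\mu_0$, and I would double-check this reduction against the scalar case $d_\theta = 1$ (where the prior coefficient is $W_n/(nW_0 + W_n)$) before trusting the matrix manipulation, since this is precisely where a misplaced factor is most likely to slip in.
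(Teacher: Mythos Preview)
Your proposal is correct and follows essentially the same route as the paper: reparameterize to the mean parameter $\mu=g(\eta)$, recognize the Q-posterior kernel as a Gaussian likelihood times a Gaussian prior in $\mu$, apply Gaussian--Gaussian conjugacy (with a Woodbury identity to simplify $\Sigma_n^{-1}$), and then change variables back to $\eta$ via the Jacobian $|\nabla_\eta^2 A(\eta)|$. Your caution about the prior-term bookkeeping is well placed, since the paper's own displayed formula for $b_n$ contains minor typographical inconsistencies (e.g., $S_n$ versus $\bar S_n$ and a missing $W_n$ factor in the second term); the precision-weighted form $b_n=\Sigma_n^{-1}(nW_n^{-1}S_n+W_0^{-1}\mu_0)$ you derive is the reliable anchor.
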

	
	Lemma \ref{lem:conjugate} demonstrates that the ACP for the natural parameters $\eta$
	is Gaussian if the prior for $\mu$ is Gaussian. Interestingly, calculating $|\nabla_\eta^2 A(\eta)|$ can be avoided by first sampling $\widetilde\mu\sim N\{\mu;b_n,\Sigma_n^{-1}\}$, and then (numerically) inverting the equation $\widetilde\mu=g(\eta)$ to obtain the draw $\widetilde\eta$. The latter is feasible when $g(\eta)=\nabla_\eta A(\eta)$ can be reliably calculated. 
	
	%The critical feature of exponential families that allows for the closed form posteriors in Lemma \ref{lem:conjugate}, is the fact that the variance matrix $W_n$ does not depend on $\eta$, and is, thus, not impacted by possible misspecification of the model. Note that, if we were to consider the variance `under the model', we could use the log-partition function; which yields $\Cov\{S(y_i)|\eta\}=\nabla_\eta^2A(\eta)$. Of course, if the model is misspecified, there is no reason to believe that $\Cov\{S(y_i)|\eta\}=\nabla_\eta^2A(\eta)$ will be a consistent estimator of $\Cov\{S(y_i)\}$. Such an issue is precisely the reason behind the poor coverage rates in the simple linear regression example: the exact posterior assumes that $\Cov\{S(y_i)|\eta\}$ is constant, as a function of $\eta$, whereas the ACP does not. Hence, when the model is misspecified, the exact posterior does not adapt to the increased variability brought by the heteroskedastic nature of the variance, while the ACP adapts to the variability in the observed data.  

	\begin{proof}[Proof of Lemma \ref{lem:conjugate}]
		The ACP under the mean parameterization $\mu=g(\eta)=\nabla_\eta A(\eta)$, with prior $\pi(\mu)\propto\exp\{-\frac{1}{2}(\mu-\mu_0)^\top W_0^{-1}(\mu-\mu_0)\}$, where $\mu_0$ and $W_0$ are known hyper-parameters. Define $\bar{S}_n=n^{-1}Q_n$, and note
		$$
		n^{-1}m_n(\eta)=g(\eta)-\bar{S}_n=\mu-\bar{S}_n=n^{-1}m_n(\mu).
		$$Hence, writing $\bar{S}_n=n^{-1}Q_n$, the ACP for $\mu$ is 
		$$
		\pi(\mu\mid \mathcal{Q}_n)\propto \exp\left\{-\frac{n}{2}\left(\mu-\bar{S}_n\right)^\top W_n^{-1}\left(\mu-\bar{S}_n\right)\right\}\exp\left\{-\frac{1}{2}(\mu-\mu_0)^\top W_0^{-1}(\mu-\mu_0)\right\}.
		$$We now show that $\pi(\mu\mid \mathcal{Q}_n)=N\{\mu;b_n,\Sigma_n^{-1}\}$. With some algebra, 
		\begin{flalign*}
			&\exp\left\{-\frac{1}{2}\mu^\top\left[(W_n/n)^{-1}+W_0^{-1}\right]^{}\mu-\mu^\top\left[(W_n/n)^{-1}\bar{S}_n+\mu_0\right]\right\}	
			\\&=\exp\left\{-\frac{1}{2}\mu^\top\left[(W_n/n)^{-1}+W_0^{-1}\right]^{}\mu-\mu^\top\left[(W_n/n)^{-1}+W_0^{-1}\right]\left[(W_n/n)^{-1}+W_0^{-1}\right]^{-1}\left[(W_n/n)^{-1}\bar{S}_n+\mu_0\right]\right\}\\&\propto\exp\bigg{\{}-\frac{1}{2}\left(\mu-\Sigma_n^{-1}\left[(W_n/n)^{-1}\bar{S}_n+\mu_0\right]\right)^\top \Sigma_n \left(\mu-\Sigma_n^{-1}\left[(W_n/n)^{-1}\bar{S}_n+\mu_0\right]\right)\bigg{\}}\\&=\exp\left\{-\frac{1}{2}(\mu-b_n)^\top\Sigma_n(\mu-b_n)\right\}
		\end{flalign*}
		so that
		\begin{flalign*}
			\Sigma_n^{-1}&=\left[(W_n/n)^{-1}+W_0^{-1}\right]^{-1}=W_0\left[n^{-1}W_n+W_0\right]^{-1}W_n\frac{1}{n},
		\end{flalign*}
		where the second equality follows from the Woodbury identity, and 
		\begin{flalign*}
			b_n&=\Sigma_n^{-1}\left[(W_n/n)^{-1}\bar{S}_n+\mu_0\right]\\&=W_0\left[n^{-1}W_n+W_0\right]^{-1}\bar{S}_n+W_0\left[n^{-1}W_n+W_0\right]^{-1}W_n\frac{\mu_0}{n}\mu_0.
		\end{flalign*}
		Hence,  $\pi(\mu\mid \mathcal{Q}_n)=N\{\mu;b_n,\Sigma_n^{-1}\}$. For a regular exponential family, the parameter change from $\mu\mapsto \eta=g^{-1}(\mu)$ exists if the model is identifiable (in $\eta$). A change of variables $\mu\mapsto \eta$ then implies 
		\begin{flalign*}
			\pi(\eta\mid \mathcal{Q}_n)=\pi\{g(\eta)\mid \mathcal{Q}_n\}|\nabla_\eta g(\eta)|=N\{g(\eta);b_n,\Sigma_n^{-1}\}|\nabla_\eta^2 A(\eta)|,
		\end{flalign*}where the second equality follows since $g(\eta)=\nabla_\eta A(\eta)$. 
	\end{proof}

	{% DON'T change the spacing!
		\spacingset{1.25} % DON'T change the spacing!
		\footnotesize
		\bibliographystyle{apalike}
		\bibliography{library}
	}

\end{document}